\newcommand{\refitem}[1] {\textit{\ref{#1}.)}}
\numberwithin{equation}{section}
\g@addto@macro\bfseries{\boldmath}
\newcommand{\NN}{\mathbbm{N}}
\newcommand{\ZZ}{\mathbbm{Z}}
\newcommand{\RR}{\mathbbm{R}}
\newcommand{\CC}{\mathbbm{C}}
\newcommand{\PP}{\mathbbm{P}}
\newcommand{\DD }{\mathbbm{D}}
\newcommand{\Unit}{\mathbbm 1}
\newcommand{\E}{\mathrm{e}}
\newcommand{\I}{\mathrm{i}}
\newcommand{\cc}[1]{\overline{#1}}
\newcommand{\argument}{\,\cdot\,}
\newcommand{\at}[2][]{#1|_{#2}}
\newcommand{\av}{\mathrm{av}}
\newcommand{\group}[1]{\mathrm{#1}}
\newcommand{\Smooth}{\mathcal{C}^\infty}
\newcommand{\Stetig}{\mathcal{C}}
\newcommand{\lie}[1]{\mathfrak{#1}}
\newcommand{\momentmap}{\mathcal{J}}
\newcommand{\racts}{\mathbin{\triangleleft}}
\newcommand{\States}{\mathcal{S}}
\newcommand{\Weyl}{\mathcal{W}}
\newcommand{\algebra}[1]{\mathcal{#1}}
\newcommand{\A}{\algebra{A}}
\newcommand{\B}{\algebra{B}}
\newcommand{\Hermitian}{\mathrm{H}}
\newcommand{\Dom}{\mathcal{D}}
\newcommand{\Adbar}{\mathcal{L}^*}
\newcommand{\D}{\mathrm d}
\newcommand{\integral}[3]{\int_{#1} #2 \,#3}
\newcommand{\reductionMap}[2][]{#1[#2#1]_\mu}
\newcommand{\reductionMapSign}{[\argument]_\mu}
\newcommand{\Levelset}{\mathcal Z}
\newcommand{\Mred}{M_{\mu\mred}}
\newcommand{\MredExt}{\widehat M_{\mu\mred}}
\newcommand{\Schwartz}{\mathscr{S}}
\newcommand{\semialgebraicSet}[3][]{\mathcal Z^+#1(#2,#3#1)}
\newcommand{\punkt}{\,.}
\newcommand{\komma}{\,,}
\newcommand{\Polynomials}{\mathscr{P}}
\newcommand{\acts}{\triangleright}
\newcommand{\id}{\mathrm{id}}
\newcommand{\tr}{\mathrm{tr}}
\newcommand{\Cotangent}{\mathrm{T}^*}
\DeclareFontFamily{U}{FdSymbolF}{}
\DeclareFontShape{U}{FdSymbolF}{m}{n}{
	<-7.1> FdSymbolF-Book
	<7.1-> FdSymbolF-Book
}{}
\DeclareSymbolFont{delimiters}{U}{FdSymbolF}{m}{n}
\DeclareMathDelimiter{\llangle}{\mathopen}{delimiters}{"92}{delimiters}{"92}
\DeclareMathDelimiter{\rrangle}{\mathclose}{delimiters}{"98}{delimiters}{"98}
\DeclarePairedDelimiter{\ordinarySet}{\{}{\}}
\DeclarePairedDelimiter{\ordinaryIP}{\langle}{\rangle}
\DeclarePairedDelimiter{\ordinaryKom}{[}{]}
\newcommand{\set}[3][]{\ordinarySet[#1]{\,#2 \;#1|\; #3\,}}
\newcommand{\poi}[3][]{\ordinarySet[#1]{\,#2\mathbin{,}#3\,}}
\newcommand{\dupr}[3][]{\ordinaryIP[#1]{\,#2 \,,\, #3\,}}
\newcommand{\skal}[3][]{\ordinaryIP[#1]{\,#2 \,#1|\, #3\,}}
\newcommand{\kom}[3][]{\ordinaryKom[#1]{\,#2\,,\,#3\,}}
\newcommand{\genSId}[2][]{#1\langle\!#1\langle\,#2\,#1\rangle\!#1\rangle_{\ast\mathrm{id}}}
\newcommand{\genSAlg}[2][]{#1\langle\!#1\langle\,#2\,#1\rangle\!#1\rangle_{\ast\mathrm{alg}}}
\newcommand{\genQM}[2][]{#1\langle\!#1\langle\,#2\,#1\rangle\!#1\rangle_{\mathrm{qm}}}
\newcommand{\genPO}[2][]{#1\langle\!#1\langle\,#2\,#1\rangle\!#1\rangle_{\mathrm{po}}}
\newcommand{\abs}[2][]{#1|#2#1|}
\newcommand{\seminorm}[3][]{#1|#1|#3#1|#1|_{#2}}
\newcommand{\mred}{\textup{-}\mathrm{red}}
\newtheorem{lemma}{Lemma}[section]
\newtheorem{proposition}[lemma]{Proposition}
\newtheorem{theorem}[lemma]{Theorem}
\newtheorem{corollary}[lemma]{Corollary}
\newtheorem{definition}[lemma]{Definition}
\newtheorem{example}[lemma]{Example}
\newtheorem*{reduction}{Reduction of States:}
\def\thmhead@plain#1#2#3{%
	\thmname{#1}\thmnumber{\@ifnotempty{#1}{ }\@upn{#2}}%
	\thmnote{ {\the\thm@notefont#3}}}
\let\thmhead\thmhead@plain
\theoremstyle{nonumberplain}
\newtheorem{proof}{Proof}
\theoremstyle{empty}
\author{
  \textbf{Philipp Schmitt}%
  \thanks{\href{mailto:schmitt@math.uni-hannover.de}{\texttt{schmitt@math.uni-hannover.de}},
  }
  \\[0.5cm]
  Institut für Analysis\\
  Leibniz Universität Hannover\\
  Welfengarten 1, 30167 Hannover\\
  Germany\\[1cm]
  \textbf{Matthias Schötz}%
  \thanks{\href{mailto:schotz@impan.pl}{\texttt{schotz@impan.pl}},
    former 'Boursier de l'ULB (Université libre de Bruxelles)',
    this work was supported by the Fonds de la Recherche Scientifique (FNRS) and the
    Fonds Wetenschappelijk Onderzoek - Vlaaderen (FWO) under EOS Project n$^\circ$30950721.
  }
  \\[0.5cm]
  Instytut Matematyczny PAN\\
  ul. Śniadeckich 8, 00-656 Warszawa\\
  Poland\\[1cm]
}
\title{Symmetry Reduction of States I}
\date{June 2023}
\begin{document}
	\maketitle
	\begin{abstract}
		We develop a general theory of symmetry reduction of states on (possibly non-commutative)
		$^*$\=/algebras that are equipped with a Poisson bracket and a Hamiltonian action of a commutative Lie algebra $\lie g$.
		The key idea advocated for in this article is that the ``correct'' notion of positivity on a $^*$\=/algebra $\A$ is not
		necessarily the algebraic one, for which positive elements are sums of Hermitian squares $a^*a$ with $a\in \A$,
		but can be a more general one that depends on the example at hand, like pointwise positivity on $^*$\=/algebras of functions
		or positivity in a representation as operators. The notion of states (normalized positive Hermitian linear functionals)
		on $\A$ thus depends on this choice of positivity on $\A$, and the notion of
		positivity on the reduced algebra $\A_{\mu\mred}$ should be such that states on $\A_{\mu\mred}$
		are obtained as reductions of certain states on $\A$.
		We discuss three examples in detail: Reduction of the $^*$\=/algebra of smooth functions on 
		a Poisson manifold $M$, reduction of the Weyl
		algebra with respect to translation symmetry, and reduction of the polynomial algebra with respect
		to a $\group{U}(1)$-action.
  \end{abstract} \ \\[-0.2cm]
{\small
	\textbf{2020 Mathematics Subject Classification:} 46L30, 53D20, 81P16\\[0.3cm]
	\textbf{Keywords:} Symmetry reduction, $^*$-algebras, states, positivity, non-commutativity }
    \newpage
	\tableofcontents
\begin{onehalfspace}
\section{Introduction}
Symmetry reduction, like Marsden--Weinstein reduction of symplectic manifolds or coisotropic reduction of Poisson manifolds, uses a well-behaved action of a symmetry group
to reduce the number of degrees of freedom of the system at hand. 
Roughly speaking, there exist two approaches: 
The geometric approach considers a symplectic or Poisson manifold $M$,
and symmetry reduction amounts to restricting to a levelset $\Levelset_\mu$ of
fixed momentum $\mu$ and dividing out the action of the corresponding symmetry group.
This way, one obtains a reduced symplectic or Poisson manifold $M_{\mu\mred}$.
In general, the ordering of the two steps is important,
but they commute in well-behaved cases.
Dual to the geometric approach is the algebraic approach,
which considers the associated Poisson algebra of functions $\A = \Smooth(M)$, 
usually referred to as the ``algebra of observables'' in physics.
Here, one divides out the vanishing ideal of $\Levelset_\mu$ 
and restricts to the subalgebra of functions that are 
invariant under the action of the symmetry group, 
thus one obtains a reduced algebra $\A_{\mu\mred} \cong \Smooth(M_{\mu\mred})$. 
This algebraic approach has
the advantage that it allows for a non-commutative generalization applicable to quantum physics by 
considering more general
algebras $\A$. See e.g.\ \cite{ortega.ratiu:momentumMapsAndHamiltonianReduction,
	arms.cushman.gotay:universalReductionProcedure,
	marsden.weinstein:reductionOfSymplecticManifoldsWithSymmetry,
	marsden.ratiu:reductionOfPoissonManifolds,
	bordemann.herbig.waldmann:BRSTCohomologyAndPhaseSpaceReduction
}.

The aim of this article is to examine a ``bidual'' version, 
the reduction of states on the algebra of observables.
It seems reasonable to expect that states on the reduced algebra should 
correspond to states on the original algebra in some way. 
This, however, is not trivially fulfilled in a naive approach:
The definition of states requires a notion of positivity on the algebra of observables;
and while there is a canonical notion of positivity on every $^*$\=/algebra by declaring Hermitian 
squares to be positive,
this is not enough to obtain a reasonable theory of reduction of states in general.
An example of this is discussed in Section~\ref{sec:polynomials}.
Ordered $^*$\=/algebras \cite{cimpric:representationTheoremForArchimedeanQuadraticModules, 
	popovych:positivstellensatzAndFlatFunctionals, powers:SelfadjointAlgebrasOfUnboundedOperators2,
	schoetz:equivalenceOrderAlgebraicStructure, schoetz:preprintGelfandNaimarkTheorems, 
	schmuedgen:StrictPositivstellensatzForWeylAlgebra, 
	schmuedgen:StrictPositivstellensatzForEnvelopingAlgebras, 
	schmuedgen:nonCommutativeRealAlgebraicGeometry},
which, from a different point of view, are also discussed in (non-commutative) real algebraic
geometry as ``$^*$\=/algebras equipped with a quadratic module'', offer the required flexibility, 
and allow us to make the correspondence between states on the original and reduced algebra 
precise.

Ordered $^*$\=/algebras are unital associative
$^*$\=/algebras over the field of complex numbers, for which the
real linear subspace $\A_\Hermitian \coloneqq \set{a\in \A}{a=a^*}$ of Hermitian elements is 
endowed with a
partial order fulfilling some compatibilities. 
A state on an ordered $^*$\=/algebra $\A$ is a normalized Hermitian linear functional $\omega 
\colon \A \to \CC$, positive with respect to the order on $\A_\Hermitian$.
Such a state $\omega$ associates to any observable $a \in \A$ 
its ``expectation value'' $\dupr \omega a$. 
The setting of ordered $^*$\=/algebras is general enough to cover a great 
number of examples, especially
the smooth complex-valued functions on a manifold $M$ with the pointwise order, $\A = \Smooth(M)$,
or the adjointable endomorphisms on a pre-Hilbert space~$\Dom$, $\A = \Adbar(\Dom)$.
Examples of states are evaluation functionals at points of $M$ or vector functionals 
$a \mapsto \skal \psi {a( \psi )}$  
corresponding to normalized vectors $\psi \in \Dom$.
Despite their generality, ordered $^*$\=/algebras
still allow for the development of some non-trivial results concerning their structure and representations and therefore might
be seen as a suitable generalization of $C^*$\=/algebras that also comprises unbounded examples.

We develop our theory of reduction in Section~\ref{sec:reduction:general} for ordered 
$^*$\=/algebras $\A$, endowed with a Poisson bracket and a Hamiltonian action of a commutative Lie 
algebra $\lie g$, i.e.\ an action induced by a momentum map $\momentmap \colon \lie g \to \A$.
Denote the $\lie g$-invariant elements of $\A$ by $\A^{\lie g}$.
We define the reduction $\A_{\mu\mred}$ of $\A$ for any ``momentum'' 
$\mu \in \lie g^*$, $\xi \mapsto \dupr \mu \xi$
by a universal property and show that the reduction always exists.
The construction is built in such a way that it behaves well with respect to states
under some minor technical assumptions:

\begin{reduction}
There is a bijective correspondence between states $\omega$ on $\A_{\mu\mred}$
and states $\hat\omega$ on $\mathcal A^{\lie g}$ which satisfy 
$\dupr{\hat\omega}{(\momentmap(\xi)-\dupr \mu \xi \Unit)^2} = 0$. 
Moreover, under the technical assumption of the existence of an averaging operator,
all such states can be obtained by restriction of states on $\A$. 
\end{reduction}
This might be seen as a partial justification of our setting: 
On the one hand, the assumptions we made are sufficiently strong to obtain a reasonable theory of 
symmetry reduction of states. But on the other hand, there are still plenty of examples,
which we discuss in Sections~\ref{sec:PoiMan}--\ref{sec:polynomials},
showing that our assumptions are not too strong.

That the interpretation of the reduction of states as the ``bidual'' of a geometric reduction 
procedure is not just a mere heuristic, 
can best be seen in the example $\mathcal A = \Smooth(M)$ of smooth functions on a Poisson manifold
$M$ with the pointwise order, which we discuss in detail in Section~\ref{sec:PoiMan}.
Indeed, $\Smooth(M)$ constitutes an example of an ordered $^*$\=/algebra
and assigning to every point $x\in M$ its evaluation
functional $\delta_x \colon \Smooth(M) \to \CC$ allows one to identify the manifold $M$
with the unital $^*$\=/homomorphisms $\Smooth(M) \to \CC$, 
which in turn are just the extreme points of the convex set of states on $\Smooth(M)$.
Given any smooth Hamiltonian action of a connected commutative Lie group $G$ on $M$,
the general reduction procedure for ordered $^*$\=/algebras and states, applied to this special 
example of $\Smooth(M)$ and its evaluation functionals,
results in a Poisson algebra $\mathcal{W}^\infty(M_{\mu\mred})$ of functions on a reduced topological space $M_{\mu\mred}$
and the evaluation functionals on $M_{\mu\mred}$.
Under the usual additional regularity assumptions this procedure is equivalent to Marsden--Weinstein reduction.

As a first non-commutative example we discuss the Weyl algebra of canonical commutation relations 
in Section~\ref{sec:weyl}.
In the Schrödinger representation as differential operators on the Schwartz space 
$\Schwartz(\RR^{1+n})$,
the momentum map for the translation symmetry is simply given by the usual momentum operators $-\I 
\frac{\partial}{\partial x_j}$.
It will be shown that the reduction with respect to one of the momentum operators yields the Weyl 
algebra on $\Schwartz(\RR^{n})$ with the operator order.
In this example, there are no states on the Weyl algebra satisfying the above reducibility 
condition (due to the lack of an averaging operator), yet the reduction procedure
still produces the expected result.

More involved non-commutative examples arise in non-formal deformation quantization,
but their detailed study will be postponed to future projects. 
In this article we discuss, as our last example in Section~\ref{sec:polynomials},
the case of the polynomial algebra on $\CC^{1+n}$
with the standard Poisson bracket. By reduction with respect to a $\group{U}(1)$-action one obtains
algebras of polynomials on e.g.~$\CC\PP^n$ or the hyperbolic disc $\DD^n$. This is the classical 
limit of some well-known
non-formal star products, which can also be obtained by symmetry reduction of the Wick star product 
on $\CC^{1+n}$, see e.g.~\cite{beiser.waldmann:FrechetAlgebraicDeformationOfThePoincareDisc, 
bordemann.brischle.emmrich.waldmann:PhaseSpaceReductionForStarProducts.ExplicitConstruction, 
bordemann.brischle.emmrich.waldmann:SubalgebrasWithConvergingStarProducts, 
cahen.gutt.rawnsley:QuantisationOfKaehlerManifolds3, 
esposito.schmitt.waldmann:OnlineComparisonContinuityWickStarProducts, 
kraus.roth.schoetz.waldmann:OnlineConvergentStarProductOnPoincareDisc, 
schmitt.schoetz:PreprintWickRotationsInDQ}.
These examples are especially relevant as the starting point for studying non-formal deformations 
of $^*$\=/algebras.
But these examples also demonstrate that it is in general not sufficient to simply consider 
$^*$\=/algebras with the canonical
algebraic order given by sums of Hermitian squares. In these cases, finding a suitable algebraic characterization of the order on the reduced algebras
is well-known to be a non-trivial problem already in the commutative case, but one that has been solved in great generality with the Positivstellensatz of Krivine and Stengle or similar results \cite{krivine:AnneauxPreordonnes, marshall:extendingArchimedeanPositivstellensatzToTheNonCOmpactCase, stengle:aNullstellensatzAndAPositivstellensatzInSemialgebraicGeometry, schmuedgen:kMomentProblemForCompactSemialgebraicSets}.
This raises the question whether and how similar algebraic characterizations can also be obtained in the non-commutative case, where they would
be especially valuable because the idea of a pointwise order on an easy-to-describe reduced manifold is no longer applicable.
For the star product on $\CC\PP^n$, this problem will be solved in \cite{schmitt.schoetz:SymmetryReductionOfStatesII}.

\section{Notation and Preliminaries} \label{sec:preliminaries}
The notation essentially follows \cite{schoetz:equivalenceOrderAlgebraicStructure}. See also \cite{schmuedgen:invitationToStarAlgebras}
for an introduction to $^*$\=/algebras and quadratic modules on them (but be aware of some differences in notation).

The natural numbers are denoted by $\NN \coloneqq \{1,2,3,\dots\}$ and $\NN_0 \coloneqq \{0\} \cup \NN$, and the fields of
real and complex numbers are $\RR$ and $\CC$, respectively. A \emph{quasi-order} on a set $X$ is a reflexive and transitive
relation $\lesssim$ on $X$. Given two sets $X$ and $Y$, both equipped with a quasi-order $\lesssim$,
and a map $\Phi \colon X \to Y$, then $\Phi$ is said to be \emph{increasing} (or \emph{decreasing}) if $\Phi(x) \lesssim \Phi(x')$
(or $\Phi(x) \gtrsim \Phi(x')$) holds for all $x,x'\in X$ with $x\lesssim x'$. If $\Phi$ is injective and increasing,
and if additionally $x\lesssim x'$ holds for all $x,x' \in X$ for which $\Phi(x) \lesssim \Phi(x')$, then $\Phi$ is called an \emph{order embedding}.

\subsection{Ordered \texorpdfstring{$^*$-Algebras}{*-Algebras}} \label{sec:preliminaries:orderedsalg}
A \emph{$^*$\=/algebra} $\A$ is a unital associative $\CC$-algebra 
endowed with an antilinear involution $\argument^* \colon \A \to \A$ such that $(ab)^* = b^* a^*$ holds
for all $a,b\in \A$. Its unit will be denoted by $\Unit$, or, more explicitly, $\Unit_\A$. It is not required that $\Unit \neq 0$,
which means that $\{0\}$ is a $^*$\=/algebra.
An element $a\in \A$ is called \emph{Hermitian} if $a=a^*$, and $\A_\Hermitian \coloneqq \set{a\in \A}{a=a^*}$
clearly is a real linear subspace of $\A$. A \emph{quadratic module} on a $^*$\=/algebra $\A$ is a subset $\mathcal{Q}$
of $\A_\Hermitian$ that fulfils
\begin{equation}
  a + b \in \mathcal{Q}
  \komma\quad\quad
  d^*a\,d \in \mathcal{Q}\komma
  \quad\quad\text{and}\quad\quad
  \Unit \in \mathcal{Q}
  \label{eq:axiomsQM}
\end{equation}
for all $a,b\in \mathcal{Q}$ and all $d\in \A$. Similarly, a \emph{quasi-ordered $^*$\=/algebra}
is a $^*$\=/algebra $\A$ endowed with a reflexive and transitive relation $\lesssim$ on $\A_\Hermitian$ that additionally fulfils the conditions
\begin{equation}
  a+c \lesssim b+c
  \komma\quad\quad
  d^*a\,d \lesssim d^*b\,d\komma
  \quad\quad\text{and}\quad\quad
  0 \lesssim \Unit
  \label{eq:axioms}
\end{equation}
for all $a,b,c\in \A_\Hermitian$ with $a\lesssim b$ and all $d\in\A$. We simply refer to the 
relation $\lesssim$ as the \emph{order} on $\A$.
An element $a\in\A_\Hermitian$ is called \emph{positive} if
$0 \lesssim a$, and the set of all positive Hermitian elements of $\A$ will be denoted by $\A_\Hermitian^+ \coloneqq \set{a\in \A_\Hermitian}{0 \lesssim a}$.
It is easy to check that $\A_\Hermitian^+$ is a quadratic module on $\A$. Conversely, any quadratic module $\mathcal{Q}$
on any $^*$\=/algebra $\A$ allows one to define a relation $\lesssim$ on $\A_\Hermitian$ for $a,b\in \A_\Hermitian$ as $a\lesssim b$ if and only if $b-a\in \mathcal{Q}$,
and then $\A$ with $\lesssim$ is a quasi-ordered $^*$\=/algebra for which $\A^+_\Hermitian = \mathcal{Q}$.
An \emph{ordered $^*$\=/algebra} is a quasi-ordered $^*$\=/algebra whose order $\lesssim$ is also 
antisymmetric,
hence a partial order. Equivalently, a quasi-ordered $^*$\=/algebra $\A$ is an ordered $^*$\=/algebra if and only if $(-\A^+_\Hermitian) \cap \A^+_\Hermitian = \{0\}$.
In the case of ordered $^*$\=/algebras, the order relation is usually written as $\le$.

Ordered $^*$\=/algebras, or quadratic modules on $^*$\=/algebras, occur e.g.~in the literature on representations of $^*$\=/algebras,
sometimes as ``$^*$\=/algebras equipped with an admissible cone'' as in \cite{powers:SelfadjointAlgebrasOfUnboundedOperators2}.
They can be seen as abstractions of the $^*$\=/algebras of (possibly unbounded) adjointable endomorphisms on a pre-Hilbert
space, similar to the way $C^*$\=/algebras are abstractions of bounded operators on a Hilbert space: In 
\cite{cimpric:representationTheoremForArchimedeanQuadraticModules} it is shown that the order gives rise to a $C^*$\=/seminorm on bounded elements,
and in sufficiently well-behaved cases one can generalize constructions or representation theorems from $C^*$\=/algebras to ordered $^*$\=/algebras,
see e.g. \cite{schoetz:preprintGelfandNaimarkTheorems, schoetz:equivalenceOrderAlgebraicStructure}. 
Quadratic modules are also studied in real algebraic geometry, especially in the 
commutative case where they
describe properties of the cone of sums of squares of real polynomials. However, some ideas of real algebraic geometry can also be adapted
to the non-commutative case, see e.g.~\cite{schmuedgen:nonCommutativeRealAlgebraicGeometry} for an overview.

A linear map $\Phi \colon \A \to \B$ between two $^*$\=/algebras $\A$ and $\B$ is called \emph{Hermitian}
if $\Phi(a^*) = \Phi(a)^*$ for all $a\in \A$, or equivalently if $\Phi(a) \in \B_\Hermitian$ for all $a\in \A_\Hermitian$.
A \emph{unital $^*$\=/homomorphism} is such a Hermitian linear map $\Phi \colon \A \to \B$ that additionally fulfils
$\Phi(\Unit_\A) = \Unit_\B$ and $\Phi(aa') = \Phi(a) \Phi(a')$ for all $a,a'\in \A$. If both $\A$ and $\B$ are quasi-ordered $^*$\=/algebras,
then a Hermitian linear map $\Phi \colon \A \to \B$ is said to be \emph{positive} if its restriction
to an $\RR$-linear map from $\A_\Hermitian$ to $\B_\Hermitian$ is increasing, or equivalently if $\Phi(a) \in \B^+_\Hermitian$
for all $a\in \A^+_\Hermitian$. Similarly, $\Phi$ is said to be an \emph{order embedding} if its 
restriction to Hermitian elements
is an order embedding. Especially if $\B=\CC$ (with the usual order on $\CC_\Hermitian = \RR$), then
we write $\A^*$ for the dual space of $\A$ whose elements are \emph{linear functionals} $\omega \colon \A \to \CC$,
and use the bilinear dual pairing $\dupr{\argument}{\argument} \colon \A^* \times \A \to \CC$, $(\omega,a) \mapsto \dupr{\omega}{a}$
to denote the evaluation of a linear functional $\omega \in \A^*$ on an algebra element $a \in \A$.
Similarly, we write $\A^{*}_\Hermitian$ for the real linear subspace of Hermitian linear functionals on $\A$
and $\A^{*,+}_\Hermitian$ for the convex cone of positive Hermitian linear functionals therein.
Note that clearly $\lambda \omega + \mu \rho \in \A^{*,+}_\Hermitian$ for all $\omega,\rho \in \A^{*,+}_\Hermitian$
and all $\lambda,\mu \in {[0,\infty[}$, and that the Cauchy--Schwarz inequality for the positive Hermitian sesquilinear form
$\A \times \A \ni (a,b) \mapsto \dupr{\omega}{a^*b} \in \CC$ for $\omega \in \A^{*,+}_\Hermitian$ shows that
\begin{equation}
  \abs[\big]{\dupr{\omega}{a^*b}}^2 \le \dupr{\omega}{a^*a}\dupr{\omega}{b^*b}
  \label{eq:CS}
\end{equation}
for all $a,b\in \A$. A \emph{state} on $\A$ is a positive Hermitian
linear functional that fulfils the normalization $\dupr{\omega}{\Unit} = 1$ and the set of states on $\A$ will
be denoted by $\States(\A)$. Setting $a \coloneqq \Unit$ in \eqref{eq:CS} shows that $\dupr{\omega}{\Unit} = 0$
implies $\omega = 0$, so $(-\A^{*,+}_\Hermitian) \cap \A^{*,+}_\Hermitian = \{0\}$ and every non-zero positive Hermitian
linear functional can be rescaled to a state. This allows one to reformulate most statements for 
positive Hermitian linear functionals
to equivalent statements for states.

If $\A$ is a quasi-ordered $^*$\=/algebra, then any \emph{unital $^*$\=/subalgebra} $S$ of $\A$, i.e.\ a linear subspace
$S\subseteq \A$ with $\Unit \in S$ which is stable under $\argument^*$ and closed under multiplication, is again a $^*$\=/algebra and
becomes a quasi-ordered $^*$\=/algebra with the restriction of the order of $\A$. We will always endow unital $^*$\=/subalgebras
with this restricted order. Similarly, if $\mathcal{I}$ is a $^*$\=/ideal of $\A$, i.e.~a linear subspace 
$\mathcal{I}\subseteq \A$ which is stable under $\argument^*$ and which fulfils $a b \in \mathcal{I}$ for all $a\in \A$, $b\in \mathcal{I}$,
then the quotient vector space $\A / \mathcal{I}$ becomes a $^*$\=/algebra in a unique way by 
demanding that the canonical
projection $[\argument] \colon \A \to \A/\mathcal{I}$ be a unital $^*$\=/homomorphism. 
This quotient $^*$\=/algebra 
even becomes a quasi-ordered $^*$\=/algebra with the order whose quadratic module of positive 
elements is
$\set{[a]}{a\in \A^+_\Hermitian}$; this order will be called the \emph{quotient order}. This way, $[\argument] \colon \A \to \A/\mathcal{I}$ becomes a positive unital $^*$\=/homomorphism,
and it is easy to check that the usual universal property of quotients is fulfilled: Whenever $\Phi \colon \A \to \B$ is a positive unital $^*$\=/homomorphism
(or, more generally, positive Hermitian linear map) to any quasi-ordered $^*$\=/algebra $\B$ such that 
$\mathcal{I} \subseteq \ker \Phi \coloneqq \Phi^{-1}(\{0\})$,
then there exists a unique positive unital $^*$\=/homomorphism (or positive Hermitian linear map) $\phi \colon \A / \mathcal{I} \to \B$
that fulfils $\Phi = \phi \circ [\argument]$.

\subsection{Constructing Quadratic Modules} \label{subsec:quadraticModules}
There are two canonical classes of examples of ordered $^*$\=/algebras, namely \emph{ordered $^*$\=/algebras of functions},
which are unital $^*$\=/subalgebras of the ordered $^*$\=/algebra $\CC^X$ of all complex-valued functions on a set $X$
with the pointwise operations and pointwise order, and \emph{$O^*$\=/algebras}, which are
unital $^*$\=/subalgebras of the ordered $^*$\=/algebra $\Adbar(\Dom)$ of all adjointable endomorphisms $a \colon \Dom \to \Dom$
on a pre-Hilbert space $\Dom$ with inner product $\skal{\argument}{\argument}$ (antilinear in the first, linear in the second argument)
with the standard operator order.
Here \emph{adjointable} is to be understood in the algebraic sense, i.e.~$a \colon \Dom \to \Dom$ is adjointable
if there exists a (necessarily unique and linear) $a^* \colon \Dom \to \Dom$ such that $\skal{\phi}{a(\psi)} = \skal{a^*(\phi)}{\psi}$
holds for all $\phi,\psi \in \Dom$. The operator order for $a,b\in \Adbar(\Dom)_\Hermitian$ is 
determined by $a\le b$ if and only if
$\skal{\psi}{a(\psi)} \le \skal{\psi}{b(\psi)}$ for all $\psi \in \Dom$.

There are essentially two possibilities to endow a $^*$\=/algebra with a suitable order:
Either by demanding that certain Hermitian elements should be positive, or by demanding that 
certain Hermitian linear functionals should be positive.
More precisely, given a $^*$\=/algebra $\A$ and any subset $S$ of $\A_\Hermitian$, then
\begin{equation}
  \genQM{S} \coloneqq \set[\Big]{\sum\nolimits_{m=1}^M a_m^* s_m a_m }{M\in \NN_0;\, a_1,\dots,a_M \in \A;\, s_1, \dots,s_M \in S \cup \{\Unit\} }
  \label{eq:genQM}
\end{equation}
is the \emph{quadratic module generated by $S$}, which clearly is the smallest (with respect to 
$\subseteq$) quadratic module on $\A$ that contains $S$.
As a special case, let
\begin{equation} \label{eq:algebraicOrder}
  \A^{++}_\Hermitian \coloneqq \genQM{\emptyset}
\end{equation}
be the quadratic module generated by the empty set, hence the smallest quadratic module on $\A$. Its elements are, by construction, the sums of \emph{Hermitian squares}
$a^*a$ with $a\in \A$, and will be referred to as \emph{algebraically positive elements}. The resulting \emph{algebraic order} on $\A$
gives a canonical, non-trivial way to turn any $^*$\=/algebra into a quasi-ordered $^*$\=/algebra. However, in many examples this is not
the ``correct'' one (the meaning of which, of course, depends on the context). A Hermitian linear functional on a $^*$\=/algebra $\A$
which is positive with respect to this algebraic order will be called \emph{algebraically positive},
and an \emph{algebraic state} therefore is a normalized algebraically positive Hermitian linear functional.
For example, the usual order on the Hermitian elements of a $C^*$\=/algebra
can be described as the one whose positive elements are those with spectrum contained in ${[0,\infty[}$, or equivalently as the one
whose positive elements are precisely the algebraically positive ones.

In the commutative case, quadratic modules that are closed under multiplication are especially interesting (and referred to as ``preordering'').
Thus for a commutative $^*$\=/algebra $\A$ and any subset $S$ of $\A_\Hermitian$, the \emph{preordering generated by $S$} is
\begin{align}
  \genPO{S} \coloneqq \genQM[\Big]{ \set[\Big]{ \prod\nolimits_{m=1}^M s_m }{M\in \NN;\,s_1,\dots,s_M \in S} }
  \,,
  \label{eq:genPO}
\end{align}
which is the smallest (with respect to $\subseteq$) quadratic module on $\A$ that is closed under multiplication and contains $S$.

Moreover, quasi-ordered $^*$\=/algebras can also be constructed by demanding that certain algebraic 
states be positive:
Let $\A$ be a $^*$\=/algebra and let $\argument \acts \argument \colon \A \times \A^* \to \A^*$ be the left action of the multiplicative monoid of $\A$ 
on $\A^*$ by conjugation, i.e.~$\dupr{a\acts \omega}{b} \coloneqq \dupr{\omega}{a^*b\,a}$ for all $a,b\in \A$ and all $\omega \in \A^*$.
A set of algebraically positive Hermitian linear functionals on a $^*$\=/algebra $\A$ that is stable under this action gives rise to an order on $\A_\Hermitian$:

\begin{proposition} \label{proposition:inducedOrder} 
  Let $\A$ be a $^*$\=/algebra and $S$ a set of algebraic states on $\A$ such that 
  $a \acts \omega \in S$ holds for all $\omega \in S$ and $a\in \A$ with $\dupr{\omega}{a^*a} = 1$. Then
  \begin{align}
    \mathcal{Q} \coloneqq \set[\big]{a\in \A_\Hermitian}{\dupr{\omega}{a} \ge 0\textup{ for all }\omega \in S}
    \label{eq:inducedOrder:quadraticmodule}
  \end{align}
  is a quadratic module on $\A$, so $\A$ can be turned into a quasi-ordered $^*$\=/algebra with $\A_\Hermitian^+ = \mathcal{Q}$. Similarly,
  \begin{align}
    \mathcal{I} \coloneqq \set[\big]{a\in \A}{\dupr{\omega}{a} = 0\textup{ for all }\omega \in S}
    \label{eq:inducedOrder:ideal}
  \end{align}
  is a $^*$\=/ideal of $\A$ and the quotient $^*$\=/algebra $\A / \mathcal{I}$ with the quotient order is an ordered $^*$\=/algebra.
  Moreover, for every $\omega \in S$ there exists a unique state $\check{\omega}$ on $\A / \mathcal{I}$
  fulfilling $\check{\omega}\circ [\argument] = \omega$ with $[\argument] \colon \A \to \A / \mathcal{I}$ the canonical projection onto the quotient,
  and
  \begin{align}
    (\A / \mathcal{I})^+_\Hermitian
    &=
    \set[\big]{[a] \in (\A / \mathcal{I})_\Hermitian }{ \dupr{\check{\omega}}{[a]} \ge 0 \textup{ for all }\omega \in S}
    \label{eq:inducedOrder:quotient}
    \punkt
  \end{align}
\end{proposition}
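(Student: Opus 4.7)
My plan is to verify the four claims in the stated order; the main obstacle will be proving that $\mathcal{I}$ is a two-sided $^*$\=/ideal. The running tool throughout is the Cauchy--Schwarz inequality \eqref{eq:CS} combined with the following case split for $\omega \in S$ and $a \in \A$: either $\dupr{\omega}{a^*a} = 0$, in which case Cauchy--Schwarz forces various expectation values to vanish, or $\dupr{\omega}{a^*a} > 0$, in which case the rescaled element $a' \coloneqq a / \sqrt{\dupr{\omega}{a^*a}}$ satisfies $\dupr{\omega}{a'^*a'} = 1$, so $a' \acts \omega \in S$ by the standing hypothesis.

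That $\mathcal{Q}$ is a quadratic module is straightforward: $a + b \in \mathcal{Q}$ and $\Unit \in \mathcal{Q}$ are immediate (each $\omega \in S$ is linear and normalized), and for $d^* a d \in \mathcal{Q}$ with $a \in \mathcal{Q}$, $d \in \A$ the case split yields either $\abs{\dupr{\omega}{d^* a d}}^2 \le \dupr{\omega}{d^*d}\,\dupr{\omega}{(ad)^*(ad)} = 0$ or $\dupr{d' \acts \omega}{a} = \dupr{\omega}{d^* a d} / \dupr{\omega}{d^*d} \ge 0$ by positivity of $a$ tested against $d' \acts \omega \in S$.

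For the main obstacle, $\argument^*$\=/stability of $\mathcal{I}$ follows from Hermiticity of each $\omega \in S$. The key intermediate step for the ideal property is the identity $\dupr{\omega}{a^* b a} = 0$ for all $a \in \A$, $b \in \mathcal{I}$, $\omega \in S$: this holds by Cauchy--Schwarz in the case $\dupr{\omega}{a^*a} = 0$, and by $\dupr{a' \acts \omega}{b} = 0$ (since $a' \acts \omega \in S$ and $b \in \mathcal{I}$) in the other case. A polarization argument, computing $\dupr{\omega}{(a \pm c)^* b (a \pm c)}$ and $\dupr{\omega}{(a \pm \I c)^* b (a \pm \I c)}$ and taking suitable linear combinations, upgrades this to $\dupr{\omega}{a^* b c} = 0$ for all $a, c \in \A$. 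Specializing $a = \Unit$ then gives $\mathcal{I} \cdot \A \subseteq \mathcal{I}$, and $c = \Unit$ gives $\A \cdot \mathcal{I} \subseteq \mathcal{I}$.

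The remaining claims follow with little further work. For antisymmetry of the quotient order: if $[a]$ and $-[a]$ both lie in $(\A/\mathcal{I})^+_\Hermitian$, take a representative $a \in \mathcal{Q}$ and write $-a = b + e$ with $b \in \mathcal{Q}$ and $e \in \mathcal{I}$; every $\omega \in S$ then satisfies $\dupr{\omega}{a} \ge 0$ and $\dupr{\omega}{a} = -\dupr{\omega}{b} \le 0$, so $a \in \mathcal{I}$ and $[a] = 0$. The state $\check{\omega}$ satisfying $\check{\omega} \circ [\argument] = \omega$ is obtained from the universal property of quotients since $\mathcal{I} \subseteq \ker \omega$, and it inherits normalization, Hermiticity, and positivity (with respect to the quotient order) directly from $\omega$. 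Finally, for \eqref{eq:inducedOrder:quotient}, given a Hermitian class $[a]$ with $\dupr{\check{\omega}}{[a]} \ge 0$ for all $\omega \in S$, I would replace $a$ by $(a + a^*)/2$ (still representing $[a]$, since $a - a^* \in \mathcal{I}$ by Hermiticity of $[a]$) to read off $\dupr{\omega}{a} \ge 0$ for every $\omega \in S$, i.e.\ $a \in \mathcal{Q}$, so $[a] \in (\A/\mathcal{I})^+_\Hermitian$; the reverse inclusion is built into the definition of the quotient order.
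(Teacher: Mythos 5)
Your proposal is correct and follows essentially the same route as the paper: the same case split (Cauchy--Schwarz when $\dupr{\omega}{a^*a}=0$, rescaling and the stability hypothesis $a\acts\omega\in S$ otherwise) for the quadratic-module and ideal properties, a polarization identity to pass from $\dupr{\omega}{a^*ba}=0$ to the two-sided ideal property (the paper packages this as $\dupr{\omega}{ab}=\frac{1}{4}\sum_{k=0}^3\I^k\dupr{(b+\I^k\Unit)\acts\omega}{a}$ and invokes $^*$\=/stability for the other side), and the same Hermitian-representative argument $(a+a^*)/2$ for \eqref{eq:inducedOrder:quotient}. No gaps.
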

\begin{proof}
  Note that for $\omega \in S$ and $a\in \A$ one either has $\dupr{\omega}{a^*a} > 0$, hence 
  $\dupr{\omega}{a^*a}^{-1} (a\acts \omega) = (\dupr{\omega}{a^*a}^{-1/2} a) \acts \omega \in S$, or $\dupr{\omega}{a^*a} = 0$,
  in which case $\dupr{a\acts \omega}{\Unit} = 0$ and therefore $a\acts \omega = 0$ as a consequence of the Cauchy--Schwarz inequality.
  It is now straightforward to check that $\mathcal{Q}$ is a quadratic module. It is also clear that
  $\mathcal{I}$ is a linear subspace of $\A$ and stable under $\argument^*$,
  and $\mathcal{I}$ is a right ideal, hence a $^*$\=/ideal, because for all $a\in \mathcal{I}$, $b\in \A$ and $\omega \in S$ one has
  $\dupr{\omega}{a b} = \frac{1}{4} \sum_{k=0}^3 \I^{k} \dupr{(b+\I^k\Unit) \acts \omega}{a} = 0$.
  
  The quotient order on $\A/\mathcal{I}$ is even a partial order:
  Given $[a] \in (\A / \mathcal{I})_\Hermitian$ with $[0] \le [a] \le [0]$, then there exist representatives
  $\hat{a}_1,\hat{a}_2 \in [a] \cap \A_\Hermitian$ such that $0 \lesssim \hat{a}_1$ and $\hat{a}_2 
  \lesssim 0$, so
  $0 \le \dupr{\omega}{\hat{a}_1} = \dupr{\omega}{\hat{a}_2} \le 0$ for all $\omega \in S$ because $\hat{a}_1 - \hat{a}_2 \in \mathcal{I}$,
  which shows that $[a] = \mathcal{I} = [0]$. Moreover, essentially by definition of $\mathcal{I}$ and the quotient order,
  every $\omega \in S$ descends to a unique state $\check{\omega} \in \States(\A/\mathcal{I})$ fulfilling
  $\check{\omega}\circ [\argument] = \omega$.
  
  The inclusion ``$\subseteq$'' in \eqref{eq:inducedOrder:quotient} follows from the definitions of the quadratic module $\mathcal{Q}$
  and the quotient order.
  Conversely, let $[a] \in (\A / \mathcal{I})_\Hermitian$ be given such that $\dupr{\check{\omega}}{[a]} \ge 0$ for all $\omega \in S$,
  and choose any Hermitian representative $\hat{a} \in [a] \cap \A_\Hermitian$ (for example, take the real part $\hat{a} \coloneqq (\tilde{a}^*+\tilde{a})/2$
  of an arbitrary representative $\tilde{a} \in [a]$). Then $\dupr{\omega}{\hat{a}} = \dupr{\check{\omega}}{[a]} \ge 0$ for all $\omega \in S$
  shows that $\hat{a} \in \mathcal{Q} = \A^+_\Hermitian$, so $[a] \in (\A / \mathcal{I})_\Hermitian^+$.
\end{proof}
The order on $\A$ that was constructed in the above Proposition~\ref{proposition:inducedOrder}, 
\eqref{eq:inducedOrder:quadraticmodule}
will be called the one \emph{induced by $S$}.
If $\A$ is a quasi-ordered $^*$\=/algebra, then we especially say that \emph{its order is induced 
by its states}
if the given order on $\A$ is the one induced by $\States(\A)$, or equivalently, if for all $a\in 
\A_\Hermitian \setminus \A_\Hermitian^+$
there exists $\omega \in \States(\A)$ such that $\dupr{\omega}{a} < 0$. It is a standard 
consequence of the Hahn--Banach theorem that this
is the case if and only if $\A^+_\Hermitian$ is closed in some locally convex topology on $\A_\Hermitian$, e.g.~the strongest one.
Identity \eqref{eq:inducedOrder:quotient} just means that the order on the quotient $\A / \mathcal{I}$ in Proposition~\ref{proposition:inducedOrder} 
is induced by its states.

Ordered $^*$\=/algebras $\A$ whose order is induced by their states have some desirable properties.
For example, if $a\in \A$ fulfils $\dupr{\omega}{a} = 0$ for all $\omega \in \States(\A)$, then also $\dupr{\omega}{a+a^*} = 0$
and $\dupr{\omega}{\I(a-a^*)} = 0$ for all $\omega \in \States(\A)$, which implies that $0 \le a+a^* \le 0$ and $0 \le \I(a-a^*) \le 0$,
hence $a = 0$. Similarly:
\begin{proposition} \label{proposition:positivitycriterium}
  Let $\Phi \colon \A \to \B$ be a unital $^*$\=/homomorphism between two quasi-ordered $^*$\=/algebras $\A$ and $\B$,
  and assume that the order on $\B$ is induced by its states. Then $\Phi$ is
  positive if and only if $\rho \circ \Phi \in \States(\A)$ holds for all $\rho \in \States(\B)$.
\end{proposition}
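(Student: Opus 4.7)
The plan is to prove each implication separately; both reduce to unpacking the definitions of ``state'' and ``order induced by states,'' and the substance lies entirely in the Hahn--Banach characterization recalled in the paragraph preceding the proposition.

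For the forward implication, I will assume that $\Phi$ is positive, fix an arbitrary $\rho \in \States(\B)$, and check the three defining properties of a state on $\A$ for $\rho \circ \Phi$. Linearity is immediate from linearity of $\Phi$ and $\rho$. Hermiticity follows because $\Phi$ maps $\A_\Hermitian$ into $\B_\Hermitian$ (it is Hermitian as a $^*$\=/homomorphism) and $\rho$ restricts to a real-valued functional on $\B_\Hermitian$. Normalization is immediate from $\Phi(\Unit_\A) = \Unit_\B$ together with $\dupr{\rho}{\Unit_\B} = 1$. Finally, for any $a\in \A_\Hermitian^+$ positivity of $\Phi$ gives $\Phi(a) \in \B_\Hermitian^+$, so $\dupr{\rho\circ\Phi}{a} = \dupr{\rho}{\Phi(a)} \ge 0$. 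Note that this direction does not use the assumption that the order on $\B$ is induced by its states.

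For the reverse implication, I will take an arbitrary $a \in \A_\Hermitian^+$ and prove $\Phi(a) \in \B_\Hermitian^+$. Here is where the hypothesis on $\B$ enters: by the characterization of ``order induced by states'' reminded above the proposition, it is enough to show that $\dupr{\rho}{\Phi(a)} \ge 0$ for every $\rho \in \States(\B)$. But rewriting $\dupr{\rho}{\Phi(a)} = \dupr{\rho\circ\Phi}{a}$ and invoking the standing assumption that $\rho\circ\Phi \in \States(\A)$ makes this nonnegativity immediate, since $a$ is positive and states are positive by definition.

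I do not anticipate any real obstacle; the only nontrivial input is the Hahn--Banach characterization of orders induced by states, and this proposition essentially packages that characterization into a convenient functorial form. It is perhaps worth remarking in the write-up that the hypothesis on $\B$ is used only in the reverse direction and is genuinely needed there, as otherwise $\States(\B)$ could in principle be too small to detect positivity of $\Phi(a)$.
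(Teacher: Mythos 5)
Your proof is correct and follows essentially the same route as the paper's: the forward direction is a direct verification that $\rho\circ\Phi$ is a state, and the reverse direction tests $\Phi(a)$ against all states of $\B$ and invokes the defining property of an order induced by its states. (One minor remark: the input you need is just the definition of ``order induced by its states''; the Hahn--Banach theorem is only relevant to the separate question of \emph{when} an order has this property, not to this proposition.)
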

\begin{proof}
  If $\Phi$ is positive and $\rho$ a state on $\B$, then $\rho \circ \Phi$ is again positive, hence a state on $\A$.
  Conversely, if $\rho \circ \Phi \in \States(\A)$ holds for all $\rho \in \States(\B)$, then $\Phi$ is positive
  because $\dupr{\rho}{\Phi(a)} = \dupr{\rho \circ \Phi}{a} \ge 0$ shows that $\Phi(a) \in \B^+_\Hermitian$ for every $a\in \A^+_\Hermitian$.
\end{proof}
For example, the pointwise order on an ordered $^*$\=/algebra $\A \subseteq \CC^X$ of functions on a set $X$ is the one induced by
the set $\set{\delta_x}{x\in X}$ of \emph{evaluation functionals} $\delta_x \colon \A \to \CC$, $a\mapsto \dupr{\delta_x}{a} \coloneqq a(x)$.
Similarly, the operator order on an $O^*$\=/algebra $\A \subseteq \Adbar(\Dom)$ on a pre-Hilbert space $\Dom$ is the one induced by
the set $\set{\chi_\psi}{\psi \in \Dom \text{ with $\seminorm{}{\psi} = 1$}}$ of \emph{vector 
functionals} 
$\chi_\psi \colon \A \to \CC$, $a\mapsto \dupr{\chi_\psi}{a} \coloneqq \skal{\psi}{a(\psi)}$.

Relating quadratic modules that are constructed ``analytically'' as in \eqref{eq:inducedOrder:quadraticmodule} to suitable ``algebraically'' constructed quadratic modules
as in \eqref{eq:genQM} or \eqref{eq:genPO} is a typical problem of (possibly non-commutative) real algebraic geometry.
The most famous results are Artin's solution of Hilbert's 17th problem and the Positivstellensatz of Krivine and Stengle 
\cite{krivine:AnneauxPreordonnes, stengle:aNullstellensatzAndAPositivstellensatzInSemialgebraicGeometry}
that give an algebraic description of the pointwise order on polynomial algebras.

\subsection{Eigenstates}

Especially for those ordered $^*$\=/algebras that occur as algebras of observables in physics, the 
notion of states (which describe
the actual state of a physical system) is of major importance, generalizing the concept of vector states on $O^*$\=/algebras.
There is also an abstraction of the idea of vector states constructed out of eigenvectors of an operator:

\begin{definition} \label{definition:eigenstates}
  Let $\A$ be a quasi-ordered $^*$\=/algebra and $a\in \A$. An \emph{eigenstate of $a$} is a state $\omega$ on $\A$ that fulfils 
  $\dupr{\omega}{a^*a} = \abs{\dupr{\omega}{a}}^2$,
  and the complex number $\dupr{\omega}{a}$ then is called the \emph{eigenvalue of $\omega$ on $a$}.
  The set of all eigenstates of $a$ with eigenvalue $\mu \in \CC$ will be denoted by $\States_{a,\mu}(\A)$.
\end{definition}
It can also happen that one state is an eigenstate of several elements of $\A$, in which case we 
call it a \emph{common eigenstate} of these elements.
The notion of eigenstates occurs once in a while in the literature on $C^*$\=/algebras 
\cite{riedel:almostMathieuOperators:eigenstates, paschke:pureEigenstatesForTheSumOfGenOfTheFreeGroup, rinehart:eigenstatesOfCStarAlgebras},
but their basic properties are fulfilled in greater generality:

\begin{proposition} \label{proposition:eigenstates}
  Let $\A$ be a quasi-ordered $^*$\=/algebra, $a\in \A$, and $\omega \in \States(\A)$. Then the
  following are equivalent:
  \begin{enumerate}
    \item There exists a complex number $\mu$ such that $\dupr{\omega}{(a-\mu \Unit)^*(a-\mu \Unit)} = 0$.
      \label{item:eigenstates:eigenvalue}
    \item The identities $\dupr{\omega}{a^*b} = \cc{\dupr{\omega}{a}} \dupr{\omega}{b}$ and $\dupr{\omega}{b^*a} = \cc{\dupr{\omega}{b}} \dupr{\omega}{a}$ hold for all $b\in \A$.
      \label{item:eigenstates:mult}
    \item The identity $\dupr{\omega}{a^*a} = \abs{\dupr{\omega}{a}}^2$ holds, i.e.~$\omega$ is an eigenstate of $a$.
      \label{item:eigenstates:var}
  \end{enumerate}
  Moreover, if the first point \refitem{item:eigenstates:eigenvalue} holds for some $\mu \in \CC$, then $\mu = \dupr{\omega}{a}$ is the eigenvalue of $\omega$ on $a$.
\end{proposition}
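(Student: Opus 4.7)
The plan is to prove the cyclic chain of implications \refitem{item:eigenstates:eigenvalue} $\Rightarrow$ \refitem{item:eigenstates:mult} $\Rightarrow$ \refitem{item:eigenstates:var} $\Rightarrow$ \refitem{item:eigenstates:eigenvalue}, with the Cauchy--Schwarz inequality \eqref{eq:CS} applied to the positive Hermitian sesquilinear form $(a,b) \mapsto \dupr{\omega}{a^*b}$ doing the real work. Only the first implication requires any care; the remaining two are direct computations.

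For \refitem{item:eigenstates:eigenvalue} $\Rightarrow$ \refitem{item:eigenstates:mult}, I would assume $\dupr{\omega}{(a-\mu\Unit)^*(a-\mu\Unit)} = 0$ and apply \eqref{eq:CS} with the pair $(a-\mu\Unit, b)$ to conclude that $\dupr{\omega}{(a-\mu\Unit)^*b} = 0$ for every $b \in \A$. Specializing to $b = \Unit$ and using $\dupr{\omega}{\Unit} = 1$ together with the Hermiticity identity $\dupr{\omega}{a^*} = \cc{\dupr{\omega}{a}}$ yields $\cc{\dupr{\omega}{a}} = \cc\mu$, hence $\mu = \dupr{\omega}{a}$; this simultaneously establishes the ``moreover'' clause. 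Expanding the vanishing identity then gives $\dupr{\omega}{a^*b} = \cc\mu\dupr{\omega}{b} = \cc{\dupr{\omega}{a}}\dupr{\omega}{b}$, which is the first equation of \refitem{item:eigenstates:mult}. Applying \eqref{eq:CS} a second time, now to the pair $(b, a-\mu\Unit)$, gives $\dupr{\omega}{b^*(a-\mu\Unit)} = 0$, and expanding produces $\dupr{\omega}{b^*a} = \mu\cc{\dupr{\omega}{b}} = \cc{\dupr{\omega}{b}}\dupr{\omega}{a}$, which is the second equation.

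The implication \refitem{item:eigenstates:mult} $\Rightarrow$ \refitem{item:eigenstates:var} is obtained by setting $b = a$ in the first identity of \refitem{item:eigenstates:mult}. Conversely, for \refitem{item:eigenstates:var} $\Rightarrow$ \refitem{item:eigenstates:eigenvalue}, I would choose $\mu \coloneqq \dupr{\omega}{a}$ and expand $(a-\mu\Unit)^*(a-\mu\Unit) = a^*a - \mu a^* - \cc\mu a + \abs\mu^2\Unit$. Taking $\omega$ of both sides and using $\dupr{\omega}{a^*a} = \abs\mu^2$ together with $\dupr{\omega}{a^*} = \cc\mu$ makes all four terms cancel.

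I do not anticipate a substantive obstacle: the only mildly delicate point is to recognize that Cauchy--Schwarz must be invoked twice, once from each side, in order to obtain both identities in \refitem{item:eigenstates:mult}, since the sesquilinear form $(a,b) \mapsto \dupr{\omega}{a^*b}$ is not symmetric in general on a possibly non-commutative $\A$.
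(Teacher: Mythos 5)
Your proposal is correct and follows essentially the same route as the paper: the cyclic chain \refitem{item:eigenstates:eigenvalue} $\Rightarrow$ \refitem{item:eigenstates:mult} $\Rightarrow$ \refitem{item:eigenstates:var} $\Rightarrow$ \refitem{item:eigenstates:eigenvalue} with the Cauchy--Schwarz inequality \eqref{eq:CS} carrying the weight in the first implication. The only (immaterial) difference is that the paper derives the second identity of \refitem{item:eigenstates:mult} from the first by complex conjugation, using Hermiticity of $\omega$, whereas you invoke Cauchy--Schwarz a second time from the other side; both are valid.
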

\begin{proof}
  The proof is essentially the same as for eigenstates on $C^*$\=/algebras, and is repeated here for convenience of the reader:
  First assume that some $\mu \in \CC$ fulfils $\dupr{\omega}{(a-\mu \Unit)^*(a-\mu \Unit)} = 0$.
  Then it follows from the Cauchy--Schwarz inequality that $0 \le \abs{\dupr{\omega}{a-\mu \Unit}}^2 \le \dupr{\omega}{(a-\mu\Unit)^*(a-\mu\Unit)} = 0$,
  so $\dupr{\omega}{a} = \dupr{\omega}{\mu \Unit} = \mu$. Moreover, for any $b\in \A$
  the Cauchy--Schwarz inequality shows that
  \begin{align*}
    \abs[\big]{ \dupr{\omega}{a^*b} - \cc{\dupr{\omega}{a}} \dupr{\omega}{b} }^2
    =
    \abs[\big]{\dupr{\omega}{ (a-\mu \Unit)^* b }}^2
    \le
    \dupr[\big]{\omega}{ (a-\mu \Unit)^*(a-\mu \Unit)}
    \dupr{\omega}{ b^* b}
    =
    0
    \komma
  \end{align*}
  so $\dupr{\omega}{a^*b} = \cc{\dupr{\omega}{a}} \dupr{\omega}{b}$. By complex conjugation it follows that 
  $\dupr{\omega}{b^*a} = \cc{\dupr{\omega}{b}} \dupr{\omega}{a}$. We conclude that \refitem{item:eigenstates:eigenvalue}
  implies \refitem{item:eigenstates:mult}. As \refitem{item:eigenstates:mult} trivially implies \refitem{item:eigenstates:var}
  by choosing $b\coloneqq a$ and as \refitem{item:eigenstates:var} implies \refitem{item:eigenstates:eigenvalue}
  with $\mu \coloneqq \dupr{\omega}{a}$ because then
  $\dupr{\omega}{(a-\mu \Unit)^*(a-\mu \Unit)} = \dupr{\omega}{a^*a} - \abs{\dupr{\omega}{a}}^2 = 0$,
  these three statements are equivalent.
\end{proof}
The above Proposition~\ref{proposition:eigenstates} especially shows that the concept of eigenstates on a quasi-ordered $^*$\=/algebra $\A$
can be seen as a weakening of positive unital $^*$\=/homomorphisms from $\A$ to $\CC$. More precisely, a state $\omega$ on $\A$
is a positive unital $^*$\=/homomorphism if and only if it is a common eigenstate of all elements of $\A$. The next example shows that 
eigenstates can also be interpreted as generalizations of vector states associated to eigenvectors:

\begin{example} \label{example:eigenstates}
  Consider the ordered $^*$\=/algebra of operators $\Adbar(\Dom)$ on a pre-Hilbert space $\Dom$.
  For every $\psi \in \Dom$, the vector functional $\chi_\psi \colon \Adbar(\Dom)\to \CC$,
  $a \mapsto \dupr{\chi_\psi}{a} \coloneqq \skal{\psi}{a(\psi)}$ is Hermitian and positive,
  and it is a state if and only if $\seminorm{}{\psi}=1$.
  Now let $a\in \Adbar(\Dom)$, $\psi \in \Dom$ with $\seminorm{}{\psi}=1$, and $\mu \in \CC$
  be given. Then the statement $\seminorm{}{a(\psi) - \mu \psi} = 0$ is equivalent to
  $a(\psi) = \mu \psi$ and also equivalent to
  $\dupr{\chi_\psi}{(a-\mu\Unit)^*(a-\mu\Unit)} = 0$. This shows that $\psi$ is an eigenvector
  of $a$ with eigenvalue $\mu$ if and only if $\chi_\psi$ is an eigenstate of $a$ with eigenvalue $\mu$.
\end{example}
Moreover, for a normal element $a$ of a $C^*$\=/algebra $\A$
one finds that eigenstates exist precisely to eigenvalues which are elements of the spectrum of $a$,
see \cite{rinehart:eigenstatesOfCStarAlgebras}, essentially because all $\CC$-valued unital $^*$\=/homomorphisms
of the commutative $C^*$\=/subalgebra of $\A$ that is generated by $a$ can be extended to states on $\A$ by a Hahn--Banach type argument.
However, we will see in Proposition~\ref{proposition:noeigenstate} that this does not generalize to the unbounded case.

\section{Reduction of Representable Poisson \texorpdfstring{$^*$-Algebras}{*-Algebras}} 
\label{sec:reduction:general}

\subsection{Representable Poisson \texorpdfstring{$^*$\=/Algebras}{*-Algebras}} 
In applications to classical or quantum physics, ordered $^*$\=/algebras appear as the algebras of observables,
with their order induced by their states.
Such observable algebras are usually endowed with a Poisson bracket: in quantum physics, this Poisson bracket is derived from the commutator,
but in classical physics, it is an additional structure on the algebra. This leads to:
\begin{definition} \label{definition:repPoiSAlg}
  A \emph{representable Poisson $^*$\=/algebra} is an ordered $^*$\=/algebra $\A$
  whose order is induced by its states and
  that is equipped with a bilinear and antisymmetric \emph{Poisson bracket}
  $\poi{\argument}{\argument} \colon \A \times \A \to \A$ fulfilling the usual Leibniz and Jacobi identity
  and which is compatible with the $^*$\=/involution in the sense that $\poi{a}{b}^* = \poi{a^*}{b^*}$ holds for all $a,b\in \A$.
  Given two representable Poisson $^*$\=/algebras $\A$ and $\B$ and a unital $^*$\=/homomorphism
  $\Phi \colon \A \to \B$, then $\Phi$ is said to be \emph{compatible with Poisson brackets} if
  $\Phi\big(\poi{a_1}{a_2}\big) = \poi[\big]{\Phi(a_1)}{\Phi(a_2)}$ holds for all $a_1,a_2 \in \A$.
\end{definition}
Recall that in the general non-commutative case, where the order of the factors is important,
the Leibniz identity for $a,b,c\in \A$ is $\poi{ab}{c} = \mathop{\poi{a}{c}} b + 
a\mathop{\poi{b}{c}}$.
The algebras defined above are ``representable'' in the following sense: Since the order is induced 
by the states, the underlying ordered $^*$\=/algebra admits a faithful representation as an 
$O^*$\=/algebra on a pre-Hilbert space,
which can be constructed as an orthogonal sum of GNS-representations 
\cite[Chapter 4.4]{schmuedgen:invitationToStarAlgebras}.
We are especially interested in two types of representable Poisson $^*$\=/algebras:
\begin{example} \label{example:comaspoi}
  Let $\A$ be an ordered $^*$\=/algebra whose order is induced by its states, e.g.~an 
  $O^*$\=/algebra on some pre-Hilbert space $\Dom$,
  and let $\hbar \in \RR\setminus \{0\}$. Then $\A$ with the rescaled commutator as Poisson bracket,
  \begin{align}
    \poi{a}{b} \coloneqq \frac{ab-ba}{\I \hbar}
    \label{eq:comaspoi}
  \end{align}
  for all $a,b\in \A$, is a representable Poisson $^*$\=/algebra. All unital $^*$\=/homomorphisms
  between such representable Poisson $^*$\=/algebras (with the same value of $\hbar$) are automatically
  compatible with Poisson brackets.
\end{example}
If the underlying $^*$\=/algebra of a representable Poisson $^*$\=/algebra $\A$
is sufficiently non-commutative, then there exist some general conditions under which the Poisson bracket of $\A$
necessarily is of the form \eqref{eq:comaspoi}, see \cite{farkas.letzter:ringTheoryFromSymplecticGeometry}.
For this reason, more general notions of ``non-commutative Poisson algebras'' like in 
\cite{xu:noncommutativePoissonAlgebras} have been developed, an approach that we, however, will not pursue any further.
\begin{example} \label{example:Poissonmanifold}
  Let $\A$ be a commutative ordered $^*$\=/algebra  whose order is induced by its states, 
  e.g.~$\A = \Smooth(M)$,
  the smooth $\CC$-valued functions on a smooth manifold $M$ with the pointwise order. Then any bilinear and antisymmetric bracket on the real subalgebra
  $\A_\Hermitian$ of $\A$ which fulfils Leibniz and Jacobi identity gives rise to a Poisson bracket $\poi{\argument}{\argument}$
  on whole $\A$ (by $\CC$-linear extension) with which $\A$ becomes a representable Poisson $^*$\=/algebra. In the case $\A = \Smooth(M)$
  such a bracket can always be derived from a (real) Poisson tensor with which $M$ becomes a 
  Poisson manifold.
\end{example}
One might wonder why Definition~\ref{definition:repPoiSAlg} does not require any form of compatibility between the Poisson bracket
and the order. The reason for this is that in the case of smooth functions on a Poisson manifold as in Example~\ref{example:Poissonmanifold} there
does not seem to be any such compatibility.

If $\A$ is a representable Poisson $^*$\=/algebra and $\B$ a unital $^*$\=/subalgebra of $\A$ which is closed under the Poisson bracket,
then it is easy to check that $\B$ with the operations and the order inherited from $\A$ is again a representable Poisson $^*$\=/algebra,
because all positive Hermitian linear functionals on $\A$ can be restricted to $\B$. Quotients, however, are somewhat less well-behaved
as we will shortly see:
\begin{definition} \label{definition:repPoiSIdeal}
  Let $\A$ be a representable Poisson $^*$\=/algebra. A subset $\mathcal{I}$ of $\A$ is a \emph{representable Poisson $^*$\=/ideal} if $\mathcal{I}$
  is a $^*$\=/ideal of $\A$ which is also a Poisson ideal, i.e.~$\poi{a}{b} \in \mathcal{I}$ for all $a\in \A$ and all $b\in \mathcal{I}$,
  and if additionally $\mathcal{I}$ arises as the common kernel of a set of states on $\A$,
  i.e.~for all $a\in \mathcal{A} \setminus \mathcal{I}$ there exists $\omega \in \States(\A)$ for which
  $\dupr{\omega}{a} \neq 0$ and $\mathcal{I} \subseteq \ker \omega$ hold.
\end{definition}
For example, if $\Phi \colon \A \to \B$ is a positive unital $^*$\=/homomorphism between representable Poisson $^*$\=/algebras $\A$ and $\B$
and compatible with Poisson brackets, then $\ker \Phi = \set{a\in \A}{\Phi(a) =0}$ certainly is a $^*$\=/ideal and a Poisson ideal of $\A$,
and it is the common kernel of a set of states on $\A$, hence a representable Poisson $^*$\=/ideal:
Given $a \in \mathcal{A} \setminus \ker \Phi$, then $\Phi(a)\neq 0$ so that there exists $\rho \in \States(\B)$ with $\dupr{\rho}{\Phi(a)} \neq 0$
because the order on $\B$ is induced by its states. Consequently, $\rho \circ \Phi \in \States(\A)$
fulfils $\dupr{\rho\circ\Phi}{a} \neq 0$.
\begin{proposition} \label{proposition:repPoiQuotient}
  Let $\A$ be a representable Poisson $^*$\=/algebra, $\mathcal{I}$ a representable Poisson $^*$\=/ideal of $\A$,
  and $[\argument] \colon \A \to \A / \mathcal{I}$ the canonical projection onto the quotient $^*$\=/algebra $\A / \mathcal{I}$.
  Then the Poisson bracket descends to $\A / \mathcal{I}$, i.e.~there exists a (necessarily unique)
  Poisson bracket $\poi{\argument}{\argument}$ on $\A / \mathcal{I}$ fulfilling $\poi{[a]}{[b]} = [\poi{a}{b}]$
  for all $a,b \in \A$. The quotient $^*$\=/algebra $\A/\mathcal{I}$, endowed with this 
  Poisson bracket and with the order whose quadratic module of positive Hermitian elements is
  \begin{align}
    (\A / \mathcal{I})^+_\Hermitian
    \coloneqq
    \set[\big]{
      [a]
    }{
      a\in \A_\Hermitian\textup{ such that }\dupr{\omega}{a} \ge 0\textup{ for all }\omega \in \States(\A)\textup{ for which }\mathcal{I} \subseteq \ker \omega
    }
    ,
    \label{eq:repPoiQuotient:order}
  \end{align}
  becomes a representable Poisson $^*$\=/algebra.
  This way, the projection $[\argument] \colon \A \to \A / \mathcal{I}$ becomes a surjective positive unital $^*$\=/homomorphism compatible with Poisson brackets.
  Moreover, whenever a state $\omega$ on $\A$ fulfils $\ker [\argument] \subseteq \ker \omega$, then the unique algebraic state $\check{\omega}$ on 
  $\A / \mathcal{I}$ that fulfils $\omega = \check{\omega} \circ [\argument]$ is positive, hence a state on $\A / \mathcal{I}$.
\end{proposition}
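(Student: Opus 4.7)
The plan is to descend the Poisson bracket to the quotient, then apply Proposition~\ref{proposition:inducedOrder} to the set of states of $\A$ whose kernel contains $\mathcal{I}$, and finally identify the resulting quotient with $\A/\mathcal{I}$ using the ``common kernel of states'' clause of Definition~\ref{definition:repPoiSIdeal}.

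First I would check that the Poisson bracket descends. For $a,a'\in\A$ with $a-a'\in\mathcal{I}$ and any $b\in\A$, antisymmetry combined with the Poisson-ideal property $\poi{\A}{\mathcal{I}}\subseteq\mathcal{I}$ gives $\poi{a}{b}-\poi{a'}{b}=\poi{a-a'}{b}=-\poi{b}{a-a'}\in\mathcal{I}$, and the analogous argument in the second slot shows that $[\poi{a}{b}]$ depends only on $[a]$ and $[b]$. This produces a unique bilinear bracket on $\A/\mathcal{I}$ with $\poi{[a]}{[b]}=[\poi{a}{b}]$. Antisymmetry, the Leibniz and Jacobi identities, and $^*$-compatibility then pass to the quotient because $[\argument]\colon\A\to\A/\mathcal{I}$ is a surjective unital $^*$-homomorphism; in particular, $[\argument]$ is compatible with Poisson brackets by construction.

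Next I would apply Proposition~\ref{proposition:inducedOrder} to $S\coloneqq\set{\omega\in\States(\A)}{\mathcal{I}\subseteq\ker\omega}$. The hypothesis is easy to verify: for $\omega\in S$ and $a\in\A$ with $\dupr{\omega}{a^*a}=1$, the two-sided $^*$-ideal property of $\mathcal{I}$ gives $a^*ba\in\mathcal{I}$ for every $b\in\mathcal{I}$, whence $\dupr{a\acts\omega}{b}=\dupr{\omega}{a^*ba}=0$ and so $a\acts\omega\in S$. The proposition then produces a $^*$-ideal $\mathcal{I}_S\coloneqq\set{a\in\A}{\dupr{\omega}{a}=0\textup{ for all }\omega\in S}$ such that $\A/\mathcal{I}_S$ with the quotient order is an ordered $^*$-algebra whose order is induced by the states $\check{\omega}$, $\omega\in S$. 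The crux of the argument is then the identification $\mathcal{I}_S=\mathcal{I}$: the inclusion $\mathcal{I}\subseteq\mathcal{I}_S$ is immediate from the definition of $S$, while the reverse inclusion is precisely where the ``common kernel of states'' clause of Definition~\ref{definition:repPoiSIdeal} enters, since for any $a\notin\mathcal{I}$ that clause supplies some $\omega\in S$ with $\dupr{\omega}{a}\neq 0$, showing $a\notin\mathcal{I}_S$.

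With $\mathcal{I}_S=\mathcal{I}$ in hand, identity~\eqref{eq:inducedOrder:quotient} becomes exactly~\eqref{eq:repPoiQuotient:order} and the order on $\A/\mathcal{I}$ is induced by its states, so together with the descended Poisson bracket this turns $\A/\mathcal{I}$ into a representable Poisson $^*$-algebra. The projection $[\argument]$ is positive because the order on $\A$ is induced by $\States(\A)\supseteq S$, so $a\in\A^+_\Hermitian$ forces $\dupr{\omega}{a}\ge 0$ for every $\omega\in S$, and hence $[a]\in(\A/\mathcal{I})^+_\Hermitian$ by~\eqref{eq:repPoiQuotient:order}. For the final statement, any state $\omega\in\States(\A)$ with $\ker[\argument]=\mathcal{I}\subseteq\ker\omega$ automatically lies in $S$; the unique linear functional $\check{\omega}$ on $\A/\mathcal{I}$ satisfying $\check{\omega}\circ[\argument]=\omega$ then obeys $\dupr{\check{\omega}}{[a]}\ge 0$ for every $[a]\in(\A/\mathcal{I})^+_\Hermitian$ directly from~\eqref{eq:repPoiQuotient:order}, and is therefore a state on $\A/\mathcal{I}$.
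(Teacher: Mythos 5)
Your proof is correct and follows essentially the same route as the paper: descend the Poisson bracket via the Poisson-ideal property and obtain the order by applying Proposition~\ref{proposition:inducedOrder} to $S=\set{\omega\in\States(\A)}{\mathcal{I}\subseteq\ker\omega}$. You are in fact slightly more careful than the paper in spelling out the identification $\mathcal{I}_S=\mathcal{I}$ via the ``common kernel of states'' clause, a step the paper leaves implicit.
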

\begin{proof}
  The quotient $^*$\=/algebra $\A / \mathcal{I}$ with the order defined by \eqref{eq:repPoiQuotient:order} is an ordered $^*$\=/algebra
  whose order is induced by its states, because it is obtained by the construction
  of Proposition~\ref{proposition:inducedOrder} with $S \coloneqq \set{\omega\in\States(\A)}{\mathcal{I} \subseteq \ker \omega}$;
  the condition $a \acts \omega \in S$ for all $\omega \in S$ and all $a\in \A$ with $\dupr{\omega}{a^*a} = 1$
  holds because $\mathcal{I}$ is a $^*$\=/ideal. The Poisson bracket on $\A / \mathcal{I}$ is well-defined because $\mathcal{I}$ is a Poisson ideal,
  and it is easy to check that this way $\A/\mathcal{I}$ becomes a representable Poisson $^*$\=/algebra
  and that the canonical projection $[\argument] \colon \A \to \A / \mathcal{I}$ becomes a surjective positive unital $^*$\=/homomorphism
  compatible with Poisson brackets. Given any state $\omega$ on $\A$ that fulfils $\ker [\argument] \subseteq \ker \omega$,
  then $\mathcal{I} \subseteq \ker \omega$ and it is clear that there exists a unique algebraic state $\check{\omega}$ on 
  $\A / \mathcal{I}$ that fulfils $\omega = \check{\omega} \circ [\argument]$. 
  It is an immediate consequence of \eqref{eq:repPoiQuotient:order} that $\check{\omega}$ is also positive, hence a state.
\end{proof}
Note that the order from~\eqref{eq:repPoiQuotient:order} in general does not coincide with the quotient order of $^*$\=/algebras
defined in Section~\ref{sec:preliminaries:orderedsalg}; it is the smallest order induced by states that contains the quotient order.
While the construction of the \emph{quotient representable Poisson $^*$\=/algebra} from the above 
Proposition~\ref{proposition:repPoiQuotient}
has all the properties that one would expect from an abstract point of view, there is a problem within the definition
of representable Poisson $^*$\=/ideals: Without any compatibility between Poisson bracket and order, it is unclear how
such a representable Poisson $^*$\=/ideal $\mathcal{I}$ can be constructed explicitly in the general case, because it simultaneously has
to be a Poisson ideal and the common kernel of a set of states. The solution to this problem depends on
the example at hand: In the non-commutative case of Example~\ref{example:comaspoi}, every 
$^*$\=/ideal automatically is a Poisson ideal,
while in the commutative case of Example~\ref{example:Poissonmanifold} one can often apply geometric arguments, which will be
discussed further in Section~\ref{sec:PoiMan}.

\subsection{Symmetry Reduction}
From the compatibility between Poisson bracket and $^*$\=/involution it follows that the real linear subspace of Hermitian elements of a representable Poisson
$^*$\=/algebra with the restriction of the Poisson bracket is especially a Lie algebra. This leads to a notion of well-behaved actions of real Lie algebras
on representable Poisson $^*$\=/algebras:
\begin{definition} \label{definition:momentmap}
  Let $\A$ be a representable Poisson $^*$\=/algebra and $\lie g$ a real Lie algebra. Then a 
  \emph{momentum map} from $\lie g$ to $\A$ is a morphism
  $\momentmap \colon \lie g \to \A_\Hermitian$ of real Lie algebras (with respect to the Poisson bracket on $\A_\Hermitian$).
  Given such a momentum map, then we define the \emph{induced right action}  
  $\argument\racts\argument \colon \A \times \lie g \to \A$,
  \begin{align}
    (a,\xi) \mapsto a \racts \xi \coloneqq \poi[\big] {a}{\momentmap(\xi)}
    \punkt
  \end{align}
\end{definition}
One can easily check that $\argument\racts\argument$ is indeed a right action
of the Lie algebra $\lie g$ on $\A$, i.e.\ that
\begin{align}
  \big((a \racts \xi) \racts \eta\big) - \big((a \racts \eta) \racts \xi\big)
  =
  a \racts \kom{\xi}{\eta}
\end{align}
holds for all $a\in \A$ and all $\xi, \eta \in \lie g$, where $\kom{\argument}{\argument}$ denotes the Lie bracket of $\mathfrak{g}$.
This right action is also compatible with the Poisson bracket in the sense that
\begin{align}
 \poi{a}{b} \racts \xi
 =
 \poi[\big]{a\racts \xi}{b} + \poi[\big]{a}{b\racts \xi}
 \label{eq:actpoicomp}
\end{align}
holds for all $a,b\in \A$ and all $\xi \in \lie g$.
\begin{definition}
  Let $V$ be a vector space endowed with a right action $\argument \racts \argument \colon V \times \lie g \to V$
  of a Lie algebra $\lie{g}$, then 
  \begin{align}
    V^{\lie g}
    \coloneqq
    \set[\big]{
      v \in V
    }{
      \forall_{\eta \in \lie g} : v \racts \eta = 0
    }
  \end{align}
  denotes its linear subspace of \emph{$\lie g$-invariant} elements.
\end{definition}
While in similar settings there do exist reduction procedures for e.g.~free and proper actions of arbitrary Lie groups, like Marsden--Weinstein reduction
or the reduction of formal star products via BRST cohomology from \cite{bordemann.herbig.waldmann:BRSTCohomologyAndPhaseSpaceReduction}, we will only
consider the simpler case of abelian Lie groups, in which case the Lie bracket of the associated 
Lie algebra is zero (yet we will consider arbitrary momenta).
\begin{proposition} \label{proposition:ginv}
  Let $\A$ be a representable Poisson $^*$\=/algebra and $\momentmap \colon \lie g \to \A_\Hermitian$ a momentum map for a real Lie algebra $\lie g$.
  Then $\A^{\lie g}$ with the restriction of the relevant operations of $\A$ and the restricted 
  order is again a respresentable Poisson $^*$\=/algebra.
  Moreover, if $\lie{g}$ is commutative, then $\momentmap(\xi) \in \A^{\lie g}$ for all $\xi \in \lie g$.
\end{proposition}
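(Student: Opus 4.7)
The plan is to verify that $\A^{\lie g}$ inherits each structural ingredient of a representable Poisson $^*$\=/algebra from $\A$ by restriction, and then to settle the commutative case by a one-line computation. I do not anticipate any genuine obstacle; the proof is mostly a careful unpacking of the definitions, using the Leibniz rule and \eqref{eq:actpoicomp} for the algebraic side, and the elementary fact that states restrict to states for the order-theoretic side.

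First I would show that $\A^{\lie g}$ is a unital $^*$\=/subalgebra of $\A$ that is closed under the Poisson bracket. The unit lies in $\A^{\lie g}$ because $\Unit\racts\xi = \poi{\Unit}{\momentmap(\xi)} = 0$ by the Leibniz rule. The same Leibniz rule applied to $a\racts\xi = \poi{a}{\momentmap(\xi)}$ yields $(ab)\racts\xi = (a\racts\xi)\,b + a\,(b\racts\xi)$, so $\A^{\lie g}$ is closed under multiplication. For $^*$\=/closure, using Hermiticity of $\momentmap(\xi)$ and the $^*$\=/compatibility of the Poisson bracket,
\begin{equation*}
  (a^*)\racts\xi = \poi[\big]{a^*}{\momentmap(\xi)^*} = \poi[\big]{a}{\momentmap(\xi)}^{*} = (a\racts\xi)^*
  \komma
\end{equation*}
so $a\in\A^{\lie g}$ implies $a^*\in\A^{\lie g}$. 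Closure under the Poisson bracket itself is exactly the content of \eqref{eq:actpoicomp}: if $a\racts\xi = 0 = b\racts\xi$, then $\poi{a}{b}\racts\xi = 0$.

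Next, with the restricted Poisson bracket and restricted order, $\A^{\lie g}$ is automatically an ordered $^*$\=/algebra whose Poisson bracket satisfies Leibniz, Jacobi, and $^*$\=/compatibility, all inherited from $\A$. The only non-trivial property left is that the restricted order on $\A^{\lie g}$ is induced by its states. The key observation is that every $\omega\in\States(\A)$ restricts to a state $\omega|_{\A^{\lie g}}\in\States(\A^{\lie g})$, because positivity, Hermiticity, and the normalization $\dupr{\omega}{\Unit}=1$ all pass to any unital $^*$\=/subalgebra. Hence if $a\in\A^{\lie g}_\Hermitian$ is not positive in $\A^{\lie g}$, then by definition of the restricted order it is not positive in $\A$ either, so the hypothesis on $\A$ supplies some $\omega\in\States(\A)$ with $\dupr{\omega}{a}<0$; its restriction $\omega|_{\A^{\lie g}}$ is then a state on $\A^{\lie g}$ that witnesses the non-positivity of $a$.

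For the commutative case, I would simply compute
\begin{equation*}
  \momentmap(\xi)\racts\eta = \poi[\big]{\momentmap(\xi)}{\momentmap(\eta)} = \momentmap\big(\kom{\xi}{\eta}\big) = 0
\end{equation*}
for all $\xi,\eta\in\lie g$, using that $\momentmap$ is a morphism of real Lie algebras and that $\kom{\xi}{\eta}=0$ by commutativity of $\lie g$, which shows $\momentmap(\xi)\in\A^{\lie g}$ and concludes the proof.
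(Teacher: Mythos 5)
Your proof is correct and follows essentially the same route as the paper's: the kernel of the inner derivations $\poi{\argument}{\momentmap(\xi)}$ with Hermitian $\momentmap(\xi)$ is a unital $^*$\=/subalgebra, closure under the bracket comes from \eqref{eq:actpoicomp}, the restricted order is induced by restrictions of states on $\A$, and the commutative case is the same one-line computation. You merely spell out in more detail the steps the paper leaves implicit.
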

\begin{proof}
  As $\xi \in \lie g$ acts on $\A$ by an inner derivation $\poi{\argument}{\momentmap(\xi)}$ with Hermitian $\momentmap(\xi)$,
  their common kernel $\A^{\lie g}$ is a unital $^*$\=/subalgebra of $\A$. From 
  \eqref{eq:actpoicomp} it follows that $\A^{\lie g}$ is also
  closed under the Poisson bracket. Since the restricted order on $\A^{\lie g}$ is still induced by 
  its states, $\A^{\lie g}$ is a representable Poisson $^*$\=/algebra.
  If $\lie g$ is commutative, then $\momentmap(\xi) \racts \eta = \poi{\momentmap(\xi)}{\momentmap(\eta)} = \momentmap(\kom{\xi}{\eta}) = 0$
  holds for all $\xi,\eta \in \lie g$, so $\momentmap(\xi) \in \A^{\lie g}$.
\end{proof}
Restriction to $\A^{\lie g}$ is the first step in the reduction procedure, the second step is to divide out a suitable representable Poisson $^*$\=/ideal $\mathcal{I}_\mu$,
interpreted as the ``vanishing ideal of the levelset $\Levelset_\mu$ of the momentum map $\momentmap$ at $\mu$''. Of course, the concept of a levelset or vanishing ideal is not applicable
in the general case considered here, especially not for $O^*$\=/algebras like in Example~\ref{example:comaspoi}.
For the general definition of the reduction we therefore fall back to requiring a universal property to be fulfilled,
which essentially reduces to a characterization of $\mathcal{I}_\mu$ as the smallest representable Poisson $^*$\=/ideal
that contains the image of $\momentmap-\mu$.
An alternative description of the reduction as a quotient by an actual ``non-commutative 
vanishing ideal'' will be given in Theorem~\ref{theorem:vanishing:neu}.
\begin{definition} \label{definition:reduction}
  Let $\A$ be a representable Poisson $^*$\=/algebra and $\momentmap \colon \lie g \to \A_\Hermitian$ a momentum map for a commutative real Lie algebra $\lie g$.
  Given $\mu \in {\lie g}^*$, then the \emph{$\momentmap$-reduction of $\A$ at $\mu$} is a tuple
  $(\A_{\mu\mred}, [\argument]_\mu)$ of a representable Poisson $^*$\=/algebra $\A_{\mu\mred}$ and a positive
  unital $^*$\=/homomorphism $[\argument]_\mu \colon \A^{\lie g} \to \A_{\mu\mred}$ compatible with Poisson brackets
  and which fulfils $[\momentmap(\xi)]_\mu = \dupr{\mu}{\xi}\Unit$ for all $\xi \in \lie g$,
  such that the following universal property is fulfilled:
  Whenever $\Phi \colon \A^{\lie g} \to \B$ is another positive unital $^*$\=/homomorphism compatible with Poisson brackets
  into any representable Poisson $^*$\=/algebra $\B$ that fulfils $\Phi\big( \momentmap(\xi) \big) = \dupr{\mu}{\xi}\Unit$
  for all $\xi \in \lie g$, then there exists a unique positive unital $^*$\=/homomorphism $\Phi_{\mu\mred} \colon \A_{\mu\mred} \to \B$
  compatible with Poisson brackets for which $\Phi = \Phi_{\mu\mred} \circ [\argument]_\mu$ holds.
\end{definition}
Note that the $\momentmap$-reduction at $\mu$ (once we have shown that it exists) is determined up to unique isomorphism.
The existence of the reduction is indeed guaranteed, which will be shown in 
Theorem~\ref{theorem:reduction}.

\begin{definition} \label{definition:genIdeal}
  Let $\A$ be a representable Poisson $^*$\=/algebra, $\momentmap \colon \lie g \to \A_\Hermitian$ a momentum map for a
  commutative real Lie algebra $\lie g$, and $\mu \in {\lie g}^*$. Then denote by $\genSId{\momentmap-\mu}{}$ the $^*$\=/ideal
  of $\A^{\lie g}$ that is generated by all $\momentmap(\xi) - \dupr{\mu}{\xi} \Unit$ with $\xi \in \lie g$.
\end{definition}

\begin{theorem} \label{theorem:reduction}
  Let $\A$ be a representable Poisson $^*$\=/algebra, $\momentmap \colon \lie g \to \A_\Hermitian$ a momentum map for a commutative real Lie algebra $\lie g$,
  and $\mu \in {\lie g}^*$. Then the intersection
  \begin{align}
    \mathcal{I}_\mu
    \coloneqq
    \bigcap \set[\big]{
      \mathcal{I}
    }{
      \mathcal{I} \text{ is a representable Poisson $^*$\=/ideal of }\A^{\lie g}\text{ fulfilling }\genSId{\momentmap-\mu}{} \subseteq \mathcal{I}
    }
    .
    \label{eq:Imu}
  \end{align}
  is a well-defined representable Poisson $^*$\=/ideal of $\A^{\lie g}$. Moreover, given a representable Poisson $^*$\=/algebra $\A_{\mu\mred}$ and a
  positive unital $^*$\=/homomorphism $[\argument]_\mu \colon \A^{\lie g} \to \A_{\mu\mred}$ compatible with Poisson brackets,
  then $(\A_{\mu\mred}, [\argument]_\mu)$ is the $\momentmap$-reduction of $\A$ at $\mu$ if and only if the following two conditions are fulfilled:
  \begin{enumerate}
    \item $[\argument]_\mu$ is surjective and its kernel is $\ker{[\argument]_\mu} = \mathcal{I}_\mu$. \label{item:reduction:projection}
    \item Whenever $\omega \in \States(\A^{\lie g})$ fulfils $\ker{[\argument]_\mu} \subseteq \ker \omega$, then the unique algebraic state $\check{\omega}$
      on $\A_{\mu\mred}$ that fulfils $\omega = \check{\omega} \circ [\argument]_\mu$ is positive, hence a state on $\A_{\mu\mred}$. \label{item:reduction:states}
  \end{enumerate}
  Finally, the $\momentmap$-reduction of $\A$ at $\mu$ always exists and can e.g.~be realized as the quotient representable Poisson $^*$\=/algebra
  $\A_{\mu\mred} \coloneqq \A^{\lie g} / \mathcal{I}_\mu$ as in Proposition~\ref{proposition:repPoiQuotient} together with 
  the canonical projection onto the quotient $[\argument]_\mu \colon \A^{\lie g} \to \A^{\lie g} / \mathcal{I}_\mu$.
\end{theorem}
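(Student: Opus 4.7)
The plan is to proceed in three stages: (1) verify that $\mathcal{I}_\mu$ is a well-defined representable Poisson $^*$-ideal of $\A^{\lie g}$, (2) construct the reduction concretely as the quotient $\A^{\lie g}/\mathcal{I}_\mu$ via Proposition~\ref{proposition:repPoiQuotient} and check that it satisfies the universal property of Definition~\ref{definition:reduction}, and (3) extract the ``if and only if'' characterization from the uniqueness of universal objects.

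For stage (1), the intersecting family is nonempty because $\A^{\lie g}$ itself qualifies trivially (the defining separation condition is vacuous). Arbitrary intersections clearly preserve being a $^*$-ideal and a Poisson ideal. Representability also survives: if $a \in \A^{\lie g} \setminus \mathcal{I}_\mu$, then some member $\mathcal{I}$ of the family fails to contain $a$, so a state $\omega \in \States(\A^{\lie g})$ with $\dupr{\omega}{a}\neq 0$ and $\mathcal{I}\subseteq\ker\omega$ is provided by representability of $\mathcal{I}$, and since $\mathcal{I}_\mu \subseteq \mathcal{I} \subseteq \ker\omega$, the same $\omega$ works for $\mathcal{I}_\mu$.

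For stage (2), I take $\A_{\mu\mred} \coloneqq \A^{\lie g}/\mathcal{I}_\mu$ with the canonical projection $[\argument]_\mu$, which by Proposition~\ref{proposition:repPoiQuotient} is a surjective positive unital $^*$-homomorphism compatible with Poisson brackets, and observes $[\momentmap(\xi)]_\mu = \dupr{\mu}{\xi}\Unit$ since $\momentmap(\xi)-\dupr{\mu}{\xi}\Unit \in \genSId{\momentmap-\mu}{} \subseteq \mathcal{I}_\mu$. Given a test morphism $\Phi\colon \A^{\lie g} \to \B$ as in Definition~\ref{definition:reduction}, the discussion preceding Proposition~\ref{proposition:repPoiQuotient} shows that $\ker\Phi$ is itself a representable Poisson $^*$-ideal; it contains $\genSId{\momentmap-\mu}{}$ by hypothesis, hence $\mathcal{I}_\mu \subseteq \ker\Phi$. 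The usual quotient property for $^*$-algebras then yields a unique unital $^*$-homomorphism $\Phi_{\mu\mred}$ factoring $\Phi$; Poisson compatibility is inherited using surjectivity of $[\argument]_\mu$. Positivity of $\Phi_{\mu\mred}$ comes from an indirect, state-based argument: every $\rho \in \States(\B)$ pulls back to $\rho\circ\Phi \in \States(\A^{\lie g})$ whose kernel contains $\mathcal{I}_\mu$, so Proposition~\ref{proposition:repPoiQuotient} descends it to a state on $\A_{\mu\mred}$ which, by surjectivity, must coincide with $\rho \circ \Phi_{\mu\mred}$; Proposition~\ref{proposition:positivitycriterium} then delivers positivity of $\Phi_{\mu\mred}$.

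For stage (3), the ``only if'' direction follows from the standard uniqueness up to unique isomorphism of universal objects: any reduction is isomorphic (by a positive unital $^*$-isomorphism compatible with Poisson brackets) to the quotient model, through which \refitem{item:reduction:projection} and \refitem{item:reduction:states} transfer directly. For the ``if'' direction, assuming \refitem{item:reduction:projection} and \refitem{item:reduction:states} one argues as in stage (2): $\mathcal{I}_\mu = \ker[\argument]_\mu \subseteq \ker\Phi$ combined with surjectivity of $[\argument]_\mu$ yields the factoring unital $^*$-homomorphism $\Phi_{\mu\mred}$ uniquely, Poisson compatibility is automatic, and positivity is extracted by the same pullback-of-states trick, this time exploiting the assumed \refitem{item:reduction:states} instead of Proposition~\ref{proposition:repPoiQuotient}. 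The step I expect to be trickiest is this positivity of $\Phi_{\mu\mred}$: the naive quotient order of Section~\ref{sec:preliminaries:orderedsalg} is in general strictly smaller than the state-induced order used on $\A_{\mu\mred}$, so positivity cannot be obtained by a direct image-of-positives computation, and one really must use that the order on the target $\B$ is induced by its states in order to push positivity across the factorization dually.
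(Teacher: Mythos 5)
Your proposal is correct and follows essentially the same route as the paper: establish that $\mathcal{I}_\mu$ is a representable Poisson $^*$\=/ideal by the nonempty-intersection argument, verify the universal property from the two listed conditions (with positivity of the factored map obtained by pulling back states on $\B$ and invoking Proposition~\ref{proposition:positivitycriterium}), realize the reduction as the quotient via Proposition~\ref{proposition:repPoiQuotient}, and transfer the two conditions to any other realization through the unique isomorphism. Your closing remark correctly identifies the same subtlety the paper navigates, namely that positivity of $\Phi_{\mu\mred}$ must be deduced dually via states rather than by a direct image-of-positives computation.
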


\begin{proof}
  $\A^{\lie g}$ itself is a representable Poisson $^*$\=/ideal of $\A^{\lie g}$ fulfilling $\genSId{\momentmap-\mu}{} \subseteq \A^{\lie g}$,
  so $\mathcal{I}_\mu$ as in \eqref{eq:Imu} is a well-defined subset of $\A^{\lie g}$, which 
  clearly is a $^*$\=/ideal and a Poisson ideal,
  and also is the common kernel of a set of states, hence a representable Poisson $^*$\=/ideal: For all
  $a\in \mathcal{A}^{\lie g} \setminus \mathcal{I}_\mu$ there is some representable Poisson $^*$\=/ideal $\mathcal{I}$ of $\A^{\lie g}$
  fulfilling $\genSId{\momentmap-\mu}{} \subseteq \mathcal{I}$ and $a \notin \mathcal{I}$,
  and therefore there is $\omega \in \States(\A^{\lie g})$ fulfilling $\dupr{\omega}{a} \neq 0$ and $\mathcal{I} \subseteq \ker \omega$,
  and especially also $\mathcal{I}_\mu \subseteq \ker \omega$.
  
  Now assume that $\A_{\mu\mred}$ and $[\argument]_\mu$ fulfil the two properties above and let 
  $\Phi \colon \A^{\lie g} \to \B$ be a 
  positive unital $^*$\=/homomorphism compatible with Poisson brackets into another representable Poisson $^*$\=/algebra $\B$ that
  fulfils $\Phi\big( \momentmap(\xi) \big) = \dupr{\mu}{\xi}\Unit_{\B}$ for all $\xi \in \lie g$. Then $\ker \Phi$ is a 
  representable Poisson $^*$\=/ideal of $\A^{\lie g}$ as discussed below 
  Definition~\ref{definition:repPoiSIdeal},
  and $\momentmap(\xi)-\dupr{\mu}{\xi}\Unit_{\A^{\lie g}} \in \ker \Phi$ holds for all $\xi \in\lie g$, hence $\genSId{\momentmap-\mu}{} \subseteq \ker \Phi$.
  It follows that $\mathcal{I}_\mu \subseteq \ker \Phi$ and as a consequence of property~\refitem{item:reduction:projection},
  there exists a unique unital $^*$\=/homomorphism $\Phi_{\mu\mred} \colon \A_{\mu\mred} \to \B$ compatible with Poisson brackets
  such that $\Phi = \Phi_{\mu\mred} \circ [\argument]_\mu$. Moreover, every state $\rho$ on $\B$ can be pulled back to a state
  $\rho \circ \Phi = \rho \circ \Phi_{\mu\mred} \circ [\argument]_\mu$ on $\A^{\lie g}$, and property~\refitem{item:reduction:states}
  then implies that $\rho \circ \Phi_{\mu\mred}$ is a state on $\A_{\mu\mred}$. By Proposition~\ref{proposition:positivitycriterium}, this
  means that $\Phi_{\mu\mred}$ is positive, and we conclude that $(\A_{\mu\mred}, [\argument]_\mu)$ fulfils the universal property
  of the $\momentmap$-reduction at $\mu$.
  
  Finally, the quotient representable Poisson $^*$\=/algebra $\A_{\mu\mred} \coloneqq \A^{\lie g} / \mathcal{I}_\mu$ together with the
  canonical projection $[\argument]_\mu \colon \A^{\lie g} \to \A^{\lie g} / \mathcal{I}_\mu$ clearly fulfils the
  first property, and it fulfils the second one by Proposition~\ref{proposition:repPoiQuotient}. So $(\A^{\lie g} / \mathcal{I}_\mu, [\argument]_\mu)$
  is the $\momentmap$-reduction of $\A$ at $\mu$. This also means that for every other realization
  $(\A_{\mu\mred}^\sim, [\argument]_\mu^\sim)$ of the $\momentmap$-reduction at $\mu$ there exist two mutually inverse positive unital $^*$\=/homomorphisms
  $\phi \colon \A_{\mu\mred}^\sim \to \A^{\lie g} / \mathcal{I}_\mu$ and $\psi \colon \A^{\lie g} / \mathcal{I}_\mu \to \A_{\mu\mred}^\sim$
  fulfilling $\phi \circ [\argument]_\mu^\sim = [\argument]_\mu$ and $\psi \circ [\argument]_\mu = [\argument]_\mu^\sim$ and which
  are compatible with Poisson brackets. Using these it is easy to check that $\A_{\mu\mred}^\sim$ and $[\argument]_\mu^\sim$ also
  fulfil properties \refitem{item:reduction:projection} and \refitem{item:reduction:states}.
\end{proof}
In order to determine the reduction of a representable Poisson $^*$\=/algebra it therefore is crucial to determine the representable
Poisson $^*$\=/ideal $\mathcal{I}_\mu$ constructed above. This, however, might be a rather hard task in general without any compatibility
between Poisson bracket and order available.

Finally, we note that in the non-commutative Example \ref{example:comaspoi} the reduction can also 
be characterized via its representations:

\begin{example} \label{example:representationsFactorThroughReduction:1}
  Let $\A$ be a representable Poisson $^*$\=/algebra like in Example~\ref{example:comaspoi}, i.e.~assume that there exists $\hbar \in \RR \setminus \{0\}$
  such that the Poisson bracket on $\A$ fulfils $\poi{a}{b} = (ab-ba)/(\I \hbar)$ for all $a,b\in \A$. Moreover, let $\momentmap \colon \lie g \to \A_\Hermitian$
  be a momentum map for a commutative real Lie algebra $\lie g$ and $\mu \in \lie g^*$, and let 
  $(\A_{\mu\mred}, [\argument]_\mu)$ be the $\momentmap$-reduction
  of $\A$ at $\mu$. Then the Poisson bracket on $\A_{\mu\mred}$ is again derived from the commutator, more precisely,
  \begin{align*}
    \poi[\big]{[a]_\mu}{[b]_\mu} = \big[\poi{a}{b}\big]_\mu = \frac{[ab-ba]_\mu}{\I\hbar} = \frac{[a]_\mu[b]_\mu-[b]_\mu[a]_\mu}{\I\hbar} .
  \end{align*}
  for all $a,b\in \A^{\lie g}$.
  Consider the quotient $^*$\=/algebra $\A^{\lie g} / \genSId{\momentmap-\mu}$ with the quotient order and
  $[\argument] \colon\A^{\lie g} \to \A^{\lie g} / \genSId{\momentmap-\mu}$ the canonical projection, 
  then $\genSId{\momentmap-\mu} \subseteq \ker [\argument]_\mu$ so that
  there exists a unique positive unital $^*$\=/homomorphism $\iota \colon \A^{\lie g} / \genSId{\momentmap-\mu} \to \A_{\mu\mred}$
  fulfilling $\iota \circ [\argument] = [\argument]_\mu$. Now let $\Dom$ be any pre-Hilbert space
  and $\Phi \colon \A^{\lie g} / \genSId{\momentmap-\mu} \to \Adbar(\Dom)$ any positive unital $^*$\=/homomorphism, i.e.~any representation of
  $\A^{\lie g} / \genSId{\momentmap-\mu}$ on $\Dom$. Then $\Phi \circ [\argument]$ is a positive unital $^*$\=/homomorphism from 
  $\A^{\lie g}$ to $\Adbar(\Dom)$ which is compatible with Poisson brackets if $\Adbar(\Dom)$ is also endowed with the commutator bracket
  \eqref{eq:comaspoi}, and $\Phi([\momentmap(\xi)]) = \dupr{\mu}{\xi} \Unit$ holds for all $\xi \in \lie g$ because
  $\momentmap(\xi)-\dupr{\mu}{\xi}\Unit \in \genSId{\momentmap-\mu}$. By definition of the reduction, there exists a unique
  positive unital $^*$\=/homomorphism $(\Phi\circ [\argument])_{\mu\mred} \colon \A_{\mu\mred} \to \Adbar(\Dom)$
  such that $(\Phi\circ [\argument])_{\mu\mred} \circ [\argument]_\mu = \Phi \circ [\argument]$, hence
  $(\Phi\circ [\argument])_{\mu\mred} \circ \iota = \Phi$. In this sense, every representation $\Phi$ of $\A^{\lie g} / \genSId{\momentmap-\mu}$
  factors through $\A_{\mu\mred}$. This property even characterizes $\A_{\mu\mred}$ completely because
  $\A_{\mu\mred}$ admits a faithful representation as discussed under Definition~\ref{definition:repPoiSAlg}.
\end{example}

\subsection{Non-commutative Vanishing Ideals}
There is an important special case in which the representable Poisson $^*$\=/ideal 
$\mathcal{I}_\mu$ of Theorem~\ref{theorem:reduction}
can be described explicitly as a ``non-commutative vanishing ideal''. Before discussing this, 
however, we need some definitions:

\begin{definition}
  Let $\A$ be a representable Poisson $^*$\=/algebra. Then we say that \emph{Poisson-commuting elements in $\A$ commute} if
  $ab = ba$ holds for all $a,b\in \A$ that fulfil $\poi{a}{b} = 0$.
\end{definition}
It is immediately clear that Poisson-commuting elements commute in Example~\ref{example:comaspoi}, where the Poisson bracket is induced by the commutator,
and in Example~\ref{example:Poissonmanifold} of commutative ordered $^*$\=/algebras with an 
arbitrary Poisson bracket. However, there are also more pathological examples which do not have this property:
Take e.g.~any non-commutative ordered $^*$-algebra whose order is induced by its states, and endow it with the zero Poisson bracket.

\begin{definition}
  Let $\A$ be a representable Poisson $^*$\=/algebra and $\momentmap \colon \lie{g} \to \A_\Hermitian$ a momentum map for a commutative real Lie algebra $\lie{g}$.
  Following the notation introduced in Definition~\ref{definition:eigenstates}, 
  the sets of common eigenstates of all $\momentmap(\xi)$, $\xi \in \lie{g}$, with eigenvalues $\dupr{\mu}{\xi}$ will be denoted by
  \begin{align}
    \States_{\momentmap\!,\mu}(\A) \coloneqq \bigcap_{\xi \in \lie g} \States_{\momentmap(\xi), \dupr{\mu}{\xi}}(\A)
    \quad\quad\text{and}\quad\quad
    \States_{\momentmap\!,\mu}(\A^{\lie g}) \coloneqq \bigcap_{\xi \in \lie g} \States_{\momentmap(\xi), \dupr{\mu}{\xi}}(\A^{\lie g})
    \,.
  \end{align}
\end{definition}
If Poisson-commuting elements commute, then the $^*$\=/ideal $\genSId{\momentmap-\mu}{}$ is especially well-behaved:

\begin{proposition} \label{proposition:genIdeal}
  Let $\A$ be a representable Poisson $^*$\=/algebra in which Poisson-commuting elements commute, 
  $\momentmap \colon \lie{g} \to \A_\Hermitian$ a momentum map for a commutative real Lie algebra $\lie{g}$,
  and $\mu \in \lie g^*$. Then the $^*$\=/ideal $\genSId{\momentmap-\mu}{}$ of $\A^{\lie g}$ can explicitly be described as
  \begin{align}
    \genSId{\momentmap-\mu}{}
    =
    \set[\Big]{
      \sum\nolimits_{m=1}^M a_m \,\big(\momentmap(\xi_m) - \dupr{\mu}{\xi_m} \Unit\big)
    }{
      M\in \NN_0;\,
      a_1,\dots,a_M \in \A^{\lie g};\,
      \xi_1,\dots,\xi_M \in \lie g
    }
    \label{eq:genIdeal}
  \end{align}
  and $\genSId{\momentmap-\mu}{}$ automatically is also a Poisson ideal of $\A^{\lie g}$.
  Moreover, any state $\omega$ on $\A^{\lie g}$ fulfils $\genSId{\momentmap-\mu}{} \subseteq \ker \omega$
  if and only if $\omega \in \States_{\momentmap\!,\mu}(\A^{\lie g})$.
\end{proposition}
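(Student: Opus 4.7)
The plan is to prove the explicit formula \eqref{eq:genIdeal} first, from which the Poisson-ideal claim and the eigenstate characterization both follow quickly. The pivotal observation I would establish at the outset is that, under the standing hypothesis, every $a \in \A^{\lie g}$ genuinely commutes with every $\momentmap(\xi)$ in $\A$: by definition $\poi{a}{\momentmap(\xi)} = a \racts \xi = 0$, and the assumption that Poisson-commuting elements in $\A$ commute then yields $a\momentmap(\xi) = \momentmap(\xi) a$. So the Hermitian elements $\momentmap(\xi) - \dupr{\mu}{\xi}\Unit$ are central in $\A^{\lie g}$.

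With this in hand, I would verify that the set $J$ on the right-hand side of \eqref{eq:genIdeal} is a $^*$-ideal of $\A^{\lie g}$ containing the generators: centrality makes the right ideal of $\A^{\lie g}$ generated by the $\momentmap(\xi) - \dupr{\mu}{\xi}\Unit$ automatically two-sided, Hermiticity of these generators (and reality of $\dupr{\mu}{\xi}$) makes it stable under $\argument^*$, and the inclusion $J \subseteq \A^{\lie g}$ follows from a single application of the Leibniz rule \eqref{eq:actpoicomp}, using commutativity of $\lie g$ to kill the term $\poi{\momentmap(\xi)}{\momentmap(\eta)}$. As $J$ is contained in any $^*$-ideal of $\A^{\lie g}$ that contains the generators, equality \eqref{eq:genIdeal} follows. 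The Poisson-ideal property I would then read off by computing $\poi{a}{c(\momentmap(\xi) - \dupr{\mu}{\xi}\Unit)}$ for $a,c \in \A^{\lie g}$ via Leibniz: the derivative falling on $\momentmap(\xi)$ vanishes because $a$ is $\lie g$-invariant, the remaining summand has the form $\poi{a}{c}(\momentmap(\xi) - \dupr{\mu}{\xi}\Unit)$, and $\poi{a}{c} \in \A^{\lie g}$ by Proposition~\ref{proposition:ginv}.

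For the state characterization I plan to invoke Proposition~\ref{proposition:eigenstates} in both directions. If $\omega \in \States_{\momentmap\!,\mu}(\A^{\lie g})$, then applying the multiplicativity identity \refitem{item:eigenstates:mult} to the rewritten generator $a\momentmap(\xi) = \momentmap(\xi)a$ (commutation again using centrality) shows that $\omega$ annihilates every generator in \eqref{eq:genIdeal}, hence all of $\genSId{\momentmap-\mu}{}$. Conversely, $(\momentmap(\xi) - \dupr{\mu}{\xi}\Unit)^2$ is itself a generator of $J$ (take $a_m = \momentmap(\xi) - \dupr{\mu}{\xi}\Unit$, which lies in $\A^{\lie g}$ by Proposition~\ref{proposition:ginv}), so $\genSId{\momentmap-\mu}{} \subseteq \ker \omega$ forces $\dupr{\omega}{(\momentmap(\xi) - \dupr{\mu}{\xi}\Unit)^2} = 0$ for every $\xi \in \lie g$, which by criterion \refitem{item:eigenstates:eigenvalue} means that $\omega$ is a common eigenstate of all $\momentmap(\xi)$ with eigenvalues $\dupr{\mu}{\xi}$.

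The only genuinely non-trivial input is the centrality of the $\momentmap(\xi)$ inside $\A^{\lie g}$; this is what allows one-sided ideals to be two-sided and what makes the clean description \eqref{eq:genIdeal} possible. Without the Poisson-commuting-implies-commuting hypothesis the natural candidate on the right-hand side of \eqref{eq:genIdeal} would fail to be closed under left multiplication and the proposition would break. Once centrality is secured, everything else is routine bookkeeping with the Leibniz rule and Proposition~\ref{proposition:eigenstates}.
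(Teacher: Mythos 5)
Your proposal is correct and follows essentially the same route as the paper: the key input in both is that $\lie g$-invariance plus the Poisson-commuting-implies-commuting hypothesis makes each $\momentmap(\xi)$ central in $\A^{\lie g}$, which turns the one-sided ideal on the right of \eqref{eq:genIdeal} into a $^*$\=/ideal, after which the Poisson-ideal property follows from the Leibniz rule and the state characterization from Proposition~\ref{proposition:eigenstates} exactly as you describe. (Only a cosmetic remark: the set in \eqref{eq:genIdeal} is, as written, the \emph{left} ideal generated by the $\momentmap(\xi)-\dupr{\mu}{\xi}\Unit$, not the right one, but centrality renders the distinction immaterial.)
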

\begin{proof}
  The inclusion ``$\supseteq$'' in \eqref{eq:genIdeal} is clear. Conversely, the right-hand side
  of \eqref{eq:genIdeal} certainly contains $\momentmap(\xi)-\dupr{\mu}{\xi}\Unit$ for all $\xi \in \lie g$,
  and clearly is a linear subspace of $\A^{\lie g}$ and even a left ideal. 
  Now let $a\in \A^{\lie g}$ and $\xi \in \lie g$ be given, then 
  $\poi{a}{\momentmap(\xi)} = a\racts \xi = 0$ implies that $a\,\momentmap(\xi) = \momentmap(\xi)\,a$ because
  Poisson-commuting elements in $\A$ commute by assumption. Therefore
  \begin{align*}
    \big( a \,\big(\momentmap(\xi)-\dupr{\mu}{\xi}\Unit\big) \big)^*
    =
    \big( \big(\momentmap(\xi)-\dupr{\mu}{\xi}\Unit\big)\,a \big)^*
    =
    a^*\big(\momentmap(\xi)-\dupr{\mu}{\xi}\Unit\big)
  \end{align*}
  from which it follows that the right-hand side of \eqref{eq:genIdeal} is also stable under $\argument^*$, hence a $^*$\=/ideal,
  and consequently ``$\subseteq$'' holds in \eqref{eq:genIdeal}. Similarly, for $a,b\in \A^{\lie g}$ and $\xi \in \lie g$ one finds that
  \begin{align*}
    \poi[\big]{b}{ a \,\big(\momentmap(\xi)-\dupr{\mu}{\xi}\Unit\big) }
    &=
    \underbrace{\poi{b}{a}\big(\momentmap(\xi)-\dupr{\mu}{\xi}\Unit\big)}_{\in 
    \genSId{\momentmap-\mu}{}} \!
    {}+
    a \underbrace{\poi{b}{\momentmap(\xi)}}_{ = 0}\!
    {}-
    a\underbrace{\poi{b}{\dupr{\mu}{\xi}\Unit}}_{=0}
  \end{align*}
  because $\poi{b}{\momentmap(\xi)} = b\racts \xi = 0$. It thus follows from
  \eqref{eq:genIdeal} that $\genSId{\momentmap-\mu}{}$ is a Poisson ideal of $\A^{\lie g}$.
  
  Now let a state $\omega$ on $\A^{\lie g}$ be given. If $\genSId{\momentmap-\mu}{} \subseteq \ker \omega$,
  then especially $(\momentmap(\xi)-\dupr{\mu}{\xi}\Unit)^2 \in \ker \omega$ for all $\xi \in \lie g$, so $\omega \in \States_{\momentmap\!,\mu}(\A^{\lie g})$
  by Proposition~\ref{proposition:eigenstates}. Conversely, if $\omega \in \States_{\momentmap\!,\mu}(\A^{\lie g})$,
  then it follows from Proposition~\ref{proposition:eigenstates} that
  \begin{align*}
    \dupr[\big]{\omega}{a\,\big(\momentmap(\xi)-\dupr{\mu}{\xi}\Unit\big)}
    =
    \dupr{\omega}{a\,\momentmap(\xi)} - \dupr{\omega}{a}\dupr{\mu}{\xi}
    =
    \dupr{\omega}{a} \big( \dupr{\omega}{\momentmap(\xi)} - \dupr{\mu}{\xi} \big)
    =
    0
  \end{align*}
  holds for all $a\in \A^{\lie g}$ and all $\xi \in \lie g$, and therefore $\genSId{\momentmap-\mu}{} \subseteq \ker \omega$ by \eqref{eq:genIdeal}.
\end{proof}
However, $\genSId{\momentmap-\mu}{}$ is not necessarily the common kernel of a set of states,
hence in general not a representable Poisson $^*$\=/ideal. In those cases where 
$\genSId{\momentmap-\mu}{}$ is a representable Poisson $^*$\=/ideal, it coincides with 
$\mathcal{I}_\mu$ as an immediate
consequence of the definition of $\mathcal{I}_\mu$ in Theorem~\ref{theorem:reduction}.

As common eigenstates of the momentum map $\momentmap$ with eigenvalues $\mu$ are precisely the states that vanish on
the $^*$\=/ideal generated by $\momentmap-\mu$ by the above Proposition~\ref{proposition:genIdeal}, one might be tempted to
interpret these as a generalization of the evaluation functionals on the levelset $\Levelset_\mu$ of $\momentmap$ at $\mu$ in
the geometric approach to symmetry reduction. This idea leads to:

\begin{definition} \label{definition:regular}
  Let $\A$ be a representable Poisson $^*$\=/algebra in which Poisson-commuting elements commute,
  $\momentmap \colon \lie g \to \A_\Hermitian$ a momentum map for a commutative real Lie algebra $\lie g$, and $\mu \in {\lie g}^*$.
  We write
  \begin{align}
    \mathcal{R}_\mu 
    &\coloneqq \set[\big]{a\in \A^{\lie g}_\Hermitian }{\dupr{\omega}{a} \ge 0 \textup{ for all }\omega \in \States_{\momentmap\!,\mu}(\A^{\lie g})}
    \label{eq:regular:RQm}
  \intertext{and define the \emph{vanishing ideal} of $\momentmap$ at $\mu$ as}
    \mathcal{V}_\mu
    &\coloneqq
    \set[\big]{a\in\A^{\lie g}}{\dupr{\omega}{a} = 0 \textup{ for all }\omega \in \States_{\momentmap\!,\mu}(\A^{\lie g})}
    \punkt
    \label{eq:regular:VIdeal}
  \end{align}
  We say that $\mu$ is \emph{regular} for $\momentmap$ if $\mathcal{V}_\mu$ is a Poisson ideal 
  of $\A^{\lie g}$.
\end{definition}
Note that $\mathcal{V}_\mu = \big(\mathcal{R}_\mu \cap (-\mathcal{R}_\mu)\big) + \I \big(\mathcal{R}_\mu \cap (-\mathcal{R}_\mu)\big)$.
In the special case of regular momenta, the ideal $\mathcal{I}_\mu$, which is the key to the $\momentmap$-reduction at $\mu$,
coincides with this vanishing ideal $\mathcal{V}_\mu$:

\begin{proposition} \label{proposition:vanishing}
  Let $\A$ be a representable Poisson $^*$\=/algebra in which Poisson-commuting elements commute,
  $\momentmap \colon \lie g \to \A_\Hermitian$ a momentum map for a commutative real Lie algebra $\lie g$, and $\mu \in {\lie g}^*$.
  Then $\mathcal{R}_\mu$ is a quadratic module of $\A^{\lie g}$ and $\mathcal{V}_\mu$ is a $^*$\=/ideal of $\A^{\lie g}$.
  Moreover, let $\mathcal{I}_\mu$ be the representable Poisson $^*$\=/ideal defined in Theorem~\ref{theorem:reduction}.
  Then $\genSId{\momentmap-\mu}{} \subseteq \mathcal{V}_\mu \subseteq \mathcal{I}_\mu$ holds, and
  if $\mu$ additionally is regular for $\momentmap$, then even $\mathcal{V}_\mu = \mathcal{I}_\mu$.
\end{proposition}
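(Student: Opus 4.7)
The plan is to reduce everything to Proposition~\ref{proposition:inducedOrder} applied to the set $S \coloneqq \States_{\momentmap\!,\mu}(\A^{\lie g})$ of common eigenstates on $\A^{\lie g}$, together with the ideal-theoretic characterizations already available from Proposition~\ref{proposition:genIdeal} and the definition of $\mathcal{I}_\mu$ in Theorem~\ref{theorem:reduction}.

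First I would check that $S$ is stable under the normalized conjugation action, i.e.\ that $a \acts \omega \in S$ whenever $\omega \in S$, $a \in \A^{\lie g}$ and $\dupr{\omega}{a^*a} = 1$. This is the one genuine computation and is the main obstacle. Setting $X_\xi \coloneqq \momentmap(\xi) - \dupr{\mu}{\xi}\Unit$, the crucial observation is that $a, a^* \in \A^{\lie g}$ Poisson-commute with every $\momentmap(\xi)$, so by hypothesis $a$ and $a^*$ actually commute with $X_\xi$. Hence $\dupr{a\acts\omega}{X_\xi^2} = \dupr{\omega}{X_\xi^2 a^*a}$, and applying Proposition~\ref{proposition:eigenstates}.\refitem{item:eigenstates:mult} to the eigenstate $\omega$ of $\momentmap(\xi)$ (which has $\dupr{\omega}{X_\xi} = 0$) makes this vanish. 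Together with the normalization $\dupr{a\acts\omega}{\Unit} = 1$, this shows that $a \acts \omega$ is again a common eigenstate at $\mu$. Proposition~\ref{proposition:inducedOrder} then yields that $\mathcal{R}_\mu$ is a quadratic module on $\A^{\lie g}$ and that the associated two-sided kernel $\mathcal{V}_\mu$ is a $^*$-ideal.

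The inclusion $\genSId{\momentmap-\mu}{} \subseteq \mathcal{V}_\mu$ is now immediate from the second half of Proposition~\ref{proposition:genIdeal}, which identifies $\States_{\momentmap\!,\mu}(\A^{\lie g})$ precisely with those states whose kernel contains $\genSId{\momentmap-\mu}{}$. For $\mathcal{V}_\mu \subseteq \mathcal{I}_\mu$, I would argue that every representable Poisson $^*$-ideal $\mathcal{I}$ of $\A^{\lie g}$ with $\genSId{\momentmap-\mu}{} \subseteq \mathcal{I}$ is, by the very definition of that notion, the intersection of the kernels of those states $\omega \in \States(\A^{\lie g})$ satisfying $\mathcal{I} \subseteq \ker\omega$; each such $\omega$ then also annihilates $\genSId{\momentmap-\mu}{}$ and hence lies in $\States_{\momentmap\!,\mu}(\A^{\lie g})$. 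Consequently $\mathcal{V}_\mu$ is contained in every such $\mathcal{I}$, and therefore in their intersection $\mathcal{I}_\mu$.

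Finally, under regularity $\mathcal{V}_\mu$ is in addition a Poisson ideal; combined with being a $^*$-ideal and being by construction the common kernel of a set of states (namely $\States_{\momentmap\!,\mu}(\A^{\lie g})$), it qualifies as a representable Poisson $^*$-ideal containing $\genSId{\momentmap-\mu}{}$. So it is itself one of the ideals appearing in the intersection \eqref{eq:Imu}, forcing $\mathcal{I}_\mu \subseteq \mathcal{V}_\mu$ and hence equality. Apart from the stability computation at the start, everything reduces to bookkeeping with definitions; the crucial leverage comes entirely from the hypothesis that Poisson-commuting elements commute, which is what allows $a$ and $a^*$ to be freely reordered past the components of the momentum map inside $\omega$.
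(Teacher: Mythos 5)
Your proposal is correct and follows essentially the same route as the paper: establish that $\States_{\momentmap\!,\mu}(\A^{\lie g})$ is stable under the normalized conjugation action so that Proposition~\ref{proposition:inducedOrder} applies, deduce $\genSId{\momentmap-\mu}{} \subseteq \mathcal{V}_\mu$ from Proposition~\ref{proposition:genIdeal}, get $\mathcal{V}_\mu \subseteq \mathcal{I}_\mu$ from the defining property of representable Poisson $^*$\=/ideals, and use regularity to make $\mathcal{V}_\mu$ itself one of the ideals in the intersection. The only cosmetic differences are that you verify the stability by a direct commutation computation where the paper instead notes $a^*(\momentmap(\xi)-\dupr{\mu}{\xi}\Unit)^2a \in \genSId{\momentmap-\mu}{} \subseteq \ker\omega$ (both hinge on the same hypothesis via Proposition~\ref{proposition:genIdeal}), and that you argue the second inclusion ideal-by-ideal rather than using that $\mathcal{I}_\mu$ is itself a representable Poisson $^*$\=/ideal.
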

\begin{proof}
  Given $\omega \in \States_{\momentmap\!,\mu}(\A^{\lie g})$ and $a\in \A^{\lie g}$ with $\dupr{\omega}{a^*a} = 1$, then $a\acts \omega$ clearly is a state on $\A^{\lie g}$.
  Moreover, $a^*( \momentmap(\xi) - \dupr{\mu}{\xi} \Unit )^2a \in \genSId{\momentmap-\mu}{}$ holds for all $\xi \in \lie g$,
  and as $\genSId{\momentmap-\mu}{} \subseteq \ker \omega$ by Proposition~\ref{proposition:genIdeal}, this implies
  $\dupr{a\acts \omega}{ ( \momentmap(\xi) - \dupr{\mu}{\xi} \Unit )^2 } = 0$ for all $\xi \in \lie g$, i.e.~$a\acts \omega \in \States_{\momentmap\!,\mu}(\A^{\lie g})$.
  Proposition~\ref{proposition:inducedOrder} now applies to $\A^{\lie g}$ and $S \coloneqq \States_{\momentmap\!,\mu}(\A^{\lie g})$
  and shows that $\mathcal{R}_\mu$ and $\mathcal{V}_\mu$ are a quadratic module and a $^*$\=/ideal of $\A^{\lie g}$, respectively.
  From $\genSId{\momentmap-\mu}{} \subseteq \ker \omega$ for all $\omega \in \States_{\momentmap\!,\mu}(\A^{\lie g})$
  it also follows that $\genSId{\momentmap-\mu}{} \subseteq \mathcal{V}_\mu$.
  
  Now given $a \in \A^{\lie g} \setminus \mathcal{I}_\mu$, then there exists a state $\omega$ on $\A^{\lie g}$
  with $\mathcal{I}_\mu \subseteq \ker \omega$ such that $\dupr{\omega}{a} \neq 0$ because $\mathcal{I}_\mu$
  is a representable Poisson $^*$\=/ideal of $\A^{\lie g}$.
  As $\genSId{\momentmap-\mu}{} \subseteq \mathcal{I}_\mu$ by definition of $\mathcal{I}_\mu$, this implies that
  $\omega \in \States_{\momentmap\!,\mu}(\A^{\lie g})$ by Proposition~\ref{proposition:genIdeal} again.
  But now it follows from $\dupr{\omega}{a} \neq 0$ that $a \notin \mathcal{V}_\mu$, and we conclude that $\mathcal{V}_\mu \subseteq \mathcal{I}_\mu$.
  
  Finally $\mathcal{V}_\mu$ by definition is the common kernel of a set of states on $\A^{\lie g}$, and if $\mu$ is regular for $\momentmap$,
  then $\mathcal{V}_\mu$ also is a Poisson ideal, hence a representable Poisson $^*$\=/ideal of $\A^{\lie g}$.
  It then follows from $\genSId{\momentmap-\mu}{} \subseteq \mathcal{V}_\mu$ that $\mathcal{I}_\mu \subseteq \mathcal{V}_\mu$,
  hence $\mathcal{V}_\mu = \mathcal{I}_\mu$.
\end{proof}
The most obvious case of a momentum $\mu$ that is regular for a momentum map $\momentmap$
is the one of Example~\ref{example:comaspoi} where the Poisson bracket is derived from the commutator,
because in this case, every $^*$\=/ideal is a Poisson ideal.
In Section~\ref{sec:PoiMan} we will discuss Poisson manifolds, and we show that all momenta are regular for the momentum map in this case, too.

\begin{corollary} \label{corollary:vanishing}
  Let $\A$ be a representable Poisson $^*$\=/algebra in which Poisson-commuting elements commute,
  $\momentmap \colon \lie g \to \A_\Hermitian$ a momentum map for a commutative real Lie algebra $\lie g$,
  and $\mu \in {\lie g}^*$ regular for $\momentmap$.
  Moreover, let $\omega$ be a state on $\A^{\lie g}$, then the equivalences
  \begin{align}
    \mathcal{I}_\mu \subseteq \ker \omega
    \quad\Longleftrightarrow\quad
    \mathcal{V}_\mu \subseteq \ker \omega
    \quad\Longleftrightarrow\quad
    \genSId{\momentmap-\mu} \subseteq \ker \omega
    \quad\Longleftrightarrow\quad
    \omega \in \States_{\momentmap\!,\mu}(\A^{\lie g})
  \end{align}
  hold.
\end{corollary}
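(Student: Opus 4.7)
The plan is to establish the four-fold equivalence by chaining together results that are already available: Proposition~\ref{proposition:vanishing} identifies $\mathcal{V}_\mu$ with $\mathcal{I}_\mu$ under regularity and provides the inclusion $\genSId{\momentmap-\mu}{} \subseteq \mathcal{V}_\mu$, Proposition~\ref{proposition:genIdeal} connects the $^*$\=/ideal $\genSId{\momentmap-\mu}{}$ to the common eigenstates, and the defining equation~\eqref{eq:regular:VIdeal} of $\mathcal{V}_\mu$ closes the loop.

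Concretely, I would first dispose of $\mathcal{I}_\mu \subseteq \ker\omega \iff \mathcal{V}_\mu \subseteq \ker\omega$ by citing the regularity of $\mu$: Proposition~\ref{proposition:vanishing} then yields $\mathcal{V}_\mu = \mathcal{I}_\mu$, making this equivalence a tautology. Next, I would handle $\genSId{\momentmap-\mu} \subseteq \ker\omega \iff \omega \in \States_{\momentmap\!,\mu}(\A^{\lie g})$, which is the last assertion of Proposition~\ref{proposition:genIdeal} and needs no further work.

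It remains to bridge the middle, i.e.\ to show $\mathcal{V}_\mu \subseteq \ker\omega \iff \genSId{\momentmap-\mu} \subseteq \ker\omega$. The direction ``$\Rightarrow$'' is immediate because Proposition~\ref{proposition:vanishing} already gives $\genSId{\momentmap-\mu}{} \subseteq \mathcal{V}_\mu$, so any state vanishing on the larger ideal vanishes on the smaller one. For ``$\Leftarrow$'', I would argue as follows: assuming $\genSId{\momentmap-\mu} \subseteq \ker\omega$, Proposition~\ref{proposition:genIdeal} places $\omega$ in $\States_{\momentmap\!,\mu}(\A^{\lie g})$, and then the very definition of $\mathcal{V}_\mu$ in \eqref{eq:regular:VIdeal} forces $\dupr{\omega}{a} = 0$ for every $a \in \mathcal{V}_\mu$, i.e.\ $\mathcal{V}_\mu \subseteq \ker\omega$.

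There is no genuine obstacle in this corollary: it is a bookkeeping consequence of two substantive results already established, and the only thing to verify carefully is that the assumption of regularity is used only where needed (namely in identifying $\mathcal{V}_\mu$ with $\mathcal{I}_\mu$), while the other two equivalences hold already under the standing hypothesis that Poisson-commuting elements commute.
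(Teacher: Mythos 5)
Your proposal is correct and follows essentially the same route as the paper's proof: the first equivalence via $\mathcal{V}_\mu = \mathcal{I}_\mu$ from Proposition~\ref{proposition:vanishing} (using regularity), the last via Proposition~\ref{proposition:genIdeal}, and the loop closed by the inclusion $\genSId{\momentmap-\mu}{} \subseteq \mathcal{V}_\mu$ in one direction and the definition \eqref{eq:regular:VIdeal} of $\mathcal{V}_\mu$ in the other. Your closing remark about where regularity is actually needed is a correct and harmless addition.
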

\begin{proof}
  The previous Proposition~\ref{proposition:vanishing} shows that 
  $\mathcal{I}_\mu \subseteq \ker \omega\,\Longleftrightarrow\,
  \mathcal{V}_\mu \subseteq \ker \omega
  \,\Longrightarrow\,
  \genSId{\momentmap-\mu} \subseteq \ker \omega$  
  holds. 
  The last equivalence holds by Proposition~\ref{proposition:genIdeal}.
  If $\omega \in \States_{\momentmap\!,\mu}(\A^{\lie g})$, then $\mathcal{V}_\mu \subseteq \ker 
  \omega$ by definition of $\mathcal{V}_\mu$.
\end{proof}
The key to understanding the reduced algebra is to understand the common eigenstates of the momentum map,
which determine the quadratic module $\mathcal{R}_\mu$ and the non-commutative vanishing ideal $\mathcal{V}_\mu$.
This way, we can simplify the characterization of the reduced algebra from Theorem~\ref{theorem:reduction}:

\begin{theorem} \label{theorem:vanishing:neu}
	Let $\A$ be a representable Poisson $^*$\=/algebra in which Poisson-commuting elements commute,
	$\momentmap \colon \lie g \to \A_{\Hermitian}$ a momentum map for a commutative real Lie 
	algebra $\lie g$, and $\mu \in \lie g^*$ regular for $\momentmap$. Moreover, let
	$\A_{\mu\mred}$ be any representable Poisson $^*$\=/algebra and $[\argument]_\mu \colon \A^{\lie g} \to \A_{\mu\mred}$
	any positive unital $^*$\=/homomorphism compatible with Poisson brackets, then $(\A_{\mu\mred}, [\argument]_\mu)$ is the $\momentmap$-reduction
	of $\A$ at $\mu$ if and only if the following two conditions are fulfilled:
	\begin{enumerate}
	  \item $[\argument]_\mu$ is surjective and $\ker [\argument]_\mu = \mathcal{V}_\mu$.
	  \label{item:theorem:vanishing:1}
	  \item Whenever an element $a\in \A^{\lie g}_\Hermitian$ fulfils $[a]_\mu \in (\A_{\mu\mred})^+_\Hermitian$, then $a\in \mathcal{R}_\mu$.
	  \label{item:theorem:vanishing:2}
	\end{enumerate}
	Finally, if $(\A_{\mu\mred}, [\argument]_\mu)$ is the $\momentmap$-reduction of $\A$ at $\mu$, then
	$\mathcal{R}_\mu = \set[\big]{a\in\A^{\lie g}_\Hermitian}{[a]_\mu \in (\A_{\mu\mred})^+_\Hermitian}$.
\end{theorem}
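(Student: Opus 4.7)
The plan is to reduce the statement to Theorem~\ref{theorem:reduction} by using that regularity of $\mu$ gives $\mathcal{V}_\mu = \mathcal{I}_\mu$ via Proposition~\ref{proposition:vanishing}. Thus condition~\refitem{item:theorem:vanishing:1} here is literally condition~\refitem{item:reduction:projection} of Theorem~\ref{theorem:reduction}, and the only task is to show that, under this shared first condition, condition~\refitem{item:theorem:vanishing:2} is equivalent to the state-extension condition~\refitem{item:reduction:states}.

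For the direction ``\refitem{item:reduction:states}$\,\Rightarrow\,$\refitem{item:theorem:vanishing:2}'', I would fix $a \in \A^{\lie g}_\Hermitian$ with $[a]_\mu \in (\A_{\mu\mred})^+_\Hermitian$ and an arbitrary $\omega \in \States_{\momentmap\!,\mu}(\A^{\lie g})$. By Corollary~\ref{corollary:vanishing} one has $\mathcal{V}_\mu \subseteq \ker \omega$, so under~\refitem{item:theorem:vanishing:1} also $\ker [\argument]_\mu \subseteq \ker \omega$; condition~\refitem{item:reduction:states} then makes the induced $\check\omega$ a state on $\A_{\mu\mred}$, yielding $\dupr{\omega}{a} = \dupr{\check\omega}{[a]_\mu} \ge 0$ and hence $a \in \mathcal{R}_\mu$. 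For the converse ``\refitem{item:theorem:vanishing:2}$\,\Rightarrow\,$\refitem{item:reduction:states}'', I would take $\omega \in \States(\A^{\lie g})$ with $\ker [\argument]_\mu \subseteq \ker \omega$ and any $b \in (\A_{\mu\mred})^+_\Hermitian$, lift $b$ to a Hermitian preimage $a \in \A^{\lie g}_\Hermitian$ using surjectivity in~\refitem{item:theorem:vanishing:1} together with the $^*$-homomorphism property of $[\argument]_\mu$ to symmetrize an arbitrary preimage, and invoke Corollary~\ref{corollary:vanishing} once more to see that $\omega \in \States_{\momentmap\!,\mu}(\A^{\lie g})$; condition~\refitem{item:theorem:vanishing:2} then gives $\dupr{\check\omega}{b} = \dupr{\omega}{a} \ge 0$, which shows that $\check\omega$ is a state on $\A_{\mu\mred}$.

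For the final identification $\mathcal{R}_\mu = \set{a \in \A^{\lie g}_\Hermitian}{[a]_\mu \in (\A_{\mu\mred})^+_\Hermitian}$, once $(\A_{\mu\mred}, [\argument]_\mu)$ is known to be the $\momentmap$-reduction, the inclusion ``$\supseteq$'' is just~\refitem{item:theorem:vanishing:2}. For ``$\subseteq$'', I would take $a \in \mathcal{R}_\mu$ and exploit that the order on a representable Poisson $^*$\=/algebra is induced by its states: for every $\rho \in \States(\A_{\mu\mred})$, the pullback $\rho \circ [\argument]_\mu$ is a state on $\A^{\lie g}$ whose kernel contains $\mathcal{V}_\mu = \ker [\argument]_\mu$, so by Corollary~\ref{corollary:vanishing} it lies in $\States_{\momentmap\!,\mu}(\A^{\lie g})$; evaluating against $a \in \mathcal{R}_\mu$ then forces $\dupr{\rho}{[a]_\mu} \ge 0$ for all such $\rho$ and thus $[a]_\mu \in (\A_{\mu\mred})^+_\Hermitian$.

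The only mildly subtle point I anticipate is the Hermitian lift in the ``\refitem{item:theorem:vanishing:2}$\,\Rightarrow\,$\refitem{item:reduction:states}'' step; apart from that, the argument is essentially a repackaging of Corollary~\ref{corollary:vanishing} together with the fact that orders on representable Poisson $^*$\=/algebras are induced by their states, so that positivity on $\A_{\mu\mred}$ can always be tested by pulling back states to common eigenstates of $\momentmap$ on $\A^{\lie g}$.
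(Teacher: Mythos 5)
Your proposal is correct and follows essentially the same route as the paper's own proof: reduce to Theorem~\ref{theorem:reduction} via $\mathcal{V}_\mu = \mathcal{I}_\mu$ from Proposition~\ref{proposition:vanishing}, translate the state-extension condition into condition~\textit{ii.)} using Corollary~\ref{corollary:vanishing}, and obtain the final identity by pulling back states on $\A_{\mu\mred}$ to common eigenstates on $\A^{\lie g}$. The Hermitian lift you flag as the subtle point is handled correctly (symmetrizing an arbitrary preimage works since $[\argument]_\mu$ is a $^*$\=/homomorphism), and is in fact made more explicit in your write-up than in the paper's.
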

\begin{proof}
  The two conditions given here  for $(\A_{\mu\mred}, [\argument]_\mu)$ being the $\momentmap$-reduction of $\A$ at $\mu$
  are equivalent to those from Theorem~\ref{theorem:reduction}:
  
  By Proposition~\ref{proposition:vanishing}, 
  $\mathcal{I}_\mu = \mathcal{V}_\mu$ holds so that the first condition here and in Theorem~\ref{theorem:reduction} are equivalent.
  Now assume that $[\argument]_\mu\colon \A^{\lie g} \to \A_{\mu\mred}$ is surjective with $\ker [\argument]_\mu = \mathcal{V}_\mu = \mathcal{I}_\mu$.
  Given a state $\omega$ on $\A^{\lie g}$, then Corollary~\ref{corollary:vanishing} shows that
  $\mathcal{I}_\mu \subseteq \ker \omega$ if and only if $\omega \in \States_{\momentmap\!,\mu}(\A^{\lie g})$.
  In this case, write $\check{\omega} \colon \A_{\mu\mred} \to \CC$ for the unique algebraic state on $\A_{\mu\mred}$
  that fulfils $\check{\omega}\circ [\argument]_\mu = \omega$. On the one hand, if every
  element $a\in \A^{\lie g}_\Hermitian$ with $[a]_\mu \in (\A_{\mu\mred})^+_\Hermitian$ fulfils $a\in \mathcal{R}_\mu$,
  then $\check\omega$ is positive on $\A_{\mu\mred}$ for every $\omega \in \States_{\momentmap\!,\mu}(\A^{\lie g})$, because
  $\dupr{\check\omega}{[a]_\mu} = \dupr{\omega}{a} \ge 0$ for all $a\in \mathcal{R}_\mu$, 
  and in particular for all $a\in \A^{\lie g}_\Hermitian$ with $[a]_\mu \in (\A_{\mu\mred})^+_\Hermitian$.
  On the other hand, if $\check\omega$ for every $\omega \in \States_{\momentmap\!,\mu}(\A^{\lie g})$ is positive on $\A_{\mu\mred}$,
  then every $a\in \A^{\lie g}_\Hermitian$ with $[a]_\mu \in (\A_{\mu\mred})^+_\Hermitian$ fulfils $a\in \mathcal{R}_\mu$
  because $\dupr{\omega}{a} = \dupr{\check\omega}{[a]_\mu} \ge 0$ for all $\omega \in \States_{\momentmap\!,\mu}(\A^{\lie g})$.
  
  Finally, if $(\A_{\mu\mred}, [\argument]_\mu)$ is the $\momentmap$-reduction of $\A$ at $\mu$, then
	$\mathcal{R}_\mu \supseteq \set{a\in\A^{\lie g}_\Hermitian}{[a]_\mu \in (\A_{\mu\mred})^+_\Hermitian}$
	by the second condition above. Conversely,
	as $[\argument]_\mu \colon \A^{\lie g} \to \A_{\mu\mred}$ is a positive unital $^*$\=/homomorphism,
	every $\check\omega \in \States(\A_{\mu\mred})$ can be pulled back to a state $\omega \coloneqq \check\omega \circ [\argument]_\mu$
	on $\A^{\lie g}$, and even $\omega\in\States_{\momentmap\!,\mu}(\A^{\lie g})$ by Corollary~\ref{corollary:vanishing}.
	So given $a\in \mathcal{R}_\mu$, then $\dupr{\check\omega}{[a]_\mu} = \dupr{\omega}{a} \ge 0$ for all $\check\omega \in \States(\A_{\mu\mred})$,
	which shows that $[a]_\mu \in (\A_{\mu\mred})^+_\Hermitian$.
\end{proof}
If Poisson-commuting elements commute and if $\mu\in\lie g^*$ is regular for $\momentmap$ so that $\mathcal{V}_\mu = \mathcal{I}_\mu$
by Proposition~\ref{proposition:vanishing}, then Theorems~\ref{theorem:reduction} and \ref{theorem:vanishing:neu} especially show that the
$\momentmap$-reduction of a representable Poisson $^*$\=/algebra $\A$ at $\mu$ can be constructed
as the quotient $^*$\=/algebra $\A_{\mu\mred} \coloneqq \A^{\lie g} / \mathcal{V}_\mu$ together with the canonical projection
$[\argument]_\mu \colon \A^{\lie g} \to \A_{\mu\mred}$ onto the quotient, and endowed with the order whose quadratic module of positive Hermitian elements
is $(\A_{\mu\mred})^+_\Hermitian \coloneqq \set{[a]_\mu}{a\in \mathcal{R}_\mu}$ and with the Poisson bracket on $\A_{\mu\mred}$
that is defined as $\poi{[a]_\mu}{[b]_\mu} \coloneqq [\poi{a}{b}]_\mu$ for all $a,b\in \A^{\lie g}$.

As an application we continue the discussion of representations as operators from Example~\ref{example:representationsFactorThroughReduction:1}:
\begin{example}
	Let $\A$ be a representable Poisson $^*$\=/algebra in which Poisson-commuting elements commute,
	$\momentmap \colon \lie g \to \A_\Hermitian$ a momentum map for a commutative real Lie algebra 
	$\lie g$,	and $\mu \in {\lie g}^*$.
	Like in Example \ref{example:representationsFactorThroughReduction:1},
	consider again a representation $\Phi$ of the quotient $^*$\=/algebra $\A^{\lie g} / 
	\genSId{\momentmap-\mu}{}$ on a pre-Hilbert space $\Dom$, i.e.\ a positive unital 
	$^*$\=/homomorphism $\Phi\colon \A^{\lie g} / \genSId{\momentmap-\mu}{} \to \Adbar(\Dom)$.
	Then the pullback of any vector state on $\Adbar(\Dom)$ with $\Phi$ is a common eigenstate of 
	$\momentmap$ with eigenvalues $\mu$ as a consequence of Proposition~\ref{proposition:genIdeal}.
	Therefore, disregarding Poisson brackets,
	$\Phi$ factors through the quotient $^*$\=/algebra $\A^{\lie g} / \mathcal V_\mu$,
	which coincides with $\A_{\mu\mred}$ for regular $\mu$.
\end{example}

\subsection{Reduction of States} \label{sec:reducedStates}
We are finally in the position to discuss the reduction of states:
\begin{definition} \label{definition:reducedStates}
  Let $\A$ be a representable Poisson $^*$\=/algebra, $\momentmap \colon \lie{g} \to \A_\Hermitian$ 
  a momentum map for a commutative real 
  Lie algebra $\lie{g}$, and $\mu \in {\lie g}^*$. Moreover, let $(\A_{\mu\mred}, [\argument]_\mu)$ 
  be the $\momentmap$-reduction of $\A$ at $\mu$.
  We say that a state $\omega$ on $\A$ is \emph{$\momentmap$-reducible at $\mu$}
  if there exists a (necessarily unique) state $\omega_{\mu\mred}$ on $\A_{\mu\mred}$ fulfilling 
  $\dupr{\omega}{a} = \dupr{\omega_{\mu\mred}}{[a]_\mu}$ for all $a\in \A^{\lie g}$. In this case
  $\omega_{\mu\mred}$ will be called the \emph{$\momentmap$-reduction of $\omega$ at $\mu$}.
\end{definition}
Note that this definition is independent of the realization of the reduction, 
i.e.~every construction
that fulfils the universal property from Definition~\ref{definition:reduction} leads to the same notion
of reducibility of states, because all realizations of the reduced algebra are isomorphic.

Theorem~\ref{theorem:reduction} already shows how states on the reduced representable Poisson 
$^*$\=/algebra $\A_{\mu\mred}$
are related to those on $\A^{\lie g}$. Moreover, there are many cases in which, for geometric reasons, all states on $\A^{\lie g}$
can be obtained by restricting states on $\A$ to $\A^{\lie g}$:

\begin{definition} \label{definition:averaging}
  Let $\A$ be a representable Poisson $^*$\=/algebra and 
  $\momentmap \colon \lie{g} \to \A_\Hermitian$ 
  a momentum map for a commutative real Lie algebra $\lie{g}$.
  An \emph{averaging operator} for the induced action of $\lie g$ on $\A$ is a linear map 
  $\argument_{\av} \colon \A \to \A^{\lie g}$ which is a projection onto $\A^{\lie g}$, Hermitian, 
  and positive,
  i.e.~$a_\av = a$ for all $a\in \A^{\lie g}$, $a_\av \in \A^{\lie g}_\Hermitian$ for all $a\in \A_\Hermitian$, and $a_\av \in (A^{\lie g})^+_\Hermitian$
  for all $a\in \A^+_\Hermitian$.
\end{definition}
Note that necessarily $\Unit \in \A^{\lie g}$, so $\Unit_\av = \Unit$.
If the action of $\lie g$ is obtained from differentiating the action of a connected Lie group ${G}$
that is compatible with $\argument^*$ and order-preserving, then an averaging operator can oftentimes be constructed by averaging over
the action of ${G}$. Two examples of this will be discussed in Sections~\ref{sec:PoiMan} and \ref{sec:polynomials}.

Averaging operators, if they exist, are a quite useful tool because they clearly allow the extension of states on a $^*$\=/subalgebra
of invariant elements to the whole space:

\begin{proposition} \label{proposition:avapplication}
  Let $\A$ be a representable Poisson $^*$\=/algebra and $\momentmap \colon \lie{g} \to \A_\Hermitian$ a momentum map for a commutative real Lie algebra $\lie{g}$.
  Denote by $\argument\at{\A^{\lie g}} \colon \A^* \to (\A^{\lie g})^*$ the restriction of linear functionals on $\A$ to $\A^{\lie g}$.
  Then $\omega \at{\A^{\lie g}} \in \States_{\momentmap\!,\mu}(\A^{\lie g})$ for all $\omega \in \States_{\momentmap\!,\mu}(\A)$,
  and conversely, if there exists an averaging operator $\argument_{\av} \colon \A \to \A^{\lie g}$, then
  for every $\tilde{\omega} \in \States_{\momentmap\!,\mu}(\A^{\lie g})$ there exists $\omega \in 
  \States_{\momentmap\!,\mu}(\A)$
  with $\omega\at{\A^{\lie g}} = \tilde{\omega}$, e.g.~$\omega \coloneqq \tilde{\omega} \circ 
  \argument_\av$.
\end{proposition}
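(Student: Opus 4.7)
The plan is to handle the two assertions separately, since each follows quickly once the relevant maps are unpacked. Throughout, I will use that $\momentmap(\xi) \in \A^{\lie g}$ for every $\xi \in \lie g$ (by Proposition~\ref{proposition:ginv}, since $\lie g$ is commutative), that $\Unit \in \A^{\lie g}$, and the eigenstate characterization from Proposition~\ref{proposition:eigenstates}, according to which $\omega \in \States_{\momentmap\!,\mu}(\A)$ is equivalent to $\dupr{\omega}{(\momentmap(\xi)-\dupr{\mu}{\xi}\Unit)^2} = 0$ for all $\xi \in \lie g$.

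For the first claim, I would note that the restriction $\omega\at{\A^{\lie g}}$ of any state $\omega \in \States(\A)$ is automatically a state on the unital $^*$\=/subalgebra $\A^{\lie g}$: it is Hermitian, normalized (as $\Unit \in \A^{\lie g}$), and positive (as $(\A^{\lie g})^+_\Hermitian \subseteq \A^+_\Hermitian$ under the inherited order). If in addition $\omega \in \States_{\momentmap\!,\mu}(\A)$, then since $(\momentmap(\xi)-\dupr{\mu}{\xi}\Unit)^2 \in \A^{\lie g}$, the eigenstate identity survives the restriction, giving $\omega\at{\A^{\lie g}} \in \States_{\momentmap\!,\mu}(\A^{\lie g})$.

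For the second claim, given $\tilde\omega \in \States_{\momentmap\!,\mu}(\A^{\lie g})$ and an averaging operator $\argument_\av$, I would verify that $\omega \coloneqq \tilde\omega \circ \argument_\av$ has the three required features. Linearity is clear; Hermiticity follows since $\argument_\av$ sends Hermitian elements into $\A^{\lie g}_\Hermitian$ and $\tilde\omega$ is Hermitian; positivity follows from the positivity of $\argument_\av$ and of $\tilde\omega$; and normalization follows from $\Unit_\av = \Unit$ together with $\dupr{\tilde\omega}{\Unit} = 1$. Thus $\omega \in \States(\A)$. Because $\argument_\av$ acts as the identity on $\A^{\lie g}$, the restriction identity $\omega\at{\A^{\lie g}} = \tilde\omega$ is immediate.

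The only point that requires the slightest care is the eigenstate condition $\omega \in \States_{\momentmap\!,\mu}(\A)$. Here I would again exploit that $(\momentmap(\xi)-\dupr{\mu}{\xi}\Unit)^2 \in \A^{\lie g}$, so the averaging projection leaves it fixed, yielding
\begin{equation*}
  \dupr{\omega}{(\momentmap(\xi)-\dupr{\mu}{\xi}\Unit)^2}
  = \dupr{\tilde\omega}{((\momentmap(\xi)-\dupr{\mu}{\xi}\Unit)^2)_\av}
  = \dupr{\tilde\omega}{(\momentmap(\xi)-\dupr{\mu}{\xi}\Unit)^2}
  = 0
\end{equation*}
for every $\xi \in \lie g$. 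There is no substantive obstacle: both directions are essentially bookkeeping, with the role of the averaging operator being simply to provide a positive Hermitian extension, while invariance of the relevant quadratic polynomials in $\momentmap(\xi)$ ensures the eigenstate property transfers correctly between $\A^{\lie g}$ and $\A$.
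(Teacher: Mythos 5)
Your proposal is correct and fills in exactly the routine verifications that the paper's proof (which consists of the single word ``Clear'') leaves to the reader: restriction of a state to the unital $^*$\=/subalgebra $\A^{\lie g}$ is a state, the elements $(\momentmap(\xi)-\dupr{\mu}{\xi}\Unit)^2$ lie in $\A^{\lie g}$ so the eigenstate condition passes back and forth, and composition with the averaging operator produces a positive Hermitian normalized extension fixing $\A^{\lie g}$. There is no divergence from the paper's intended argument.
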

\begin{proof}
  Clear.
\end{proof}
So if there exists an averaging operator $\argument_\av \colon \A \to \A^{\lie g}$, then 
in the definitions of the quadratic module $\mathcal{R}_\mu$ in \eqref{eq:regular:RQm}
and of the non-commutative vanishing ideal $\mathcal{V}_\mu$ in \eqref{eq:regular:VIdeal},
the condition ``$\omega \in \States_{\momentmap\!,\mu}(\A^{\lie g})$'' can be replaced by ``$\omega \in \States_{\momentmap\!,\mu}(\A)$''.
For the reduction of states, this yields:

\begin{theorem} \label{theorem:reducedStates}
  Let $\A$ be a representable Poisson $^*$\=/algebra, $\momentmap \colon \lie{g} \to \A_\Hermitian$ 
  a momentum map for a commutative real 
  Lie algebra $\lie{g}$, and $\mu \in {\lie g}^*$.
  \begin{enumerate}
    \item If a state $\omega \in \States(\A)$ is $\momentmap$-reducible at $\mu$, then $\omega \in \States_{\momentmap\!,\mu}(\A)$.
    \item If there exists an averaging operator $\argument_\av \colon \A \to \A^{\lie g}$,
      then for every $\rho \in \States(\A_{\mu\mred})$ there is a state $\omega \in \States(\A)$ that is
      $\momentmap$-reducible at $\mu$ with $\omega_{\mu\mred} = \rho$ (thus especially $\omega \in \States_{\momentmap,\mu}(\A)$).
      \label{item:reducedStates:lift}
    \item If Poisson-commuting elements of $\A$ commute and if $\mu$ additionally is regular for 
    $\momentmap$,
      then every $\omega \in \States_{\momentmap\!,\mu}(\A)$ is $\momentmap$-reducible at $\mu$.
      \label{item:reducedStates:reducibility}
  \end{enumerate}
\end{theorem}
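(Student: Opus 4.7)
For part \refitem{item:reducedStates:reducibility}... wait, let me address the parts in order.

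\textbf{Part (i).} The plan is to unpack the defining identity of reducibility at $\momentmap(\xi)-\dupr\mu\xi\Unit$. Given an $\momentmap$-reducible state $\omega$ with reduction $\omega_{\mu\mred}$, the element $\bigl(\momentmap(\xi)-\dupr\mu\xi\Unit\bigr)^2$ lies in $\A^{\lie g}$ by Proposition~\ref{proposition:ginv}, so I may apply the defining identity and the fact that $[\argument]_\mu$ is a unital $^*$\=/homomorphism with $[\momentmap(\xi)]_\mu = \dupr\mu\xi\Unit$. This collapses the right-hand side to $\dupr{\omega_{\mu\mred}}{0}=0$, and then Proposition~\ref{proposition:eigenstates} identifies $\omega$ as an eigenstate of $\momentmap(\xi)$ with eigenvalue $\dupr\mu\xi$ for each $\xi\in\lie g$.

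\textbf{Part (ii).} Given $\rho\in\States(\A_{\mu\mred})$, I would simply set $\omega \coloneqq \rho\circ [\argument]_\mu\circ \argument_\av$. Positivity and Hermiticity follow by composing the corresponding properties of $\argument_\av$, of the positive unital $^*$\=/homomorphism $[\argument]_\mu$, and of $\rho$; normalization follows from $\Unit_\av=\Unit$ and $[\Unit]_\mu=\Unit$. On $\A^{\lie g}$, the averaging operator acts as the identity, so $\dupr\omega a = \dupr\rho{[a]_\mu}$ for every $a\in\A^{\lie g}$, which is exactly reducibility with $\omega_{\mu\mred}=\rho$; in particular $\omega\in\States_{\momentmap,\mu}(\A)$ by part \refitem{item:reducedStates:lift}'s companion statement (i).

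\textbf{Part (iii).} This is where the regularity hypothesis enters, and it is the main obstacle. Starting from $\omega\in\States_{\momentmap\!,\mu}(\A)$, I would first restrict to obtain $\omega\at{\A^{\lie g}} \in \States_{\momentmap\!,\mu}(\A^{\lie g})$, which is immediate from Proposition~\ref{proposition:avapplication} (no averaging operator needed for this direction). The task is to show that this restriction vanishes on $\ker [\argument]_\mu = \mathcal{I}_\mu$, since property \refitem{item:reduction:states} of Theorem~\ref{theorem:reduction} will then produce the required state $\omega_{\mu\mred}$ on $\A_{\mu\mred}$ fulfilling $\omega_{\mu\mred}\circ [\argument]_\mu = \omega\at{\A^{\lie g}}$, which is the defining identity of $\momentmap$-reducibility. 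The inclusion $\mathcal{I}_\mu \subseteq \ker\omega\at{\A^{\lie g}}$ is exactly the content of Corollary~\ref{corollary:vanishing}, which is applicable precisely because Poisson-commuting elements commute and $\mu$ is regular for $\momentmap$ (so that $\mathcal{V}_\mu=\mathcal{I}_\mu$ by Proposition~\ref{proposition:vanishing}).

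The conceptual heart of the argument is thus part (iii): the existence of the reduction on the state side is not automatic from the eigenstate condition, because a priori the representable Poisson $^*$\=/ideal $\mathcal{I}_\mu$ may be strictly larger than the $^*$\=/ideal $\genSId{\momentmap-\mu}{}$ annihilated by eigenstates. Regularity is the precise condition that closes this gap, by forcing $\mathcal{V}_\mu$ (which is always annihilated by eigenstates) to be a Poisson ideal and hence to coincide with $\mathcal{I}_\mu$. Parts (i) and (ii) are essentially bookkeeping around the universal property and the averaging operator.
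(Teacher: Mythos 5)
Your proposal is correct and follows essentially the same route as the paper's proof: part (i) via the fact that $(\momentmap(\xi)-\dupr{\mu}{\xi}\Unit)^2$ lies in $\ker[\argument]_\mu$ together with Proposition~\ref{proposition:eigenstates}, part (ii) via the explicit lift $\rho\circ[\argument]_\mu\circ\argument_\av$, and part (iii) via restriction to $\A^{\lie g}$ followed by Corollary~\ref{corollary:vanishing} and condition \refitem{item:reduction:states} of Theorem~\ref{theorem:reduction}. No gaps.
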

\begin{proof}
  Let $(\A_{\mu\mred}, [\argument]_\mu)$ be the $\momentmap$-reduction of $\A$ at $\mu$ constructed 
  in Theorem~\ref{theorem:reduction}.
  
  If for some $\omega \in \States(\A)$ there exists $\omega_{\mu\mred} \in \States(\A_{\mu\mred})$ such that $\dupr{\omega}{a} = \dupr{\omega_{\mu\mred}}{[a]_\mu}$
  holds for all $a\in\A^{\lie g}$, then $\dupr{\omega}{(\momentmap(\xi)-\dupr{\mu}{\xi}\Unit)^2} = 0$ for all $\xi \in \lie g$ because 
  $\ker[\argument]_\mu$ is a $^*$\=/ideal of $\A^{\lie g}$ that contains $\momentmap(\xi)-\dupr{\mu}{\xi}\Unit$ for all $\xi \in \lie g$
  by definition of $[\argument]_\mu$. This proves the first point.
  
  Now let any state $\rho$ on $\States(\A_{\mu\mred})$ be given. If there exists an averaging operator $\argument_\av \colon \A \to \A^{\lie g}$,
  then $\omega \coloneqq \rho \circ [\argument]_\mu \circ \argument_\av \colon \A \to \CC$ is a state on $\A$ that fulfils
  $\dupr{\omega}{a} = \dupr{\rho}{[a_\av]_\mu} = \dupr{\rho}{[a]_\mu}$ for all $a\in\A^{\lie g}$,
  i.e.~$\omega$ is $\momentmap$-reducible at $\mu$ with $\omega_{\mu\mred} = \rho$ and especially $\omega \in \States_{\momentmap\!,\mu}(\A)$
  by the first part. This proves the second point.
  
  For the third point, given $\omega \in \States_{\momentmap\!,\mu}(\A)$, then
  $\omega \at{\A^{\lie g}} \in \States_{\momentmap\!,\mu}(\A^{\lie g})$ and by 
  Theorem~\ref{theorem:reduction} and Corollary~\ref{corollary:vanishing}
  there exists a state $\omega_{\mu\mred}$ on $\A_{\mu\mred}$ fulfilling $\omega_{\mu\mred} \circ [\argument]_\mu = \omega\at{\A^{\lie g}}$.
\end{proof}
For a representable Poisson $^*$\=/algebra $\A$ in which Poisson-commuting elements commute, equipped with a momentum map $\momentmap \colon \lie{g} \to \A_\Hermitian$
that admits an averaging operator, and for regular momenta $\mu \in \lie g^*$, the states on the reduced algebra $\A_{\mu\mred}$
thus are just the reductions of the common eigenstates of $\momentmap$ with eigenvalues given by $\mu$. This matches
the heuristic about the reduction of states discussed in the introduction.

Note also how part~\refitem{item:reducedStates:lift} above makes use of the notion of ordered $^*$\=/algebras: This statement is only true
because the reduced algebra is endowed with an order obtained from the reduction procedure, but would fail if one considers $^*$\=/algebras
endowed always with the algebraic order. We will see examples of this in Section~\ref{sec:polynomials}.

Since many properties of the reduction that were discussed in this section depend on the correct 
choice of the order on the reduced algebra,
it is always desirable to find a clear description of this order, or at least of all its states.
A description as reductions of common eigenstates is already given by the above 
Theorem~\ref{theorem:reducedStates},
but one might still strive for a description independent of the reduction.
This can be seen as a problem of (non-commutative) real algebraic geometry and will be discussed 
further in the following examples.

\section{Reduction of Poisson Manifolds} \label{sec:PoiMan}
The first example to discuss is the reduction of Poisson manifolds, i.e.~of the representable 
Poisson $^*$\=/algebra
$\Smooth(M)$ of smooth $\CC$-valued functions with the pointwise order on a smooth manifold $M$, endowed with a Poisson bracket
which, in this case, can always be obtained from a Poisson tensor.

Note that the pointwise order on $\Smooth(M)$ is in general not the algebraic order:
For example, if 
$f \in \Smooth(M)_\Hermitian$ is (in some local coordinates) a homogeneous polynomial function which cannot be expressed as a sum of squares of polynomial functions,
like the homogeneous Motzkin polynomial, then considering Taylor expansions at $0$ shows that $f$ cannot even be expressed as a sum of squares of smooth functions,
see \cite{bony.broglia.colombini.pernazza:NonnegativeFunctionsAsSumsOfSquares} for details.
Nevertheless, all algebraically positive Hermitian linear functionals on $\Smooth(M)$ are also positive with respect to the pointwise order
because $\sqrt{f+\epsilon \Unit}$ is smooth for every smooth function $f \colon M \to {[0,\infty[}$ and all $\epsilon \in {]0,\infty[}$.

For the rest of this section, we will assume that $M$ is a Poisson manifold
so that $\Smooth(M)$ with the pointwise order
is a representable Poisson $^*$\=/algebra. Moreover, we assume the following:
\begin{itemize}
  \item The Poisson manifold $M$ is endowed with a smooth left action $\argument \acts \argument \colon G \times M \to M$, $(g,x) \mapsto g \acts x$,
    of an abelian connected Lie group $G$, which induces a right action $\argument \racts \argument \colon \Smooth(M) \times G \to \Smooth(M)$
    by pullbacks, i.e.~$(f\racts g)(x) \coloneqq f(g \acts x)$ for all $f\in \Smooth(M)$, $g\in G$, and all $x\in M$.
  \item The right action $\argument \racts \argument \colon \Smooth(M) \times \lie g \to \Smooth(M)$ of the (finite dimensional)
    Lie algebra $\lie g$ of $G$ on $\Smooth(M)$, which one obtains by differentiating the right action of $G$,
    is induced by a momentum map $\momentmap \colon \lie g \to \Smooth(M)_\Hermitian$ as in Definition~\ref{definition:momentmap};
    especially $f\racts \xi = \poi{f}{\momentmap(\xi)}$ for all $f\in\Smooth(M)$, $\xi \in \lie g$.
    Note that we use the same symbol $\racts$ to denote the actions of the Lie group $G$ and of its Lie algebra $\lie g$.
\end{itemize}
We also fix a momentum $\mu \in \lie g^*$ and define the \emph{$\mu$-levelset of $\momentmap$},
\begin{align}
  \Levelset_\mu \coloneqq \set[\big]{x\in M}{\momentmap(\xi)(x) = \dupr{\mu}{\xi} \textup{ for all }\xi\in\lie g}
  .
  \label{eq:levelsetmu}
\end{align}

\begin{proposition} \label{proposition:manifoldVanishing}
  For every $\mu \in \lie g^*$, the quadratic module $\mathcal{R}_\mu$ and the generalized vanishing ideal from Definition~\ref{definition:regular} are
  \begin{align}
    \mathcal{R}_\mu &= \set[\big]{f\in \Smooth(M)^{\lie g}_\Hermitian}{f(x) \ge 0 \textup{ for all } x \in \Levelset_\mu}
    \label{eq:manifoldVanishing:R}
  \shortintertext{and}
    \mathcal{V}_\mu &= \set[\big]{f\in\Smooth(M)^{\lie g}}{f(x) = 0 \textup{ for all } x \in \Levelset_\mu}
    .
    \label{eq:manifoldVanishing:V}
  \end{align}
\end{proposition}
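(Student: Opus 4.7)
The plan is to prove both equalities by showing the two inclusions in each, with the forward inclusions following easily from evaluations at points of $\Levelset_\mu$ and the reverse inclusions being the core of the argument. For the $\subseteq$ directions in both \eqref{eq:manifoldVanishing:R} and \eqref{eq:manifoldVanishing:V}, I observe that for each $x \in \Levelset_\mu$ the restriction $\omega_x \coloneqq \delta_x\at{\Smooth(M)^{\lie g}}$ is a unital $^*$\=/homomorphism into $\CC$, hence a state, and it satisfies $\dupr{\omega_x}{\momentmap(\xi)^2} = \momentmap(\xi)(x)^2 = \dupr{\mu}{\xi}^2 = \abs{\dupr{\omega_x}{\momentmap(\xi)}}^2$, so $\omega_x \in \States_{\momentmap\!,\mu}(\Smooth(M)^{\lie g})$. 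Applying the defining conditions of $\mathcal{R}_\mu$ and $\mathcal{V}_\mu$ to these eigenstates then yields both forward inclusions at once.

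For the $\supseteq$ direction in \eqref{eq:manifoldVanishing:R}, let $f \in \Smooth(M)^{\lie g}_\Hermitian$ be nonnegative on $\Levelset_\mu$ and fix $\epsilon > 0$. The central device is a smooth truncation followed by smooth division. Pick a smooth nondecreasing $\eta_\epsilon \colon \RR \to {[0,\infty[}$ with $\eta_\epsilon(t) = 0$ for $t \le 0$ and $\eta_\epsilon(t) = t$ for $t \ge \epsilon/2$, and put $g_\epsilon \coloneqq \eta_\epsilon(f + \epsilon \Unit)$ and $r_\epsilon \coloneqq (f + \epsilon \Unit) - g_\epsilon$. Then $g_\epsilon \in \Smooth(M)^{\lie g}_\Hermitian$ is pointwise nonnegative on $M$, while $r_\epsilon$ vanishes identically on the open $\lie g$-invariant neighborhood $U \coloneqq \set{x \in M}{f(x) > -\epsilon/2} \supseteq \Levelset_\mu$.

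Now fix a basis $\xi_1, \dots, \xi_n$ of $\lie g$ and set $\phi \coloneqq \sum_{j=1}^n (\momentmap(\xi_j) - \dupr{\mu}{\xi_j}\Unit)^2 \in \Smooth(M)^{\lie g}$, so that $\phi$ is strictly positive on $M \setminus \Levelset_\mu$ and vanishes precisely on $\Levelset_\mu$. Define $h$ on the open cover $M = (M \setminus \Levelset_\mu) \cup U$ by $h \coloneqq r_\epsilon/\phi$ on $M \setminus \Levelset_\mu$ and $h \coloneqq 0$ on $U$; these agree on the overlap $U \setminus \Levelset_\mu$ (where $r_\epsilon = 0$) and are smooth on each piece, making $h$ a smooth $\lie g$-invariant function on all of $M$. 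Hence $r_\epsilon = h\phi = \sum_j \big(h(\momentmap(\xi_j) - \dupr{\mu}{\xi_j}\Unit)\big)(\momentmap(\xi_j) - \dupr{\mu}{\xi_j}\Unit)$ belongs to $\genSId{\momentmap - \mu}{}$ by the explicit description in Proposition~\ref{proposition:genIdeal}, which moreover yields $r_\epsilon \in \ker \omega$ for every $\omega \in \States_{\momentmap\!,\mu}(\Smooth(M)^{\lie g})$. Consequently $\dupr{\omega}{f} + \epsilon = \dupr{\omega}{g_\epsilon} \ge 0$ by positivity of $g_\epsilon$ and of $\omega$, and sending $\epsilon \searrow 0$ proves \eqref{eq:manifoldVanishing:R}.

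The reverse inclusion in \eqref{eq:manifoldVanishing:V} then follows at once by splitting any $f \in \Smooth(M)^{\lie g}$ with $f\at{\Levelset_\mu} = 0$ as $f = f_1 + \I f_2$ with $f_1, f_2 \in \Smooth(M)^{\lie g}_\Hermitian$ both vanishing on $\Levelset_\mu$, applying \eqref{eq:manifoldVanishing:R} to $\pm f_1$ and $\pm f_2$, and concluding $\dupr{\omega}{f} = 0$ for every $\omega \in \States_{\momentmap\!,\mu}(\Smooth(M)^{\lie g})$. The main technical obstacle is the smoothness of $h$ across $\Levelset_\mu$: it requires $r_\epsilon$ to vanish on an \emph{open} neighborhood of $\Levelset_\mu$, not merely on $\Levelset_\mu$ itself, which is precisely why the truncation is built with the cushion $\epsilon/2$ rather than applied directly to $f$.
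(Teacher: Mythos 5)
Your proof is correct and follows essentially the same strategy as the paper's: truncate so that the ``bad'' part vanishes on an open neighbourhood of $\Levelset_\mu$, divide smoothly by the sum of squares $\sum_j(\momentmap(\xi_j)-\dupr{\mu}{\xi_j}\Unit)^2$ to place that part in $\genSId{\momentmap-\mu}{}$, evaluate on eigenstates via Proposition~\ref{proposition:genIdeal}, and pass to a limit. The only (cosmetic) difference is that you shift $f$ by $\epsilon\Unit$ with an $\epsilon$-dependent cutoff and let $\epsilon\searrow 0$, whereas the paper rescales to $kf$ with a fixed cutoff bounded below by $-2$ and lets $k\to\infty$.
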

\begin{proof}
  We start with \eqref{eq:manifoldVanishing:R}. Every evaluation functional $\delta_x \colon \Smooth(M)^{\lie g} \to \CC$,
  $f\mapsto\dupr{\delta_x}{f}\coloneqq f(x)$ with $x\in \Levelset_\mu$ is an element of $\States_{\momentmap\!,\mu}\big(\Smooth(M)^{\lie g}\big)$
  by definition of $\Levelset_\mu$.
  Therefore $f(x)\ge 0$ for all $f\in \mathcal{R}_\mu$, $x\in \Levelset_\mu$, which proves the inclusion ``$\subseteq$''.
  
  Conversely, consider any $f\in \Smooth(M)^{\lie g}_\Hermitian$ fulfilling $f(x) \ge 0$ for all $x\in \Levelset_\mu$.
  Let $\alpha \colon \RR \to {[{-2},\infty[}$ be a smooth function that fulfils $\alpha(y) = y$ for all $y\in {[{-1},\infty[}$ and
  $\alpha(y) = -2$ for all $y\in {]{-\infty}, {-2}]}$, and define the smooth function $\beta \colon \RR \to \RR$, $y \mapsto \beta(y) \coloneqq y- \alpha(y)$.
  Note that $\beta(y) = 0$ for all $y\in {[{-1},\infty[}$.
  Then for all $k \in \NN$ the identity $kf = (\alpha \circ kf) + (\beta \circ kf)$ holds, and $\alpha \circ kf \ge - 2\cdot\Unit$.
  
  Moreover, $\beta \circ kf \in \genSId{\momentmap-\mu}$:
  Indeed, let $\eta_1,\dots,\eta_\ell \in \lie g$ with $\ell\in \NN_0$ be a basis of $\lie g$ (which is finite-dimensional by assumption)
  and define $h \coloneqq \sum_{j=1}^\ell (\momentmap(\eta_j) - \dupr{\mu}{\eta_j} \Unit)^2 \in 
  \genSId{\momentmap-\mu}{}$, then any $x\in M$ fulfils $x\in \Levelset_\mu$ if and only if $h(x) = 0$. Consequently, 
  $\beta \circ kf = h g_k \in \genSId{\momentmap-\mu}$ with $g_k \in \Smooth(M)^{\lie g}_\Hermitian$
  defined by requiring that $g_k (x) = (\beta \circ kf)(x) / h(x)$ for all $x\in M$ with $f(x)<0$ and 
  $g_k(x)=0$ for all $x\in M$ with $f(x) > - 1/k$. Note that $g_k$ is indeed $\lie g$-invariant because $f$ and $h$ are $\lie g$-invariant.
  
  For any $\omega \in \States_{\momentmap\!,\mu}\big( \Smooth(M)^{\lie g} \big)$ we now have $\dupr{\omega}{\beta \circ kf} = 0$
  by Proposition~\ref{proposition:genIdeal}, so
  $\dupr{\omega}{f} = k^{-1} \dupr{\omega}{\alpha \circ kf} + k^{-1}\dupr{\omega}{\beta \circ kf} \ge -2k^{-1}$ for all $k\in \NN$,
  hence $\dupr{\omega}{f} \ge 0$, and therefore $f\in \mathcal{R}_\mu$. This shows that \eqref{eq:manifoldVanishing:R} holds,
  which also implies \eqref{eq:manifoldVanishing:V} because
  $\mathcal{V}_\mu = \big(\mathcal{R}_\mu \cap (-\mathcal{R}_\mu)\big) + \I \big(\mathcal{R}_\mu \cap (-\mathcal{R}_\mu)\big)$.
\end{proof}

\begin{corollary} \label{corollary:manifoldVanishing}
  Every $\mu \in \lie g^*$ is regular for $\momentmap$ in the sense of Definition~\ref{definition:regular}.
\end{corollary}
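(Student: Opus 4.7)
The plan is to use the explicit description of $\mathcal{V}_\mu$ from Proposition~\ref{proposition:manifoldVanishing} to verify directly that $\poi{f}{g} \in \mathcal{V}_\mu$ for all $f \in \Smooth(M)^{\lie g}$ and all $g \in \mathcal{V}_\mu$. That $\poi{f}{g}$ is itself $\lie g$-invariant is already known from Proposition~\ref{proposition:ginv}, so the only remaining content is to show that $\poi{f}{g}$ vanishes pointwise on the levelset $\Levelset_\mu$.

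The central observation is that the Hamiltonian vector field $X_f$ of a $\lie g$-invariant function $f$ is tangent to $\Levelset_\mu$. Indeed, for every $\xi\in\lie g$,
\begin{equation*}
  X_f\big(\momentmap(\xi)\big)
  = \poi[\big]{\momentmap(\xi)}{f}
  = -\poi[\big]{f}{\momentmap(\xi)}
  = -(f\racts \xi)
  = 0\komma
\end{equation*}
so each $\momentmap(\xi)$ is a first integral of the flow of $X_f$. Consequently, for any $x\in\Levelset_\mu$, the integral curve $t\mapsto \phi_t^{X_f}(x)$ remains in $\Levelset_\mu$ for as long as it is defined (no regularity assumption on $\Levelset_\mu$ as a subset of $M$ is required here, the argument is purely along individual integral curves).

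Given any $g\in \mathcal{V}_\mu$, then $g$ vanishes identically on $\Levelset_\mu$ by Proposition~\ref{proposition:manifoldVanishing}, and hence $g\circ \phi_t^{X_f}(x) = 0$ for all $x\in\Levelset_\mu$ and all sufficiently small $t$. Differentiating at $t=0$ yields
\begin{equation*}
  \poi{g}{f}(x) = X_f(g)(x) = 0\komma
\end{equation*}
and therefore $\poi{f}{g}(x) = -\poi{g}{f}(x) = 0$ for every $x\in\Levelset_\mu$. Invoking the characterization \eqref{eq:manifoldVanishing:V} again, $\poi{f}{g}\in\mathcal{V}_\mu$, which is exactly the Poisson ideal property demanded by Definition~\ref{definition:regular}.

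I do not expect a real obstacle in this argument: the flow of $X_f$ always exists locally, and the computation only uses the flow at first order in $t$, so even when $\Levelset_\mu$ is singular or has components of varying dimension the conclusion goes through without modification. The only step that requires the preceding machinery of this section is the identification of $\mathcal{V}_\mu$ with the geometric vanishing ideal on $\Levelset_\mu$, which was the substance of Proposition~\ref{proposition:manifoldVanishing} and relied on the partition-of-unity / smooth cutoff construction there; once this is available, the present corollary is essentially a one-line geometric consequence.
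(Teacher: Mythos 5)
Your proof is correct and is essentially the paper's own argument: both reduce the claim to showing $\poi{f}{g}$ vanishes on $\Levelset_\mu$ via Proposition~\ref{proposition:manifoldVanishing}, observe that the components $\momentmap(\xi)$ are conserved along the Hamiltonian flow of a $\lie g$-invariant $f$ so that integral curves starting in $\Levelset_\mu$ stay there, and then differentiate $g$ along such a curve. The only difference is an immaterial sign convention for the Hamiltonian vector field.
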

\begin{proof}
  We have to check that the $^*$\=/ideal $\mathcal{V}_\mu$ of $\Smooth(M)^{\lie g}$ is a Poisson ideal,
  i.e.~$\poi{f}{g} \in \mathcal{V}_\mu$ for all $f\in\Smooth(M)^{\lie g}$, $g\in \mathcal{V}_\mu$.
  This has been proven in \cite[Thm.~1]{arms.cushman.gotay:universalReductionProcedure}, for convenience of the reader,
  we repeat the relevant part here:

  Let $\gamma \colon {]{-\epsilon},\epsilon[} \to M$ be an integrating curve of the Hamiltonian flow of $f$ through $x$,
  i.e.~$\gamma(0) = x$ and $\frac{\D}{\D t} \at[\big]{\tau} h(\gamma(t)) = \poi{f}{h}(\gamma(\tau))$ for all $\tau \in {]{-\epsilon},\epsilon[}$
  and all $h\in \Smooth(M)$.
  For all $\xi \in \lie g$ it follows that
  \begin{equation*}
    \frac{\D}{\D t} \at[\bigg]{\tau} \momentmap(\xi)\big(\gamma(t)\big) =  \poi[\big]{f}{\momentmap(\xi)}\big(\gamma(\tau)\big) = \big(f \racts \xi\big)\big(\gamma(\tau)\big) = 0
  \end{equation*}
  by $\lie g$-invariance of $f$,
  so $\momentmap(\xi)(\gamma(t)) = \momentmap(\xi)(\gamma(0)) = \momentmap(\xi)(x) = \dupr{\mu}{\xi}$ for all $t \in {]{-\epsilon},\epsilon[}$.
  This shows that $\gamma(t) \in \Levelset_\mu$ for all $t \in {]{-\epsilon},\epsilon[}$ and therefore
  $\poi{f}{g}(x) = \frac{\D}{\D t} \at[\big]{0} g(\gamma(t)) = 0$ because $g$ is constantly $0$ on $\Levelset_\mu$ by 
  Proposition~\ref{proposition:manifoldVanishing}.
\end{proof}
We thus recover the universal reduction procedure of \cite{arms.cushman.gotay:universalReductionProcedure}, adapted to Poisson manifolds:

\begin{theorem} \label{theorem:universalPoisson}
  Retain the assumptions from the beginning of this section, fix $\mu \in \lie g^*$, and define the quotient topological space 
  $M_{\mu\mred} \coloneqq \Levelset_\mu / G$ and the $^*$\=/algebra $\Stetig(M_{\mu\mred})$ of continuous
  $\CC$-valued functions on $M_{\mu\mred}$ with the pointwise operations.
  For every $f\in \Smooth(M)^{\lie g}$ define $[f]_\mu \in \Stetig(M_{\mu\mred})$ as
  \begin{align}
    [f]_\mu\big([x]_G\big) \coloneqq f(x)
    \label{eq:universalPoisson:bracketmu}
  \end{align}
  for all $[x]_G \in M_{\mu\mred}$ with representative $x\in\Levelset_\mu$.
  Then the $\momentmap$-reduction of $\Smooth(M)$ at $\mu$ is given by $\big( \mathcal{W}^\infty(M_{\mu\mred}), [\argument]_\mu \big)$,
  where $\mathcal{W}^\infty(M_{\mu\mred})$ is the unital $^*$\=/subalgebra
  \begin{equation}
    \mathcal{W}^\infty(M_{\mu\mred}) \coloneqq \set[\big]{[f]_\mu}{f\in\Smooth(M)^{\lie g}}
  \end{equation}
  of $\Stetig(M_{\mu\mred})$ with the pointwise order, equipped with the Poisson bracket that is given by
  \begin{align}
    \poi[\big]{[f]_\mu}{[g]_\mu} \coloneqq \big[\poi{f}{g}\big]_\mu
    \label{eq:universalPoisson:poisson}
  \end{align}
  for all $f,g \in \Smooth(M)^{\lie g}$, and where $[\argument]_\mu \colon \Smooth(M)^{\lie g} \to \mathcal{W}^\infty(M_{\mu\mred})$
  is the map from \eqref{eq:universalPoisson:bracketmu}.
\end{theorem}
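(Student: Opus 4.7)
The plan is to apply Theorem~\ref{theorem:vanishing:neu}, since Corollary~\ref{corollary:manifoldVanishing} tells us every $\mu \in \lie g^*$ is regular for $\momentmap$, and Proposition~\ref{proposition:manifoldVanishing} gives the explicit geometric descriptions of $\mathcal R_\mu$ and $\mathcal V_\mu$ that do most of the work.

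First, I would verify that $[\argument]_\mu$ is well-defined and lands in $\Stetig(M_{\mu\mred})$. Since $G$ is connected, any $f\in\Smooth(M)^{\lie g}$ is in fact $G$-invariant. By Proposition~\ref{proposition:ginv}, each $\momentmap(\xi)$ lies in $\Smooth(M)^{\lie g}$, hence is $G$-invariant, so $\Levelset_\mu$ is $G$-stable. Therefore the restriction $f|_{\Levelset_\mu}$ is $G$-invariant and descends to a function $[f]_\mu$ on the orbit space $M_{\mu\mred}=\Levelset_\mu/G$, which is continuous with respect to the quotient topology. By construction, $\mathcal W^\infty(M_{\mu\mred})$ is then a unital $^*$\=/subalgebra of $\Stetig(M_{\mu\mred})$, and the map $[\argument]_\mu \colon \Smooth(M)^{\lie g} \to \mathcal W^\infty(M_{\mu\mred})$ is a surjective unital $^*$\=/homomorphism, positive for the pointwise orders.

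Next I would equip $\mathcal W^\infty(M_{\mu\mred})$ with the full structure of a representable Poisson $^*$\=/algebra. The pointwise order on $\Stetig(M_{\mu\mred})$ restricts to $\mathcal W^\infty(M_{\mu\mred})$ and is induced by evaluation functionals at points of $M_{\mu\mred}$, as explained in Section~\ref{subsec:quadraticModules}. For \eqref{eq:universalPoisson:poisson} to be well-defined one needs $f-f'\in\ker[\argument]_\mu$ to imply $\poi{f-f'}{g}\in\ker[\argument]_\mu$ for every $g\in\Smooth(M)^{\lie g}$. But by \eqref{eq:manifoldVanishing:V}, $\ker[\argument]_\mu=\mathcal V_\mu$, and by Corollary~\ref{corollary:manifoldVanishing}, $\mathcal V_\mu$ is a Poisson ideal of $\Smooth(M)^{\lie g}$; the Leibniz and Jacobi identities as well as compatibility with $\argument^*$ transfer automatically from $\Smooth(M)^{\lie g}$. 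With this bracket, $[\argument]_\mu$ is additionally compatible with Poisson brackets, and $[\momentmap(\xi)]_\mu=\dupr{\mu}{\xi}\Unit$ for all $\xi\in\lie g$ by definition of $\Levelset_\mu$.

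Finally I would check the two conditions of Theorem~\ref{theorem:vanishing:neu}. Condition \textit{i.)} is immediate: $[\argument]_\mu$ is surjective by construction, and $\ker[\argument]_\mu=\{f\in\Smooth(M)^{\lie g}\colon f|_{\Levelset_\mu}=0\}=\mathcal V_\mu$ by \eqref{eq:manifoldVanishing:V}. For condition \textit{ii.)}, if $f\in\Smooth(M)^{\lie g}_\Hermitian$ satisfies $[f]_\mu\in\mathcal W^\infty(M_{\mu\mred})^+_\Hermitian$, then $f(x)\ge 0$ for every $x\in\Levelset_\mu$, so $f\in\mathcal R_\mu$ by \eqref{eq:manifoldVanishing:R}. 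Theorem~\ref{theorem:vanishing:neu} then identifies $(\mathcal W^\infty(M_{\mu\mred}),[\argument]_\mu)$ with the $\momentmap$-reduction of $\Smooth(M)$ at $\mu$.

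The substantive obstacle has already been met in the two preparatory results: Proposition~\ref{proposition:manifoldVanishing} requires the smooth-cutoff argument that turns a pointwise inequality on $\Levelset_\mu$ into an algebraic witness modulo $\genSId{\momentmap-\mu}$, and Corollary~\ref{corollary:manifoldVanishing} requires the Hamiltonian-flow argument showing that orbits through $\Levelset_\mu$ of $\lie g$-invariant Hamiltonians stay in $\Levelset_\mu$. Granted these inputs, the theorem itself is a routine verification of the universal property.
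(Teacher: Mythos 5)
Your proposal is correct and follows essentially the same route as the paper's proof: both reduce the statement to checking the two conditions of Theorem~\ref{theorem:vanishing:neu}, using Corollary~\ref{corollary:manifoldVanishing} for regularity and well-definedness of the Poisson bracket, and Proposition~\ref{proposition:manifoldVanishing} to identify $\ker[\argument]_\mu$ with $\mathcal{V}_\mu$ and to deduce $f\in\mathcal{R}_\mu$ from pointwise positivity of $[f]_\mu$. The additional remark on $G$-stability of $\Levelset_\mu$ is a harmless (and correct) elaboration of what the paper leaves implicit.
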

\begin{proof}
  Poisson-commuting elements of $\Smooth(M)$ commute trivially, and $\mu$ is regular for $\momentmap$ by the previous Corollary~\ref{corollary:manifoldVanishing}.
  Therefore we only need to check that the conditions of Theorem~\ref{theorem:vanishing:neu} are fulfilled:

  As the Lie group $G$ is connected by assumption, every $f\in \Smooth(M)^{\lie g}$ is $G$-invariant,
  so the function $[f]_\mu \colon M_{\mu\mred} \to \CC$ of \eqref{eq:universalPoisson:bracketmu} is well-defined,
  and $[f]_\mu$ is continuous by definition of the quotient topology on $M_{\mu\mred}$. It is now easy to check that
  $\mathcal{W}^\infty(M_{\mu\mred})$ is a unital $^*$\=/subalgebra of $\Stetig(M_{\mu\mred})$.
  The Poisson bracket \eqref{eq:universalPoisson:poisson} is well-defined as a consequence
  of the previous Corollary~\ref{corollary:manifoldVanishing}, and the pointwise order on $\mathcal{W}^\infty(M_{\mu\mred})$
  is induced by its states by definition. So $\mathcal{W}^\infty(M_{\mu\mred})$ is a representable Poisson $^*$\=/algebra.
  
  The map $[\argument]_\mu \colon \Smooth(M)^{\lie g} \to \mathcal{W}^\infty(M_{\mu\mred})$ clearly is a positive unital
  $^*$\=/homomorphism with kernel $\mathcal{V}_\mu$. It is surjective and compatible with Poisson brackets by definition
  of $\mathcal{W}^\infty(M_{\mu\mred})$.
  
  Now consider an element $f \in \Smooth(M)^{\lie g}_\Hermitian$ such that $[f]_\mu \in \mathcal{W}^\infty(M_{\mu\mred})^+_\Hermitian$,
  i.e.~$[f]_\mu$ is pointwise positive. This means $f(x) = [f]_\mu([x]_G) \ge 0$ for all $x\in \Levelset_\mu$, so $f\in\mathcal{R}_\mu$
  by Proposition~\ref{proposition:manifoldVanishing}.
\end{proof}
The algebra $\mathcal{W}^\infty(M_{\mu\mred})$ consists of the Whitney smooth functions on $M_{\mu\mred}$
as in \cite{arms.cushman.gotay:universalReductionProcedure}.
Under some additional standard assumptions (proper and free action, $\mu$ a regular value), $M_{\mu\mred}$ can even be given the structure of a Poisson manifold and
$\mathcal{W}^\infty(M_{\mu\mred}) = \Smooth(M_{\mu\mred})$. Especially in the symplectic case it was shown in
\cite{arms.cushman.gotay:universalReductionProcedure} that this construction then gives back 
Marsden--Weinstein reduction.

We now turn our attention to the reduction of states on $\Smooth(M)$: By Theorem~\ref{theorem:reducedStates} and Corollary~\ref{corollary:manifoldVanishing},
a state $\omega$ on $\Smooth(M)$ is $\momentmap$-reducible at $\mu$ if and only if $\omega \in \States_{\momentmap\!,\mu}\big( \Smooth(M)\big)$.
Especially for the evaluation functionals $\delta_x \colon \Smooth(M)^{\lie g} \to \CC$, $f\mapsto\dupr{\delta_x}{f}\coloneqq f(x)$ with $x\in M$
this means that $\delta_x$ is $\momentmap$-reducible at $\mu$ if and only if $x \in \Levelset_\mu$.
It is also easy to see that in this case, the corresponding reduced state is just $(\delta_x)_{\mu\mred} = 
\delta_{[x]_G} \in \States\big( \mathcal{W}^\infty(M_{\mu\mred})\big)$, the evaluation functional at $[x]_G \in M_{\mu\mred}$.
Note that the evaluation functionals at points of $M$ are precisely the multiplicative states on $\Smooth(M)$ (this result
is sometimes referred to as ``Milnor and Stasheff's exercise'', see e.g.~\cite{kriegl.michor.schachermayer:charactersOnAlgebrasOfSmoothFunctions}
for a proof in a setting much more general than just smooth manifolds) and the multiplicative states in turn are
precisely the extreme points of $\States\big(\Smooth(M)\big)$, see e.g.~\cite[Cor.~2.61, Thm.~2.63]{schmuedgen:invitationToStarAlgebras} or 
\cite{schoetz:preprintGelfandNaimarkTheorems}.
The reduced space $M_{\mu\mred}$ therefore is just a geometric manifestation of the 
reduction of the reducible extremal states.

Theorem~\ref{theorem:reducedStates} also shows that every state on $\mathcal{W}^\infty(M_{\mu\mred})$ can be obtained as a reduction
of some reducible state on $\Smooth(M)$ if there exists an averaging operator for the action of $\lie g$. We close this section
with the observation that this is indeed the case if the action of $G$ on $M$ is proper:

\begin{proposition} \label{proposition:properAveraging}
  If the action of $G$ on $M$ is proper, then one can construct an averaging operator $\argument_\av \colon \Smooth(M) \to \Smooth(M)^{\lie g}$.
\end{proposition}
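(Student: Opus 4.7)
The plan is to construct the averaging operator by the classical ``cut-off and integrate'' technique for proper Lie group actions. The first task is to use properness of the $G$-action to produce a smooth cut-off function $\rho \in \Smooth(M)$ with $\rho \ge 0$ such that, for every $x\in M$, the map $G \ni g \mapsto \rho(g\acts x)$ has compact support and $\integral{G}{\rho(g\acts x)}{\D g} = 1$, where $\D g$ denotes a fixed Haar measure on the abelian Lie group $G$. The standard construction is to pick any non-negative $\chi \in \Smooth(M)$ with compact support such that the $G$-saturation of its support is all of $M$ (which can be arranged via a partition of unity subordinate to a locally finite cover of the Hausdorff orbit space $M/G$ by precompact sets), form the orbit average $\Xi(x) \coloneqq \integral{G}{\chi(g\acts x)}{\D g}$, and set $\rho \coloneqq \chi / \Xi$. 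Properness guarantees that for each $x$ the integrand $g \mapsto \chi(g\acts x)$ has compact support, so $\Xi$ is well-defined, smooth (by differentiation under the integral), $G$-invariant, and everywhere strictly positive; hence $\rho$ is smooth with the required normalization.

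With $\rho$ in hand, I would define
\begin{equation*}
  f_\av(x) \coloneqq \integral{G}{f(g\acts x)\,\rho(g\acts x)}{\D g}
\end{equation*}
for all $f \in \Smooth(M)$ and $x \in M$. Well-definedness and smoothness in $x$ follow from differentiation under the integral, justified by the uniform-in-$x$ compact support control inherited from $\rho$. The map $\argument_\av$ is $\CC$-linear and Hermitian by inspection. A change of variables $g \mapsto g h^{-1}$, combined with bi-invariance of the Haar measure on the abelian group $G$, yields $f_\av(h\acts x) = f_\av(x)$ for all $h \in G$; hence $f_\av$ is $G$-invariant, and by connectedness of $G$ also $\lie g$-invariant, so $\argument_\av$ maps $\Smooth(M)$ into $\Smooth(M)^{\lie g}$. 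If $f$ is already $G$-invariant, then $f(g\acts x) = f(x)$ pulls out of the integral and the normalization of $\rho$ gives $f_\av(x) = f(x)$, so $\argument_\av$ is a projection onto $\Smooth(M)^{\lie g}$. Finally, if $f \in \Smooth(M)^+_\Hermitian$, i.e.\ $f(y) \ge 0$ for all $y \in M$, then the integrand is everywhere non-negative and so is $f_\av(x)$, which shows positivity.

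The main obstacle is the first task, the construction of $\rho$. Properness enters twice there: it makes the orbit-wise integrals finite and smooth, and it is what makes $M/G$ Hausdorff and paracompact, so that a partition of unity whose pieces each intersect only finitely many orbits is available to produce a $\chi$ with the saturation property above. Once $\rho$ has been secured, all the remaining requirements of Definition~\ref{definition:averaging} reduce to routine manipulations with the integral.
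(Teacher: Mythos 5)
Your construction is correct in substance and rests on the same mechanism as the paper's proof: properness makes the orbitwise integrals over $G$ finite and smooth, and one normalizes a cut-off so that its orbit integrals equal $1$. The paper organizes this differently: it first defines a compactly supported average $f_{\mathrm{cp\textup{-}av}}(x) = \int_G f(g\acts x)\,\nu(g)$ on the ideal $\Smooth_0(M)$, then patches using an ordinary locally finite partition of unity $(\theta_\ell)$ with compact supports together with a $G$\=/invariant partition of unity $(\tau_\ell)$ subordinate to the saturated cover $U_\ell = \sigma_\ell^{-1}(]0,\infty[)$ (existence quoted from \cite{ortega.ratiu:momentumMapsAndHamiltonianReduction}), and finally divides by $\sum_\ell \sigma_\ell \tau_\ell$. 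Unwinding that formula using the $G$\=/invariance of $\tau_\ell$ shows it is exactly your $\int_G f(g\acts x)\,\rho(g\acts x)\,\D g$ with $\rho = \kappa / \kappa_{\mathrm{cp\textup{-}av}}$ for $\kappa = \sum_\ell \theta_\ell\tau_\ell$; your version is cleaner once $\rho$ is secured, while the paper's version pushes the topological work into the cited invariant partition of unity.

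One step as literally written would fail: if $M/G$ is non-compact, no compactly supported $\chi$ can have $G$\=/saturation of its support equal to all of $M$, since that saturation projects onto a compact subset of $M/G$. Your parenthetical already contains the repair --- sum compactly supported bumps $\chi_i$ over a locally finite precompact cover of the orbit space --- but the resulting $\chi$ is then not compactly supported on $M$. What it does have, and what the argument actually needs, is that $g \mapsto \chi(g\acts x)$ is compactly supported, locally uniformly in $x$: each saturation $G\cdot\operatorname{supp}\chi_i$ is closed, the family of saturations is locally finite, and properness makes each set $\set{g\in G}{g\acts x \in \operatorname{supp}\chi_i}$ compact. With the hypothesis on $\chi$ restated in this form, the finiteness, smoothness, $G$\=/invariance and strict positivity of $\Xi$, and all the remaining verifications of Definition~\ref{definition:averaging}, go through exactly as you describe.
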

\begin{proof}
  The averaging operator can be constructed essentially like in the proof of
  \cite[Prop.~5.3]{miaskiwskyi:invariantHochschildCohomologyOfSmoothFunctions}, see also \cite[Chap.~4]{pflaum}: 
  Let $\nu$ be a non-zero right-invariant volume form on $G$
  and let $\Smooth_0(M)$ be the $^*$\=/ideal of compactly supported functions in $\Smooth(M)$. Then 
  the 
  compactly supported averaging operator
  $\argument_{\mathrm{cp\textup{-}av}} \colon \Smooth_0(M) \to \Smooth(M)^{\lie g}$, $f\mapsto f_{\mathrm{cp\textup{-}av}}$ with
  \begin{align*}
    f_{\mathrm{cp\textup{-}av}}(x) \coloneqq \integral{g\in G}{f(g \acts x)}{ \nu(g) }
  \end{align*}
  for all $x\in M$ is well-defined because the action of $G$ on $M$ is proper, so that for every 
  compact neighbourhood $K$ of
  any $x\in M$ the set $\set[\big]{g \in G}{ \exists_{y,z\in K}: g \acts y = z }$ is compact,
  and because for all $f\in \Smooth_0(M)$, all $\tilde{g} \in G$ and all $x\in M$ one finds that
  \begin{align*}
    \big(f_{\mathrm{cp\textup{-}av}} \racts \tilde{g}\big) (x)
    =
    \integral{g\in G}{f(g \tilde{g} \acts x)}{ \nu(g) }
    =
    \integral{g'\in G}{f(g' \acts x)}{ \nu(g' \tilde{g}^{-1}) }
    =
    \integral{g'\in G}{f(g' \acts x)}{ \nu(g') }
    =
    f_{\mathrm{cp\textup{-}av}}(x)
  \end{align*}
  by right-invariance of $\nu$,
  so that indeed $f_{\mathrm{cp\textup{-}av}} \in \Smooth(M)^{\lie g}$ for all $f\in \Smooth_0(M)$.
  It is clear that $f_{\mathrm{cp\textup{-}av}} \in (\Smooth(M)^{\lie g})_\Hermitian$
  for all $f \in \Smooth_0(M)_\Hermitian$ and $f_{\mathrm{cp\textup{-}av}} \in (\Smooth(M)^{\lie g})_\Hermitian^+$
  for all $f \in \Smooth_0(M)_\Hermitian^+$ hold. We also make the following observation: Given $f\in \Smooth(M)$, $h \in \Smooth_0(M)$,
  and $x\in M$ such that $f(g \acts x) = f(x)$ for all $g\in G$, then $f h \in \Smooth_0(M)$ and 
  $(fh)_{\mathrm{cp\textup{-}av}}(x) = f(x) h_{\mathrm{cp\textup{-}av}}(x)$.
  
  Now let $(\theta_\ell)_{\ell\in \NN}$ be a locally finite and smooth partition of unity on $M$ such that each 
  $\theta_\ell$ with $\ell\in \NN$ has compact support. Define $\sigma_\ell \coloneqq (\theta_\ell)_{\mathrm{cp\textup{-}av}}$
  for all $\ell\in \NN$, and the preimages $U_\ell \coloneqq \sigma_\ell^{-1}(]0,\infty[)$. Then $(U_\ell)_{\ell\in \NN}$
  is an open cover of $M$ by $G$-invariant sets. By \cite[Prop.~2.3.8, v)]{ortega.ratiu:momentumMapsAndHamiltonianReduction} there exists
  a locally finite and smooth partition of unity $(\tau_\ell)_{\ell\in \NN}$ of $M$ such that each $\tau_\ell$ with $\ell\in \NN$
  has support contained in $U_\ell$ and is $G$-invariant, so $\tau_\ell \in \Smooth(M)^{\lie g}$
  (essentially, one can construct a partition of unity on the quotient topological space $M/G$ out of those real-valued and
  pointwise positive continuous functions that can be pulled back to smooth functions on $M$).
  Define $\argument_\av \colon \Smooth(M) \to \Smooth(M)^{\lie g}$,
  \begin{align*}
    f \mapsto f_\av \coloneqq \frac{ \sum_{\ell\in \NN} (f \theta_\ell)_{\mathrm{cp\textup{-}av}} \tau_\ell }{ \sum_{\ell\in \NN} \sigma_\ell \tau_\ell }
    ,
  \end{align*}
  where the infinite sums in the numerator and denominator are locally finite because $\tau_\ell$ is a locally finite partition of unity,
  and where the denominator is strictly positive at every point $x\in M$, because for every $x\in M$ there is $\ell \in \NN$
  with $\tau_\ell(x) > 0$, which means $x \in U_\ell$, i.e.~$\sigma_\ell(x) > 0$. It is clear that indeed $f_\av \in \Smooth(M)^{\lie g}$
  for all $f\in \Smooth(M)$, so $\argument_\av$ is well-defined, and that $f_\av \in (\Smooth(M)^{\lie g})_\Hermitian$
  for all $f \in \Smooth(M)_\Hermitian$ and $f_\av\in (\Smooth(M)^{\lie g})_\Hermitian^+$
  for all $f \in \Smooth_0(M)_\Hermitian^+$ hold. Finally, given $f\in \Smooth(M)$ and $x\in M$ such that 
  $f(x) = f(g \acts x)$ holds for all $g\in G$, then
  \begin{align*}
    f_\av(x)
    =
    \frac{ \sum_{\ell\in \NN} (f \theta_\ell)_{\mathrm{cp\textup{-}av}}(x) \tau_\ell(x) }{ \sum_{\ell\in \NN} \sigma_\ell(x) \tau_\ell(x) }
    =
    f(x) \frac{ \sum_{\ell\in \NN} (\theta_\ell)_{\mathrm{cp\textup{-}av}}(x) \tau_\ell(x) }{ \sum_{\ell\in \NN} \sigma_\ell(x) \tau_\ell(x) }
    =
    f(x)
  \end{align*}
  because $\sigma_\ell = (\theta_\ell)_{\mathrm{cp\textup{-}av}}$ by definition. This especially also shows that $f_\av = f$ for all
  $f\in \Smooth(M)^{\lie g}$ because all $\lie g$-invariant elements of $\Smooth(M)$ are also $G$-invariant, and so $\argument_\av$
  is indeed an averaging operator in the sense of Definition~\ref{definition:averaging}.
\end{proof}

\section{Reduction of the Weyl Algebra} \label{sec:weyl}
In this section, let $\Schwartz(\RR^m)$ with $m\in \NN$ be the Schwartz space of rapidly decreasing functions on $\RR^m$
with the usual $\mathrm{L}^2$ inner product over the Lebesgue measure on $\RR^m$. The \emph{Weyl 
algebra}
$\Weyl(\RR^m)$ is the unital $^*$\=/subalgebra of $\Adbar\big( \Schwartz(\RR^m) \big)$ that is generated by the usual position
and momentum operators $q_j, p_j \in \Adbar\big( \Schwartz(\RR^m) \big)$, which are defined as
$q_j f \coloneqq x_j f$ and $p_j f \coloneqq -\I \frac{\partial f}{\partial x_j}$ for all $f\in \Schwartz(\RR^m)$,
where $x_j$ denotes the $j$-th standard coordinate function on $\RR^m$. 
We endow $\Weyl(\RR^m)$ with the usual operator order like in 
Section~\ref{subsec:quadraticModules}.
As discussed in Example~\ref{example:comaspoi},
the Weyl algebra $\Weyl(\RR^m)$ together with the Poisson bracket that is defined as $\poi{a}{b} \coloneqq -\I (ab-ba)$
for all $a,b\in \Weyl(\RR^m)$ is a representable Poisson $^*$\=/algebra. 
A basis of $\Weyl(\RR^m)$ is given by $\set{p^kq^\ell}{k,\ell \in \NN_0^m}$
where
\begin{equation}
  p^k q^\ell \coloneqq (p_1)^{k_1} \dots (p_m)^{k_m} (q_1)^{\ell_1} \dots (q_m)^{\ell_m}
\end{equation}
for all $k,\ell \in \NN_0^m$.

As noted in \cite{woronowicz:quantumProblemOfMoments}, the order on $\Weyl(\RR^m)$
is not the algebraic order, even if $m=1$: Consider the number operator $N \coloneqq \frac{1}{2}(q+\I p)^*(q+\I p) \in \Weyl(\RR)^+_\Hermitian$,
then it follows from $N$ being essentially self-adjoint with spectrum $\NN_0$ that
$(N-\Unit)(N-2\Unit) \in \Weyl(\RR)^+_\Hermitian$, but one can check that $(N-\Unit)(N-2\Unit) \notin \Weyl(\RR)^{++}_\Hermitian$
because $(N-\Unit)(N-2\Unit) \stackrel{!}{=} \sum_{k=1}^K b_k^* b_k$ with $K\in \NN$ and $b_1,\dots,b_K \in \Weyl(\RR)$
would require all $b_k$ with $k\in \{1,\dots,K\}$ to be of degree at most $2$ in the generators $a \coloneqq q+\I p$ and $a^*$ and to have the 
$1$- and $2$-eigenspaces of $N$ in their kernel. From \cite[Thms.~10.36, 10.37]{schmuedgen:invitationToStarAlgebras}
it then follows that there even exists an algebraically positive state on $\Weyl(\RR)$ which is not positive.
An algebraic characterization of the positive Hermitian linear functionals on $\Weyl(\RR^m)$ can be obtained from
the Positivstellensatz for the Weyl algebra from \cite{schmuedgen:StrictPositivstellensatzForWeylAlgebra}:
A Hermitian linear functional $\omega$ on $\Weyl(\RR^m)$ is positive if and only if
$\dupr{\omega}{a} \ge 0$ holds for all those $a\in \Weyl(\RR^m)_\Hermitian$ for which there exists
$b\in \mathcal{N}$ such that $bab \in \Weyl(\RR^m)^{++}_\Hermitian$, where 
$\mathcal{N}$ is the set of all finite products of elements $N-\lambda \Unit$ with $\lambda \in \RR\setminus \NN_0$
and with $N \coloneqq \frac{1}{2}\sum_{j=1}^m (q_j+\I p_j)^*(q_j+\I p_j) \in \Weyl(\RR^m)^+_\Hermitian$
the $m$-dimensional number operator.

Throughout the rest of this section, we make the following assumptions:
\begin{itemize}
  \item $n\in \NN$ is a fixed dimension, the coordinate functions on $\RR^{1+n}$ will be numbered $x_0,\dots,x_n$, and those on $\RR^n$ will be numbered $x_1,\dots,x_n$.
  \item $\lie g \cong \RR$ is the $1$-dimensional Lie algebra and we choose any momentum $\mu \in \lie g^*$, 
    which we will identify by abuse of notation with $\mu \coloneqq \mu(1) \in \RR$.
  \item The momentum map is $\momentmap \colon \lie g \to \Weyl(\RR^{1+n})_\Hermitian$, $\lambda \mapsto \momentmap(\lambda) \coloneqq \lambda p_0$,
    and therefore is completely described by $p_0 = \momentmap(1)$.
\end{itemize}
Note that the corresponding action of the Lie algebra $\lie g$ 
can be obtained by differentiating the action of the Lie group $\RR$ by translation of the $0$-component.
The space of invariant elements under this action can of course be described more explicitly:

\begin{lemma} \label{lemma:weylalgdec}
  $\Weyl(\RR^{1+n})^{\lie g}$ is the unital $^*$\=/subalgebra of $\Weyl(\RR^{1+n})$ that is
  generated by $q_1,\dots,q_n$ and $p_0,p_1,\dots,p_n$, which is the linear subspace spanned by the
  basis elements $p^k q^\ell$ with $k,\ell \in \NN_0^{1+n}$, $\ell_0=0$. Moreover, we have the decomposition
  $\Weyl(\RR^{1+n})^{\lie g} = \genSId{p_0-\mu} \oplus \genSAlg{q_1,\dots,q_n,p_1,\dots,p_n}$,
  where $\genSId{p_0-\mu}{} = \genSId{\momentmap-\mu}{}$ is the $^*$\=/ideal of $\Weyl(\RR^{1+n})^{\lie g}$ generated by $p_0-\mu\Unit$,
  and where $\genSAlg{q_1,\dots,q_n,p_1,\dots,p_n}$ is the unital $^*$\=/subalgebra of $\Weyl(\RR^{1+n})^{\lie g}$ generated
  by $\{q_1,\dots,q_n,p_1,\dots,p_n\}$.
\end{lemma}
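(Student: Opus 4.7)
The plan is to compute the $\lie g$-action on the PBW basis $\set{p^k q^\ell}{k, \ell \in \NN_0^{1+n}}$ of $\Weyl(\RR^{1+n})$, and then to exploit the centrality of $p_0$ in the resulting invariant subalgebra.

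First, I would note that since the Poisson bracket on $\Weyl(\RR^{1+n})$ is (up to a non-zero scalar) the commutator, an element $a \in \Weyl(\RR^{1+n})$ lies in $\Weyl(\RR^{1+n})^{\lie g}$ if and only if $[a, p_0] = 0$. From the canonical commutation relations $[p_j, p_0] = 0$ for all $j$, $[q_j, p_0] = 0$ for $j \ge 1$, and $[q_0, p_0] = \I \Unit$, a direct calculation gives
\begin{align*}
  [p^k q^\ell, p_0] = p^k [q^\ell, p_0] = \I \ell_0 \, p_0^{k_0} p_1^{k_1} \cdots p_n^{k_n}\, q_0^{\ell_0 - 1} q_1^{\ell_1} \cdots q_n^{\ell_n}
\end{align*}
whenever $\ell_0 \ge 1$, and $0$ otherwise. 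Linear independence of the PBW basis then shows that $\sum_{k,\ell} c_{k, \ell}\, p^k q^\ell$ is $\lie g$-invariant iff $c_{k, \ell} = 0$ whenever $\ell_0 \ge 1$, which gives the basis description of $\Weyl(\RR^{1+n})^{\lie g}$. Since each such basis element is a product of $p_0, p_1, \dots, p_n, q_1, \dots, q_n$, and conversely any element of the unital $^*$\=/subalgebra generated by these generators admits (by PBW within this subalgebra) an expansion in basis elements avoiding $q_0$, the two descriptions of $\Weyl(\RR^{1+n})^{\lie g}$ coincide.

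For the decomposition, the key observation is that $p_0$ is central in $\Weyl(\RR^{1+n})^{\lie g}$, and the PBW basis just obtained identifies $\Weyl(\RR^{1+n})^{\lie g}$, as a $^*$\=/algebra, with the tensor product $\CC[p_0] \otimes_\CC \genSAlg{q_1,\dots,q_n,p_1,\dots,p_n}$ (where $p_0^* = p_0$). Since the Poisson bracket on $\Weyl(\RR^{1+n})$ is a scaled commutator, Poisson-commuting elements commute, and Proposition~\ref{proposition:genIdeal} therefore yields the explicit description
\begin{align*}
  \genSId{\momentmap - \mu}{} = \set[\Big]{ \sum\nolimits_{m=1}^M a_m (p_0 - \mu \Unit) }{ M \in \NN_0;\, a_1,\dots, a_M \in \Weyl(\RR^{1+n})^{\lie g}}.
\end{align*}
Under the tensor identification this is precisely $(p_0 - \mu)\CC[p_0] \otimes \genSAlg{q_1,\dots,q_n,p_1,\dots,p_n}$. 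The elementary decomposition $\CC[p_0] = \CC\Unit \oplus (p_0 - \mu)\CC[p_0]$ (using $\set{(p_0 - \mu)^j}{j\in \NN_0}$ as a basis of $\CC[p_0]$) then gives, after tensoring with $\genSAlg{q_1,\dots,q_n,p_1,\dots,p_n}$, the claimed direct sum $\Weyl(\RR^{1+n})^{\lie g} = \genSId{p_0 - \mu} \oplus \genSAlg{q_1, \dots, q_n, p_1, \dots, p_n}$.

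The one step requiring some care will be the tensor product identification, which amounts to verifying algebraic freeness of $p_0$ over the $^*$\=/subalgebra $\genSAlg{q_1,\dots,q_n,p_1,\dots,p_n}$; however, this is immediate from the standard PBW theorem for the Weyl algebra together with the centrality of $p_0$, so no real obstacle is expected.
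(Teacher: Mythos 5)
Your proposal is correct and follows essentially the same route as the paper: compute the bracket with $p_0$ on the PBW basis to identify the invariants, then use centrality of $p_0$ together with the splitting $\CC[p_0]=\CC\Unit\oplus(p_0-\mu)\CC[p_0]$ to obtain the direct sum. The only cosmetic differences are that the paper extracts the coefficient condition by right-multiplying $\poi{a}{p_0}$ with $q_0$ (rather than reading it off the shifted basis elements directly) and establishes surjectivity of the decomposition via the telescoping identity $p_0^{k_0}-\mu^{k_0}\Unit=(p_0-\mu\Unit)\sum_{m=0}^{k_0-1}\mu^{k_0-1-m}p_0^{m}$ instead of your tensor-product packaging.
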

\begin{proof}
  It is clear that $\Weyl(\RR^{1+n})^{\lie g}$ is a unital $^*$\=/subalgebra of $\Adbar\big(\Schwartz(\RR^{1+n})\big)$ and that
  $q_j \in \Weyl(\RR^{1+n})^{\lie g}$ for all $j\in \{1,\dots,n\}$ and $p_j \in \Weyl(\RR^{1+n})^{\lie g}$ for all $j\in \{0,\dots,n\}$.
  Conversely, from $\poi{q_0}{p_0} = 1$ it follows for all $k,\ell \in \NN_0^{1+n}$
  that $\poi{p^k q^\ell}{p_0}\, q_0  = \ell_0 p^k q^\ell$. So given any 
  $a = \sum_{k,\ell\in \NN_0^{1+n}} \alpha_{k,\ell} \,p^k q^\ell \in \Weyl(\RR^{1+n})^{\lie g}$
  with complex coefficients $\alpha_{k,\ell}$, then $\poi{a}{p_0} = 0$ implies $\poi{a}{p_0} \, 
  q_0  = 0$,
  which implies $\alpha_{k,\ell} = 0$ for all $k,\ell \in \NN_0^{1+n}$ with $\ell_0 \neq 0$.
  
  It is clear that $\genSId{p_0-\mu} \cap \genSAlg{q_1,\dots,q_n,p_1,\dots,p_n} = \{0\}$ so that the sum of these linear subspaces
  of $\Weyl(\RR^{1+n})^{\lie g}$ is direct, and from
  $p_0^{k_0} - \mu^{k_0}\Unit = (p_0-\mu\Unit) \sum_{m=0}^{k_0-1} \mu^{k_0-1-m}p_0^{m} \in \genSId{p_0-\mu}{}$
  it follows that $\Weyl(\RR^{1+n})^{\lie g} = \genSId{p_0-\mu} \oplus \genSAlg{q_1,\dots,q_n,p_1,\dots,p_n}$.
\end{proof}
\begin{definition} \label{definition:redmapweyl}
  The map $[\argument]_\mu \colon \Weyl(\RR^{1+n})^{\lie g} \to \Weyl(\RR^n)$ is defined as the unique linear map
  that fulfils $[p^kq^\ell]_\mu = \mu^{k_0} p^{k'} q^{\ell'} \in \Weyl(\RR^n)$ for all $k,\ell \in 
  \NN_0^{1+n}$ with $\ell_0 = 0$,
  where $k' \coloneqq (k_1,\dots,k_n) \in \NN_0^n$ and $\ell' \coloneqq (\ell_1,\dots,\ell_n) \in \NN_0^n$.
\end{definition}
It is easy to check that the kernel of $[\argument]_\mu$ is $\genSId{p_0-\mu}$, and that the restriction of $[\argument]_\mu$ to the complement $\genSAlg{q_1,\dots,q_n,p_1,\dots,p_n}$
is the unital $^*$\=/homomorphism that maps $p^k q^\ell \in \Weyl(\RR^{1+n})^{\lie g}$ with $k,\ell \in \NN_0^{1+n}$ and $k_0 = \ell_0 = 0$ to
$p^{k'}q^{\ell'} \in \Weyl(\RR^n)$. Therefore $[\argument]_\mu \colon \Weyl(\RR^{1+n})^{\lie g} \to \Weyl(\RR^n)$
is a unital $^*$\=/homomorphism, and one might expect that the $p_0$-reduction of $\Weyl(\RR^{1+n})$ at $\mu$ is given by 
$\big(\Weyl(\RR^n), [\argument]_\mu\big)$. 
However, it is not so easy to show that $[\argument]_\mu$ is positive:
As $\ker [\argument]_\mu = \genSId{p_0-\mu}$, Proposition~\ref{proposition:positivitycriterium} and Corollary~\ref{corollary:vanishing}
show that $[\argument]_\mu$ is positive if and only if $\chi_\phi \circ [\argument]_\mu$, for every 
vector state $\chi_\phi$ on $\Weyl(\RR^n)$ with $\phi \in \Schwartz(\RR^n)$, $\seminorm{}{\phi}=1$,
is an eigenstate of $p_0$ with eigenvalue $\mu$. Such eigenstates, however, could only be obtained as 
weak-$^*$-limits of vector states because $p_0$ does not admit any eigenvector.

\begin{definition}
  For $k\in \NN$, let $\iota_k \colon \Schwartz(\RR^n) \to \Schwartz(\RR^{1+n})$, $\phi\mapsto \iota_k(\phi)$ be defined as
  \begin{align}
    \iota_k(\phi)(x_0,x_1,\dots,x_n) \coloneqq k^{-1/2} \E^{\I \mu x_0} \E^{-\pi x_0^2/(2k^2)} \phi(x_1,\dots,x_n)
  \end{align}
  for all $x\in \RR^{1+n}$.
\end{definition}
\begin{lemma} \label{lemma:iotak}
  For every $k\in \NN$, the map $\iota_k \colon \Schwartz(\RR^n) \to \Schwartz(\RR^{1+n})$ is an isometry and adjointable,
  with adjoint $\iota_k^* \colon \Schwartz(\RR^{1+n}) \to \Schwartz(\RR^{n})$, $\psi \mapsto \iota_k^*(\psi)$
  given explicitly as
  \begin{align}
    \iota_k^*(\psi)(x_1,\dots,x_n) = k^{-1/2}\integral{\RR}{\psi(x_0,x_1,\dots,x_n) \E^{-\I \mu x_0} \E^{-\pi x_0^2/(2k^2)}}{ \D x_0}
    \label{eq:iotastar}
    .
  \end{align}
\end{lemma}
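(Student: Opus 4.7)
The plan is to verify three things in sequence: both maps take values in the advertised Schwartz spaces, $\iota_k$ preserves the norm, and the adjoint identity $\skal{\iota_k(\phi)}{\psi} = \skal{\phi}{\iota_k^*(\psi)}$ holds. All of these are essentially standard Fubini/Gaussian-integral computations; nothing deep is hidden.

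First I would check that $\iota_k(\phi) \in \Schwartz(\RR^{1+n})$ for any $\phi \in \Schwartz(\RR^n)$. This is immediate because $\iota_k(\phi)$ factors as $f_k(x_0) \phi(x_1,\dots,x_n)$ with $f_k(x_0) \coloneqq k^{-1/2} \E^{\I\mu x_0} \E^{-\pi x_0^2/(2k^2)}$, and $f_k$ is Schwartz on $\RR$ (the Gaussian dominates the bounded oscillation $\E^{\I\mu x_0}$), so the product is Schwartz in all $1+n$ variables.

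Second, I would compute the norm by Fubini:
\begin{align*}
  \seminorm{}{\iota_k(\phi)}^2
  &= \integral{\RR^{1+n}}{k^{-1} \E^{-\pi x_0^2/k^2} \abs{\phi(x_1,\dots,x_n)}^2}{\D x_0\cdots \D x_n} \\
  &= k^{-1} \biggl(\integral{\RR}{\E^{-\pi x_0^2/k^2}}{\D x_0}\biggr) \seminorm{}{\phi}^2
  = \seminorm{}{\phi}^2,
\end{align*}
using the substitution $y \coloneqq x_0/k$ and $\int_\RR \E^{-\pi y^2}\D y = 1$ to evaluate the Gaussian integral to $k$. This proves the isometry claim.

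Third, I would show $\iota_k^*(\psi) \in \Schwartz(\RR^n)$ for every $\psi \in \Schwartz(\RR^{1+n})$: differentiation under the integral sign in \eqref{eq:iotastar} is justified by the rapid decay of $\psi$ and its derivatives, and yields $\partial^\alpha \iota_k^*(\psi)(x_1,\dots,x_n) = k^{-1/2}\int_\RR (\partial_{x'}^\alpha \psi)(x_0,x_1,\dots,x_n)\E^{-\I\mu x_0}\E^{-\pi x_0^2/(2k^2)}\D x_0$; rapid decay in the $x_1,\dots,x_n$ variables follows from the rapid decay of $\psi$ uniformly in $x_0$ combined with integrability of the Gaussian weight. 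Finally, the adjoint identity follows by writing out
\begin{align*}
  \skal{\iota_k(\phi)}{\psi}
  = \integral{\RR^{1+n}}{\cc{\iota_k(\phi)(x)}\,\psi(x)}{\D x}
\end{align*}
and applying Fubini to pull the $x_0$-integral inside, which reproduces exactly $\skal{\phi}{\iota_k^*(\psi)}$ from the definition in \eqref{eq:iotastar}. The only mild obstacle is making the dominated-convergence and Fubini arguments rigorous, but they are routine for Schwartz functions.
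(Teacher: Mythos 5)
Your proposal is correct and takes essentially the same route as the paper: both reduce everything to the Gaussian integral $\int_\RR \E^{-\pi x_0^2/k^2}\,\D x_0 = k$ and a Fubini argument for the adjoint identity (the paper phrases the isometry as $\iota_k^*\iota_k = \id$ rather than computing $\seminorm{}{\iota_k(\phi)}$ directly, but given the adjoint relation these are the same calculation). Your additional checks that $\iota_k(\phi)$ and $\iota_k^*(\psi)$ land in the respective Schwartz spaces are fine and are simply left implicit in the paper.
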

\begin{proof}
	For all $\phi \in \Schwartz(\RR^n)$ and all $x_1, \dots, x_n \in \RR$ one has
	\begin{equation*}
	\iota^*_k \iota_k (\phi)(x_1, \dots, x_n)
	=
	\frac{1}{k}\integral{\RR}{\E^{-\pi x_0^2/k^2} \phi(x_1,\dots,x_n) }{\D x_0} 
	= 
	\phi(x_1, \dots, x_n)	\komma
	\end{equation*}
	and it is also easy to check that $\skal{\iota_k^*(\psi)}{\phi} = \skal{\psi}{\iota_k(\phi)}$ 
	for all $\psi\in \Schwartz(\RR^{1+n})$ and all $\phi\in \Schwartz(\RR^n)$
	with $\iota_k^*$ as in \eqref{eq:iotastar}.
\end{proof}
The isometries $\iota_k$ can be used to map vector functionals from $\Adbar\big( 
\Schwartz(\RR^n)\big)$ 
to $\Adbar\big( \Schwartz(\RR^{1+n})\big)$.
We will eventually be interested in their limit for $k\to \infty$.
\begin{lemma} \label{lemma:p0iotak}
  For every $k\in \NN$, the identities $q_j \iota_k = \iota_k q_j$ and $p_j \iota_k = \iota_k p_j$ hold for all $j\in \{1,\dots,n\}$,
  where $q_j$ and $p_j$ denote the position and momentum operators on both $\Schwartz(\RR^n)$ and $\Schwartz(\RR^{1+n})$.
  Moreover, for every $\ell \in \NN$ there exist sequences $(\alpha_{\ell,m;k})_{k\in \NN}$ in $\CC$ for all $m \in \{0,\dots,\ell\}$ such that
  \begin{align}
    (p_0- \mu \Unit)^\ell \iota_k = \sum_{m=0}^\ell \alpha_{\ell,m;k} q_0^m \iota_k
    \label{eq:p0iotak}
  \end{align}
  holds for all $k\in \NN$, and $\lim_{k\to \infty} k^m \alpha_{\ell,m;k} = 0$ for all $\ell \in \NN$ and all $m\in \{0,\dots,\ell\}$.
\end{lemma}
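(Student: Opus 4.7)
The first assertion is essentially immediate from the tensor-product structure of $\iota_k$. Writing $\iota_k(\phi) = k^{-1/2}\E^{\I\mu x_0}\E^{-\pi x_0^2/(2k^2)}\cdot(\Unit_{\Schwartz(\RR)}\otimes \phi)$, the map $\iota_k$ is the composition of $\phi \mapsto \Unit\otimes \phi$ with pointwise multiplication by a function depending only on $x_0$. The operators $q_j$ and $p_j$ for $j\in\{1,\dots,n\}$ act as differential operators in the variables $x_1,\dots,x_n$ only, and thus commute with both steps of this composition, yielding $q_j\iota_k = \iota_k q_j$ and $p_j\iota_k = \iota_k p_j$.

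For the second assertion, the plan is to exploit the intertwining identity
\begin{equation*}
  (p_0 - \mu\Unit)\,\E^{\I\mu x_0} = \E^{\I\mu x_0}\,p_0 \komma
\end{equation*}
which follows from a one-line computation (conjugation by $\E^{\I\mu x_0}$ shifts $p_0$ by $\mu$). Iterating and applying the result to the Gaussian factor $g_k(x_0) \coloneqq \E^{-\pi x_0^2/(2k^2)}$ gives
\begin{equation*}
  (p_0-\mu\Unit)^\ell \iota_k(\phi) = (-\I)^\ell k^{-1/2}\E^{\I\mu x_0} g_k^{(\ell)}(x_0)\,\phi \punkt
\end{equation*}
So the whole matter reduces to expanding the $\ell$-th derivative of $g_k$ as a polynomial in $x_0$ times $g_k$ itself.

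This last expansion is standard Hermite-polynomial bookkeeping: with $H_\ell$ the physicists' Hermite polynomial of degree $\ell$, one has
\begin{equation*}
  g_k^{(\ell)}(x_0) = (-1)^\ell \big(\pi/(2k^2)\big)^{\ell/2} H_\ell\!\big(\sqrt{\pi/(2k^2)}\,x_0\big)\,g_k(x_0) \komma
\end{equation*}
and expanding $H_\ell(y) = \sum_{m=0}^\ell h_{\ell,m} y^m$ shows that $g_k^{(\ell)}(x_0) = \sum_{m=0}^\ell c_{\ell,m}\,k^{-(\ell+m)}\,x_0^m\,g_k(x_0)$ for certain $k$-independent constants $c_{\ell,m}\in\RR$. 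Pulling the factor $x_0^m$ out as $q_0^m$ acting on $\iota_k(\phi)$, we obtain the expansion \eqref{eq:p0iotak} with $\alpha_{\ell,m;k} = (-\I)^\ell(-1)^\ell c_{\ell,m}\,k^{-(\ell+m)}$. The claimed limit then reads $k^m\alpha_{\ell,m;k} = (-\I)^\ell(-1)^\ell c_{\ell,m}\,k^{-\ell}\to 0$ as $k\to\infty$ for every $\ell\in\NN$ and every $m\in\{0,\dots,\ell\}$, which is manifest.

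No step presents a genuine obstacle; the only care needed is tracking the scaling $k^{-(\ell+m)}$ through the Gaussian derivative. Alternatively, one could prove the expansion by induction on $\ell$, starting from the base case $(p_0-\mu\Unit)\iota_k = (\I\pi/k^2)\,q_0\iota_k$ and using $[p_0,q_0^m] = -\I m\, q_0^{m-1}$ to propagate a two-term recursion $\alpha_{\ell+1,m;k} = -\I(m+1)\alpha_{\ell,m+1;k} + (\I\pi/k^2)\alpha_{\ell,m-1;k}$ in which the inductive scaling is preserved with an extra factor of $k^{-1}$; but the direct Hermite-polynomial computation produces the scaling more transparently.
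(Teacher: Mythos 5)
Your proof is correct, and it takes a genuinely different route from the paper's. The paper proves \eqref{eq:p0iotak} by induction on $\ell$: it computes the base case $(p_0-\mu\Unit)\iota_k = \I\pi k^{-2}q_0\iota_k$ directly and then applies $(p_0-\mu\Unit)^L$ from the left, using the commutator identity $(p_0-\mu\Unit)^Lq_0 - q_0(p_0-\mu\Unit)^L = -\I L(p_0-\mu\Unit)^{L-1}$ to obtain a recursion involving $\alpha_{L,\cdot;k}$ and $\alpha_{L-1,\cdot;k}$, and it only tracks the qualitative statement that the sequences $(k^m\alpha_{\ell,m;k})_{k\in\NN}$ are bounded respectively vanishing. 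You instead use the conjugation identity $(p_0-\mu\Unit)\E^{\I\mu x_0}=\E^{\I\mu x_0}p_0$ to reduce everything to the $\ell$-th derivative of the Gaussian $g_k$, which you evaluate in closed form via Hermite polynomials; this yields the explicit and sharper scaling $\alpha_{\ell,m;k}=O(k^{-(\ell+m)})$, hence $k^m\alpha_{\ell,m;k}=O(k^{-\ell})$, from which the claimed limits are immediate (your formula reproduces the paper's base case, since $H_1(y)=2y$ gives $\alpha_{1,0;k}=0$ and $\alpha_{1,1;k}\propto k^{-2}$). The trade-off: your closed-form argument gives more information (explicit constants, a uniform decay rate in $\ell$) at the cost of invoking the Hermite-polynomial identity, whereas the paper's induction is entirely self-contained and needs only the soft boundedness of the coefficients. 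Your sketched inductive alternative, commuting $q_0^m$ past $p_0$ via $\kom{p_0}{q_0^m}=-\I m q_0^{m-1}$, is a valid variant of the paper's induction and does preserve the scaling as you claim. One cosmetic remark: in your final formula $\alpha_{\ell,m;k}=(-\I)^\ell(-1)^\ell c_{\ell,m}k^{-(\ell+m)}$ the factor $(-1)^\ell$ appears to double-count a sign already absorbed into your definition of $c_{\ell,m}$; this has no bearing on the conclusion, since only the power of $k$ matters.
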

\begin{proof}
  The identities for $q_j$ and $p_j$ with $j\in \{1,\dots,n\}$ are immediately clear. Now consider the case $\ell = 1$ in \eqref{eq:p0iotak}, 
  then for all $\phi\in \Schwartz(\RR^n)$ one has
  \begin{align*}
    \big(p_0 \iota_k \phi \big)(x_0,x_1,\dots,x_n)
    &=
    -\I \frac{\partial}{\partial x_0} k^{-1/2} \E^{\I \mu x_0} \E^{-\pi x_0^2/(2k^2)} \phi(x_1,\dots,x_n)
    \\
    &=
    (\mu + \I \pi x_0 / k^2) k^{-1/2} \E^{\I \mu x_0} \E^{-\pi x_0^2/(2k^2)} \phi(x_1,\dots,x_n)
    \\
    &=
    (\mu + \I \pi x_0 / k^2) \big(\iota_k \phi \big)(x_0,x_1,\dots,x_n)
  \end{align*}
  and therefore $(p_0-\mu\Unit) \iota_k = \I \pi k^{-2} q_0 \iota_k$. So \eqref{eq:p0iotak}
  is fulfilled for $\ell = 1$ with complex coefficients $\alpha_{1,0;k} = 0$ and $\alpha_{1,1;k} = \I \pi k^{-2}$
  and clearly $\lim_{k\to \infty} k^m \alpha_{1,m;k} = 0$ for both $m\in \{0,1\}$.
  The general case of \eqref{eq:p0iotak} is proven inductively:
  
  Assume that for some $L \in \NN$ there exist complex sequences $(\alpha_{\ell,m;k})_{k\in \NN}$ such that \eqref{eq:p0iotak} holds for both $\ell \in \{L,L-1\}$ and all $k\in \NN$.
  This is especially true for $L=1$ because the case $\ell = 1$ has just been discussed, and in the somewhat exceptional case $\ell = 0$
  one has $\alpha_{0,0,k} = 1$ for all $k\in \NN$.
  Using the commutator formula $(p_0-\mu\Unit)^L q_0 - q_0(p_0-\mu\Unit)^L = \I \poi{(p_0-\mu\Unit)^L}{q_0} = -\I L (p_0-\mu\Unit)^{L-1}$
  one then finds:
  \begin{align*}
    (p_0-\mu\Unit)^{L+1} \iota_k
    &=
    (p_0-\mu\Unit)^{L} \I \pi k^{-2} q_0 \iota_k
    \\
    &=
    \I \pi k^{-2} \big( q_0 (p_0-\mu\Unit)^{L} - \I L (p_0-\mu\Unit)^{L-1} \big)\iota_k
    \\
    &=
    \sum_{m=0}^L \I \pi k^{-2} \alpha_{L,m;k} q_0^{m+1} \iota_k + \sum_{m=0}^{L-1} \pi k^{-2} L \alpha_{L-1,m;k} q_0^{m} \iota_k
  \end{align*}
  This shows that \eqref{eq:p0iotak} is again fulfilled for $\ell = L+1$ with suitably chosen complex coefficients $\alpha_{L+1,m;k}$,
  and if the sequences $(k^m \alpha_{\ell,m;k})_{k\in \NN}$ for all $m\in\{0,\dots,\ell\}$ and both $\ell \in \{L,L-1\}$
  are bounded, then $\lim_{k\to \infty} k^m \alpha_{L+1,m;k} = 0$ for all $m\in \{0,\dots,L+1\}$.
\end{proof}

\begin{proposition} \label{proposition:PhiWeyl}
  The unital $^*$\=/homomorphism $[\argument]_\mu \colon \Weyl(\RR^{1+n})^{\lie g} \to \Weyl(\RR^n)$ is positive.
\end{proposition}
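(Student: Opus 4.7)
The plan is to apply Proposition~\ref{proposition:positivitycriterium}: since the operator order on $\Weyl(\RR^n)$ is induced by the vector states $\chi_\phi$ with $\phi \in \Schwartz(\RR^n)$, $\seminorm{}{\phi}=1$, it suffices to show that for every such $\phi$ the pullback $\chi_\phi \circ [\argument]_\mu$ is a state on $\Weyl(\RR^{1+n})^{\lie g}$. Normalization and Hermiticity are immediate from the definition of $[\argument]_\mu$, so the only content is positivity.

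The key idea is to realize $\chi_\phi \circ [\argument]_\mu$ as a pointwise limit of genuine vector states on $\Weyl(\RR^{1+n})$, restricted to $\Weyl(\RR^{1+n})^{\lie g}$. Concretely, since Lemma~\ref{lemma:iotak} says $\iota_k(\phi)$ is a unit vector in $\Schwartz(\RR^{1+n})$, the vector functional $\omega_k \coloneqq \chi_{\iota_k(\phi)}$ is a state on $\Weyl(\RR^{1+n})$ and hence restricts to a state on $\Weyl(\RR^{1+n})^{\lie g}$. I will show that
\begin{equation*}
  \lim_{k\to\infty} \dupr{\omega_k}{a} = \dupr{\chi_\phi \circ [\argument]_\mu}{a}
  \quad\text{for every }a \in \Weyl(\RR^{1+n})^{\lie g}.
\end{equation*}
As pointwise limits of positive, Hermitian, normalized linear functionals are again positive, Hermitian, and normalized, this will immediately yield that $\chi_\phi \circ [\argument]_\mu \in \States(\Weyl(\RR^{1+n})^{\lie g})$.

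For the convergence, Lemma~\ref{lemma:weylalgdec} lets me expand any $a \in \Weyl(\RR^{1+n})^{\lie g}$ as a finite linear combination of basis elements $p_0^{k_0} p_1^{k_1}\cdots p_n^{k_n} q_1^{\ell_1}\cdots q_n^{\ell_n}$. Since by Lemma~\ref{lemma:p0iotak} the operators $p_j$ and $q_j$ with $j \ge 1$ commute with $\iota_k$, it reduces to proving that
\begin{equation*}
  \lim_{k\to\infty}\skal[\big]{\iota_k(\phi)}{p_0^{k_0}\,\iota_k(\psi)} = \mu^{k_0}\skal{\phi}{\psi}
\end{equation*}
for every $\psi \in \Schwartz(\RR^n)$ and $k_0 \in \NN_0$, where on the right I use $[\,p_0^{k_0} p^{k'} q^{\ell'}\,]_\mu = \mu^{k_0} p^{k'} q^{\ell'}$ from Definition~\ref{definition:redmapweyl}. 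Writing $p_0^{k_0} = \mu^{k_0}\Unit + \sum_{\ell=1}^{k_0}\binom{k_0}{\ell}\mu^{k_0-\ell}(p_0-\mu\Unit)^\ell$, the leading piece contributes exactly $\mu^{k_0}\skal{\phi}{\psi}$ via $\skal{\iota_k(\phi)}{\iota_k(\psi)} = \skal{\phi}{\iota_k^*\iota_k(\psi)} = \skal{\phi}{\psi}$ (Lemma~\ref{lemma:iotak}). For the remaining terms with $\ell \ge 1$, Lemma~\ref{lemma:p0iotak} gives $(p_0-\mu\Unit)^\ell\iota_k(\psi) = \sum_{m=0}^\ell \alpha_{\ell,m;k}\, q_0^m \iota_k(\psi)$, while the Gaussian integral yields
\begin{equation*}
  \seminorm[\big]{}{q_0^m\iota_k(\psi)}^2 = \seminorm{}{\psi}^2 \cdot \integral{\RR}{k^{2m} u^{2m} \E^{-\pi u^2}}{\D u}
\end{equation*}
after substituting $u = x_0/k$, so $\seminorm{}{q_0^m\iota_k(\psi)} = O(k^m)$; combined with $k^m\alpha_{\ell,m;k}\to 0$ for $\ell \ge 1$, this shows $\seminorm{}{(p_0-\mu\Unit)^\ell\iota_k(\psi)} \to 0$, and Cauchy--Schwarz kills the remainder. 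The main technical point—that $\iota_k(\phi)$ behaves as an approximate eigenvector of $p_0$ at $\mu$ with sufficiently controlled error to survive arbitrary polynomial expressions—has already been fully absorbed into Lemma~\ref{lemma:p0iotak}, so the rest is bookkeeping rather than an obstacle.
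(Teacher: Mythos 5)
Your proof is correct and follows essentially the same route as the paper's: both realize $\chi_\phi\circ[\argument]_\mu$ as a limit of the vector states $\chi_{\iota_k(\phi)}$, reduce to powers of $p_0-\mu\Unit$ acting on $\iota_k(\psi)$ via Lemma~\ref{lemma:p0iotak}, and cancel the $O(k^m)$ growth of the Gaussian moments of $q_0^m$ against $k^m\alpha_{\ell,m;k}\to 0$ (the paper computes $\iota_k^*q_0^m\iota_k=c_{m,k}\,\id$ exactly where you estimate $\seminorm{}{q_0^m\iota_k(\psi)}$ and apply Cauchy--Schwarz, and it decomposes $a=\sum_\ell(p_0-\mu\Unit)^\ell a_\ell$ where you use the basis plus the binomial theorem, but these are cosmetic differences).
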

\begin{proof}
  Given a Hermitian and positive elment $a$ of $\Weyl(\RR^{1+n})^{\lie g}$, then in order to show that $[a]_\mu \in \Weyl(\RR^n)^+_\Hermitian$
  it is sufficient to show that $\lim_{k\to \infty} \big(\iota_k^* a \iota_k\big)(\phi) = [a]_{\mu}(\phi)$
  holds for all $\phi\in \Schwartz(\RR^n)$ with respect to the topology on $\Schwartz(\RR^n)$ that is induced by the inner product, 
  because then $\skal{\phi}{[a]_\mu(\phi)} = \lim_{k\to \infty} \skal{\iota_k(\phi)}{a\big(\iota_k(\phi)\big)} \ge 0$
  for all $\phi\in \Schwartz(\RR^n)$.
  
  As $q_j \iota_k = \iota_k q_j$ and $p_j \iota_k = \iota_k p_j$ for all $j\in \{1,\dots,n\}$ and all $k\in \NN$
  by the previous Lemma~\ref{lemma:p0iotak}, and as $\iota_k^* \iota_k = \id_{\Schwartz(\RR^n)}$ by Lemma~\ref{lemma:iotak},
  one has $\big(\iota_k^*a\iota_k\big)(\phi) = [a]_\mu(\phi)$ for all $a \in \genSAlg{q_1,\dots,q_n,p_1,\dots,p_n}$
  and all $k\in \NN$, and this clearly also holds in the limit $k \to \infty$.
  Using that $p_0$ is central in $\Weyl(\RR^{1+n})^{\lie g}$ and starting with the highest power of $p_0$,
  every $a \in \Weyl(\RR^{1+n})^{\lie g}$ admits a decomposition as $a = \sum_{\ell=0}^L (p_0-\mu \Unit)^\ell a_\ell$
  with some $L\in \NN_0$ and with $a_\ell \in \genSAlg{q_1,\dots,q_n,p_1,\dots,p_n}$ for all $\ell \in \{0,\dots,L\}$.
  Then $a_\ell \iota_k = \iota_k [a_\ell]_\mu$ for all $k \in \NN$, $\ell \in \{0,\dots,L\}$, and the previous Lemma~\ref{lemma:p0iotak} shows that
  \begin{align*}
    \iota_k^* a \iota_k
    =
    \sum_{\ell=0}^L \iota_k^* (p_0-\mu \Unit)^\ell \iota_k [a_\ell]_\mu
    =
    [a_0]_\mu + \sum_{\ell=1}^L \sum_{m=0}^\ell \alpha_{\ell,m;k} \iota_k^* q_0^m \iota_k [a_\ell]_\mu
  \end{align*} 
  for all $k\in \NN$. It is easy to see that $\iota_k^* q_0^m \iota_k = c_{m,k} \id_{\Schwartz(\RR^n)}$
  for all $m\in \NN_0$ and all $k\in \NN$, with prefactors $c_{m,k} \in \CC$ given explicitly by
  \begin{align*}
    c_{m,k}
    =
    \frac{1}{k}\integral{\RR}{x_0^m \E^{-\pi x_0^2 / k^2}}{\D x_0}
    =
    k^m \integral{\RR}{y^m \E^{-\pi y^2}}{\D y}
    ,
  \end{align*}
  which is proportional to $k^m$ (for fixed $m \in \NN_0$). So
  $\alpha_{\ell,m;k} \iota_k^* q_0^m \iota_k [a_\ell]_\mu \xrightarrow{k\to \infty} 0$
  for all $\ell \in \{1,\dots,L\}$ and all $m\in \{0,\dots,\ell\}$ as a consequence of the estimates
  from the previous Lemma~\ref{lemma:p0iotak}.
  It follows that $\lim_{k\to \infty} \big(\iota_k^* a \iota_k\big)(\phi) = [a_0]_\mu(\phi) = [a]_\mu(\phi)$ for all $\phi\in \Schwartz(\RR^n)$.
\end{proof}
As the Weyl algebra $\Weyl(\RR^{1+n})$ is an instance of Example~\ref{example:comaspoi}, i.e.~its Poisson bracket is just
the rescaled commutator, the assumptions of Theorem~\ref{theorem:vanishing:neu} are automatically fulfilled (Poisson commuting
elements commute and the momentum $\mu$ is regular). Because of this it makes sense to determine the
quadratic module $\mathcal{R}_\mu$ and the non-commutative vanishing ideal $\mathcal{V}_\mu$ from Definition~\ref{definition:regular}:
\begin{proposition} \label{proposition:weylvr}
  We have $\mathcal{V}_\mu = \genSId{p_0-\mu}$ and 
  \begin{equation}
    \big(\genSId{p_0-\mu}\big)_\Hermitian + \big(\Weyl(\RR^{1+n})^{\lie g} \big)^+_\Hermitian
    =
    \mathcal{R}_\mu
    =
    \set[\big]{a\in \Weyl(\RR^{1+n})^{\lie g}_\Hermitian }{[a]_\mu \in \Weyl(\RR^{n})^+_\Hermitian }
    ,
  \end{equation}
  where $(\genSId{p_0-\mu})_\Hermitian \coloneqq \genSId{p_0-\mu} \cap \Weyl(\RR^{1+n})^{\lie g}_\Hermitian$.
\end{proposition}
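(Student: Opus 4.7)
The plan is to first identify $\mathcal{V}_\mu$ with $\genSId{p_0-\mu}$, and then to establish the three inclusions for $\mathcal{R}_\mu$. The key tool throughout is the positive unital $^*$\=/homomorphism $[\argument]_\mu \colon \Weyl(\RR^{1+n})^{\lie g} \to \Weyl(\RR^n)$ of Proposition~\ref{proposition:PhiWeyl}, whose kernel is precisely $\genSId{p_0-\mu}$ by Lemma~\ref{lemma:weylalgdec} and Definition~\ref{definition:redmapweyl}. Since $\Weyl(\RR^n)$ is an $O^*$\=/algebra, its operator order is induced by vector states, and pulling these back through $[\argument]_\mu$ produces a large supply of common eigenstates of the momentum map with eigenvalue $\mu$.

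For the identification $\mathcal{V}_\mu = \genSId{p_0-\mu}$, the inclusion ``$\supseteq$'' is immediate from Proposition~\ref{proposition:vanishing}. For ``$\subseteq$'', any state $\rho \in \States(\Weyl(\RR^n))$ pulls back to $\rho \circ [\argument]_\mu \in \States(\Weyl(\RR^{1+n})^{\lie g})$, which is automatically a common eigenstate of $p_0$ with eigenvalue $\mu$ because $p_0 - \mu\Unit \in \ker [\argument]_\mu$. Given $a\in\mathcal{V}_\mu$, this forces $\rho([a]_\mu) = 0$ for every $\rho \in \States(\Weyl(\RR^n))$, hence $[a]_\mu = 0$ and $a \in \genSId{p_0-\mu}$, using the standard fact that an element of an ordered $^*$\=/algebra whose order is induced by its states and on which all states vanish must be zero.

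The first inclusion $(\genSId{p_0-\mu})_\Hermitian + (\Weyl(\RR^{1+n})^{\lie g})^+_\Hermitian \subseteq \mathcal{R}_\mu$ is immediate: any $\omega \in \States_{\momentmap,\mu}(\Weyl(\RR^{1+n})^{\lie g})$ vanishes on $\genSId{p_0-\mu}$ by Proposition~\ref{proposition:genIdeal} and is non-negative on the positive Hermitian cone by definition. The middle inclusion $\mathcal{R}_\mu \subseteq \set{a}{[a]_\mu \in \Weyl(\RR^n)^+_\Hermitian}$ uses the same pullback of states. For the final inclusion, given Hermitian $a$ with $[a]_\mu \ge 0$, I would invoke the direct sum decomposition of Lemma~\ref{lemma:weylalgdec} to write $a = b + c$ with $b \in \genSId{p_0-\mu}$ and $c \in \genSAlg{q_1,\dots,q_n,p_1,\dots,p_n}$; since both summands are $^*$\=/stable subspaces, uniqueness of the decomposition forces $b$ and $c$ to be separately Hermitian, and $[a]_\mu$ corresponds to $c$ itself under the natural $^*$\=/isomorphism between $\genSAlg{q_1,\dots,q_n,p_1,\dots,p_n}$ and $\Weyl(\RR^n)$.

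The main obstacle is showing that $c \in (\Weyl(\RR^{1+n})^{\lie g})^+_\Hermitian$, i.e., that operator positivity on $\Schwartz(\RR^n)$ lifts to operator positivity on $\Schwartz(\RR^{1+n})$. I would handle this by a Fubini-type argument: for any $\psi \in \Schwartz(\RR^{1+n})$, setting $\psi_{x_0}(x_1,\dots,x_n) \coloneqq \psi(x_0, x_1,\dots,x_n) \in \Schwartz(\RR^n)$ for each fixed $x_0 \in \RR$, and using that $c$ involves only $q_1,\dots,q_n,p_1,\dots,p_n$ and therefore acts trivially in the $x_0$\=/variable, one obtains
\begin{equation*}
  \skal{\psi}{c(\psi)} = \integral{\RR}{\skal{\psi_{x_0}}{c(\psi_{x_0})}}{\D x_0} \ge 0 \punkt
\end{equation*}
This gives $c \in (\Weyl(\RR^{1+n})^{\lie g})^+_\Hermitian$, and together with $b \in (\genSId{p_0-\mu})_\Hermitian$ closes the chain of inclusions.
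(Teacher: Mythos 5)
Your proposal is correct and follows essentially the same route as the paper: both use the positivity of $[\argument]_\mu$ from Proposition~\ref{proposition:PhiWeyl} to pull back states on $\Weyl(\RR^n)$ to common eigenstates of $p_0$ (giving $\mathcal{V}_\mu \subseteq \ker[\argument]_\mu$ and $[\mathcal{R}_\mu]_\mu \subseteq \Weyl(\RR^n)^+_\Hermitian$), the decomposition $a = b + c$ from Lemma~\ref{lemma:weylalgdec}, and the same Fubini-type slicing argument $\skal{\psi}{c(\psi)} = \int_\RR \skal{\psi_{x_0}}{[c]_\mu(\psi_{x_0})}\,\D x_0 \ge 0$ to lift positivity of $[c]_\mu$ to positivity of $c$. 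The only cosmetic difference is that the paper phrases the state pullback in terms of vector states $\chi_\phi$ while you use arbitrary states, which is equivalent here.
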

\begin{proof}
  As $\ker [\argument]_\mu = \genSId{p_0-\mu}$, the previous Proposition~\ref{proposition:PhiWeyl} especially shows that
  every linear functional $\chi_\phi \circ [\argument]_\mu \colon \Weyl(\RR^{1+n})^{\lie g} \to \CC$, for any vector state
  $\chi_\phi \in \States\big( \Weyl(\RR^n) \big)$ with $\phi \in \Schwartz(\RR^{n})$, $\seminorm{}{\phi} = 1$,
  is a state on $\Weyl(\RR^{1+n})^{\lie g}$, and $\chi_\phi \circ [\argument]_\mu$ even is an eigenstate of $p_0$ with eigenvalue $\mu$ by Proposition~\ref{proposition:genIdeal}.
  From this it follows that $[\mathcal{V}_\mu]_\mu \subseteq \{0\}$, so $\mathcal{V}_\mu \subseteq \ker [\argument]_\mu = \genSId{p_0-\mu}$,
  and $[\mathcal{R}_\mu]_\mu \subseteq \Weyl(\RR^{n})^+_\Hermitian$.
  Conversely, $\mathcal{V}_\mu \supseteq \genSId{p_0-\mu}$ holds in general, see Theorem~\ref{theorem:vanishing:neu}, so $\mathcal{V}_\mu = \genSId{p_0-\mu}$.
  
  The inclusion $(\genSId{p_0-\mu})_\Hermitian \subseteq \mathcal{R}_\mu$ holds as a consequence of Proposition~\ref{proposition:genIdeal},
  it is clear that $(\Weyl(\RR^{1+n})^{\lie g} )^+_\Hermitian \subseteq \mathcal{R}_\mu$, and
  $\mathcal{R}_\mu \subseteq \set{a\in \Weyl(\RR^{1+n})^{\lie g}_\Hermitian }{[a]_\mu \in \Weyl(\RR^{n})^+_\Hermitian }$
  holds because $[\mathcal{R}_\mu]_\mu \subseteq \Weyl(\RR^{n})^+_\Hermitian$.
  Finally, let any $a \in \Weyl(\RR^{1+n})^{\lie g}_\Hermitian$ with $[a]_\mu \in \Weyl(\RR^{n})^+_\Hermitian$ be given.
  By Lemma~\ref{lemma:weylalgdec} there exist unique $b \in \genSId{p_0-\mu}$ and $c \in \genSAlg{q_1,\dots,q_n,p_1,\dots,p_n}$
  such that $a = b+c$, and from $a=a^*$ it follows that $b=b^*$ and $c=c^*$ because $\genSId{p_0-\mu}$ and $\genSAlg{q_1,\dots,q_n,p_1,\dots,p_n}$
  are stable under $\argument^*$. It only remains to show that $c \in \big(\Weyl(\RR^{1+n})^{\lie g} \big)^+_\Hermitian$:
  
  Note that $[c]_\mu = [a]_\mu \in \Weyl(\RR^n)^+_\Hermitian$. For all $\psi \in \Schwartz(\RR^{1+n})$ and all $x_0 \in \RR$, the function 
  $\psi_{x_0} \colon \RR^n \to \CC$, $(x_1,\dots,x_n) \mapsto \psi_{x_0}(x_1,\dots,x_n) \coloneqq \psi(x_0,x_1,\dots,x_n)$
  is an element of $\Schwartz(\RR^n)$, and $(c(\psi))_{x_0} = [c]_\mu(\psi_{x_0})$ holds because this can easily be checked
  for products of the generators $q_j, p_j$ with $j\in \{1,\dots,n\}$. Therefore
  \begin{align*}
    \skal[\big]{\psi}{c(\psi)}
    =
    \integral{\RR}{\skal[\big]{\psi_{x_0}}{(c(\psi))_{x_0}}}{\D x_0}
    =
    \integral{\RR}{\skal[\big]{\psi_{x_0}}{[c]_\mu (\psi_{x_0})}}{\D x_0}
    \ge
    0
  \end{align*}
  holds for all $\psi \in \Schwartz(\RR^{1+n})$, so indeed $c \in (\Weyl(\RR^{1+n})^{\lie g} )^+_\Hermitian$.
\end{proof}

\begin{theorem}
  The tuple $\big( \Weyl(\RR^n), [\argument]_\mu \big)$ is the $p_0$-reduction at $\mu$ of the Weyl algebra $\Weyl(\RR^{1+n})$.
\end{theorem}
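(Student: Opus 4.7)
The plan is to invoke Theorem~\ref{theorem:vanishing:neu}: since $\Weyl(\RR^{1+n})$ is an instance of Example~\ref{example:comaspoi}, Poisson-commuting elements commute automatically, and the regularity of $\mu$ follows because every $^*$\=/ideal of $\Weyl(\RR^{1+n})^{\lie g}$ is a Poisson ideal for the commutator bracket. So it suffices to verify that $\big(\Weyl(\RR^n),[\argument]_\mu\big)$ satisfies the two conditions listed in that theorem, on top of being a candidate pair.

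First I would collect the structural properties of $[\argument]_\mu$. By the discussion right after Definition~\ref{definition:redmapweyl}, $[\argument]_\mu$ is a unital $^*$\=/homomorphism from $\Weyl(\RR^{1+n})^{\lie g}$ to $\Weyl(\RR^n)$. Because both algebras carry the commutator-derived Poisson bracket of Example~\ref{example:comaspoi} with the same $\hbar$, any such unital $^*$\=/homomorphism is automatically compatible with Poisson brackets. Positivity is exactly the content of Proposition~\ref{proposition:PhiWeyl}. Surjectivity is immediate, since $[q_j]_\mu = q_j$ and $[p_j]_\mu = p_j$ for $j\in \{1,\dots,n\}$ exhibit all generators of $\Weyl(\RR^n)$ in the image.

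Next I would verify condition~\textit{i.)} of Theorem~\ref{theorem:vanishing:neu}. Lemma~\ref{lemma:weylalgdec} provides the direct sum decomposition $\Weyl(\RR^{1+n})^{\lie g} = \genSId{p_0-\mu} \oplus \genSAlg{q_1,\dots,q_n,p_1,\dots,p_n}$, on whose second summand $[\argument]_\mu$ restricts to an injective map (it sends the distinguished basis $p^kq^\ell$ with $k_0=\ell_0=0$ bijectively to the corresponding basis of $\Weyl(\RR^n)$), while the first summand is sent to zero. Hence $\ker[\argument]_\mu = \genSId{p_0-\mu}$, and this coincides with $\mathcal{V}_\mu$ by Proposition~\ref{proposition:weylvr}. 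Condition~\textit{ii.)} is then literally the inclusion $\set{a\in \Weyl(\RR^{1+n})^{\lie g}_\Hermitian}{[a]_\mu \in \Weyl(\RR^n)^+_\Hermitian} \subseteq \mathcal{R}_\mu$, which is one of the identities already proved in Proposition~\ref{proposition:weylvr}.

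There is essentially no further obstacle: the genuine work was done in Propositions~\ref{proposition:PhiWeyl} and~\ref{proposition:weylvr}, where the family $\iota_k$ played the role of approximate eigenvectors of $p_0$ (substituting for actual eigenvectors, which do not exist in $\Schwartz(\RR^{1+n})$) and allowed us to produce eigenstates of $p_0$ with eigenvalue $\mu$ as weak-$^*$ limits of vector states. Once those two propositions are in hand, the theorem follows by the invocation of Theorem~\ref{theorem:vanishing:neu} sketched above.
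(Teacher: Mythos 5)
Your proposal is correct and follows essentially the same route as the paper: reduce to the two conditions of Theorem~\ref{theorem:vanishing:neu} (applicable since the Poisson bracket is the rescaled commutator, so Poisson-commuting elements commute and $\mu$ is regular), cite Proposition~\ref{proposition:PhiWeyl} for positivity and Proposition~\ref{proposition:weylvr} for $\ker[\argument]_\mu = \genSId{p_0-\mu} = \mathcal{V}_\mu$ and for condition~\textit{ii.)}. The extra detail you supply on the kernel via the decomposition of Lemma~\ref{lemma:weylalgdec} is exactly what the paper records informally after Definition~\ref{definition:redmapweyl}.
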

\begin{proof}
  As $[\argument]_\mu \colon \Weyl(\RR^{1+n})^{\lie g} \to \Weyl(\RR^n)$ is a positive unital $^*$\=/homomorphism
  by Proposition~\ref{proposition:PhiWeyl} and automatically is compatible with Poisson brackets, 
  we only have to check that the two conditions of Theorem~\ref{theorem:vanishing:neu} for the $p_0$-reduction at $\mu$ 
  are fulfilled:
  
  Surjectivity of $[\argument]_\mu \colon \Weyl(\RR^{1+n})^{\lie g} \to \Weyl(\RR^n)$ is clear, and 
  the previous Proposition~\ref{proposition:weylvr} shows that $\ker [\argument]_\mu = \genSId{p_0-\mu} = \mathcal{V}_\mu$ and that
  every $a\in \Weyl(\RR^{1+n})^{\lie g}_\Hermitian$ which fulfils $[a]_\mu \in \Weyl(\RR^n)^+_\Hermitian$ is an element of $\mathcal{R}_\mu$.
\end{proof}
While this result for the reduction of the Weyl algebra is the naively expected one, which might be seen as further justification for the general definition
of the reduction in Section~\ref{sec:reduction:general}, it should be noted that this example is ill-behaved with respect to the 
reduction of states discussed in Section~\ref{sec:reducedStates}, so the reduction of representable Poisson $^*$\=/algebras
behaves better than the reduction of their states:
\begin{proposition} \label{proposition:noeigenstate}
  There is no eigenstate of $p_0$ on $\Weyl(\RR^{1+n})$.
\end{proposition}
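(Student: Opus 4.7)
The plan is to derive a contradiction directly from the canonical commutation relation between $q_0$ and $p_0$, using the characterization of eigenstates in Proposition~\ref{proposition:eigenstates}. Suppose, for contradiction, that $\omega \in \States(\Weyl(\RR^{1+n}))$ is an eigenstate of $p_0$. Since $p_0$ is Hermitian and $\omega$ is Hermitian, the eigenvalue $\mu \coloneqq \dupr{\omega}{p_0}$ is a real number.

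By the equivalence \refitem{item:eigenstates:eigenvalue}$\Leftrightarrow$\refitem{item:eigenstates:mult} of Proposition~\ref{proposition:eigenstates}, applied to the Hermitian element $p_0$, we obtain the two identities
\begin{equation*}
  \dupr{\omega}{p_0\, b} = \mu \dupr{\omega}{b}
  \quad\text{and}\quad
  \dupr{\omega}{b\, p_0} = \mu \dupr{\omega}{b}
\end{equation*}
for every $b\in \Weyl(\RR^{1+n})$, where we used $\cc{\mu} = \mu$. Specializing to $b = q_0$ and subtracting yields $\dupr{\omega}{q_0 p_0 - p_0 q_0} = 0$.

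On the other hand, the defining action of $q_0$ and $p_0$ on $\Schwartz(\RR^{1+n})$ gives $q_0 p_0 - p_0 q_0 = \I \Unit$. Combined with the normalization $\dupr{\omega}{\Unit} = 1$, this forces
\begin{equation*}
  0 = \dupr{\omega}{q_0 p_0 - p_0 q_0} = \dupr{\omega}{\I \Unit} = \I
  \komma
\end{equation*}
the desired contradiction. No step here is a serious obstacle; the only subtlety is confirming reality of $\mu$, which is immediate from Hermiticity of $p_0$ and of $\omega$, so that $\cc{\mu}$ may be replaced by $\mu$ in the second identity from Proposition~\ref{proposition:eigenstates}.
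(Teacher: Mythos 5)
Your proof is correct and is essentially identical to the paper's: both derive the contradiction $0 = \dupr{\omega}{q_0p_0 - p_0q_0} = \I$ by applying the multiplicativity property of eigenstates from Proposition~\ref{proposition:eigenstates} to the canonical commutation relation. The only difference is presentational — you spell out the reality of $\mu$ and the two factorization identities, which the paper leaves implicit.
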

\begin{proof}
  By Proposition~\ref{proposition:eigenstates}, any eigenstate $\omega$ of $p_0$ on $\Weyl(\RR^{1+n})$ would have to fulfil
  \begin{align*}
    0 = \dupr{\omega}{q_0}\dupr{\omega}{p_0} - \dupr{\omega}{p_0}\dupr{\omega}{q_0} = \dupr{\omega}{q_0p_0-p_0q_0} = \I \dupr{\omega}{\Unit} = \I
    .
  \end{align*}
\end{proof}
\begin{corollary}
  There is no averaging operator $\argument_\av \colon \Weyl(\RR^{1+n}) \to \Weyl(\RR^{1+n})^{\lie g}$.
\end{corollary}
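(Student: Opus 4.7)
The plan is to derive a contradiction with Proposition~\ref{proposition:noeigenstate}. Suppose toward contradiction that an averaging operator $\argument_\av \colon \Weyl(\RR^{1+n}) \to \Weyl(\RR^{1+n})^{\lie g}$ exists. The key idea is that such an operator, combined with any state on the reduced algebra $\Weyl(\RR^n)$, would manufacture an eigenstate of $p_0$ with eigenvalue $\mu$ on the full Weyl algebra, which has been shown to be impossible.

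Concretely, first I would choose some state $\rho$ on the reduced algebra $\Weyl(\RR^n)$; for instance, any vector state $\chi_\phi$ with $\phi \in \Schwartz(\RR^n)$ and $\seminorm{}{\phi} = 1$ does the job (such states clearly exist). Then I would invoke part~\refitem{item:reducedStates:lift} of Theorem~\ref{theorem:reducedStates}, which is applicable precisely because we have assumed an averaging operator. That part produces a state $\omega \in \States(\Weyl(\RR^{1+n}))$ which is $p_0$-reducible at $\mu$ with $\omega_{\mu\mred} = \rho$, and moreover asserts (by the parenthetical remark, which refers back to the first part of the same theorem) that $\omega \in \States_{p_0,\mu}(\Weyl(\RR^{1+n}))$. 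In other words, $\omega$ is an eigenstate of $p_0 = \momentmap(1)$ with eigenvalue $\mu$.

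This directly contradicts Proposition~\ref{proposition:noeigenstate}, which shows that no eigenstate of $p_0$ exists on $\Weyl(\RR^{1+n})$ at all (regardless of the eigenvalue), since the canonical commutation relation $q_0 p_0 - p_0 q_0 = \I \Unit$ rules out the multiplicativity condition~\refitem{item:eigenstates:mult} of Proposition~\ref{proposition:eigenstates} on the pair $(q_0, p_0)$. Hence the assumption that an averaging operator exists must be false. There is no real obstacle here: the work has already been done in Theorem~\ref{theorem:reducedStates} and Proposition~\ref{proposition:noeigenstate}, and the corollary is essentially a one-line application, the only subtlety being to observe that $\States(\Weyl(\RR^n))$ is nonempty so that the hypothesis of Theorem~\ref{theorem:reducedStates}~\refitem{item:reducedStates:lift} can actually be invoked.
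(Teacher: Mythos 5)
Your proposal is correct and is exactly the paper's argument: the paper's proof is the one-line observation that an averaging operator would contradict Proposition~\ref{proposition:noeigenstate} via Theorem~\ref{theorem:reducedStates}, part~\refitem{item:reducedStates:lift}. You merely spell out the (correct) details, including the harmless point that $\States(\Weyl(\RR^n))$ is nonempty.
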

\begin{proof}
  The existence of an averaging operator would lead to a contradiction between the previous Proposition~\ref{proposition:noeigenstate}
  and Theorem~\ref{theorem:reducedStates}, part~\refitem{item:reducedStates:lift}.
\end{proof}
The non-existence of an averaging operator in this case is surprising in so far as in the analogous 
commutative case there does exist an averaging operator for the translation in the $0$-coordinate:
Consider the polynomial algebra with pointwise order
on the cotangent space $\Cotangent \RR^{1+n}$ with standard coordinates $q_0,\dots,q_n,p_0,\dots,p_n$ and Poisson bracket obtained from the canonical
symplectic form. An averaging operator is given by restricting polynomials to the hyperplane
of $\Cotangent \RR^{1+n}$ where $q_0$ vanishes and extending the result to polynomials constant in 
$q_0$-direction. This, however, does no longer work for the Weyl algebra
because $q_0^2 + p_0^2 - \Unit \in \Weyl(\RR^{1+n})^+_\Hermitian$ but $p_0^2 - \Unit \notin \Weyl(\RR^{1+n})^+_\Hermitian$.

\section{Reduction of the Polynomial Algebra} \label{sec:polynomials}
Fix some $n \in \NN$ for the rest of this section.
It is well-known that the complex projective space $\CC\PP^n$ can be obtained as the
quotient of the $(1+2n)$-sphere $\mathbb S^{1+2n} \subseteq \CC^{1+n}$ by the action of $\group 
U(1)$ via multiplication.
A similar construction for different signatures yields the hyperbolic disc as a quotient, see e.g.\ \cite{schmitt.schoetz:PreprintWickRotationsInDQ}.
This procedure can be understood as Marsden--Weinstein reduction,
and in this section we discuss its algebraic analogue in terms of representable Poisson $^*$\=/algebras.
One has a choice of the class of functions that one wants to consider:
Taking the smooth functions $\Smooth(\CC^{1+n})$ results in a special case of the
reduction of Poisson manifolds as described in Section~\ref{sec:PoiMan}.
Another choice is to work with the polynomial functions $\Polynomials(\CC^{1+n})$,
which allows a non-formal deformation to the non-commutative setting.
This deformation is the main topic of Part II of this article,
and here we lay the foundations by describing the reduction in the commutative case.

For $i\in \{0,\dots,n\}$, let $z_i, \cc z_i \colon \CC^{1+n} \to \CC$, $z_i(w) = w_i$, $\cc 
z_i(w) = \cc w_i$ be the standard coordinates and their complex conjugates on $\CC^{1+n}$.
The algebra of polynomials in $z_i$ and $\cc z_i$ will be denoted by $\Polynomials(\CC^{1+n})$.
It is $(\ZZ \times \ZZ)$-graded by the holomorphic and antiholomorphic degree,
i.e.\ $\Polynomials(\CC^{1+n}) = \bigoplus_{K, L \in \ZZ} \Polynomials(\CC^{1+n})^{K,L}$ 
with $\Polynomials(\CC^{1+n})^{K,L}$ spanned by $z_0^{k_0} \dots z_{n}^{k_n} \cc 
z_0^{\ell_0} \dots \cc z_n^{\ell_n}$ with $k_0, \dots, k_n, \ell_0, \dots, \ell_n \in \NN_0$
satisfying $k_0 + \dots + k_n = K$ and $\ell_0 + \dots + \ell_n = L$
if $K,L \geq 0$, and $\Polynomials(\CC^{1+n})^{K,L} =\{0\}$ otherwise.

Various objects in this section will depend on the choice of a signature $s \in \lbrace 1, \dots, 1+n \rbrace$.
One of them is a tuple of coefficients $\nu^{(s)} \in \lbrace -1, 1 \rbrace^{1+n}$,
which is defined as $\nu^{(s)}_i \coloneqq 1$ if $i \in \{0, \dots, s-1\}$ and $\nu^{(s)}_i 
\coloneqq -1$ if $i \in \{ s, \dots, n\}$.
We will usually omit the superscript $^{(s)}$ from our notation, e.g.\ we will write $\nu_i$
instead of $\nu_i^{(s)}$. 
It is easy to check that for any signature $s$,
\begin{equation}
	\poi f g \coloneqq \poi f g^{(s)} \coloneqq \frac 1 \I \sum_{j=0}^n \nu_j \bigg(\frac{\partial f}{\partial \cc z_j} \frac{\partial g}{\partial z_j} - \frac{\partial f}{\partial z_j} \frac{\partial g}{\partial \cc z_j}\bigg)
\end{equation}
defines a Poisson bracket on $\Polynomials(\CC^{1+n})$.
Let $\lie u_1 = \set {\I \alpha \Unit} {\alpha \in \RR}$ be the abelian Lie algebra of the Lie 
group $\group U(1)$ and consider the momentum map $\momentmap \colon \lie u_1 \to 
\Polynomials(\CC^{1+n})$, $\I \alpha \Unit \mapsto \alpha \sum_{i=0}^n \nu_i z_i \cc z_i$.
Since $\lie u_1$ is $1$\=/dimensional, $\momentmap$ is uniquely determined by the image of $\I 
\Unit$ and we abuse notation and write $\momentmap \coloneqq \momentmap^{(s)} \coloneqq 
\momentmap(\I \Unit) = \sum_{i=0}^n \nu_i z_i \cc z_i$ also for this image.
The momentum map $\momentmap$ induces a right action $\argument \racts \argument \colon \Polynomials(\CC^{1+n}) 
\times \lie u_1 \to \Polynomials(\CC^{1+n})$ by derivations,
which is, on monomials, given explicitly by
\begin{equation}
	z_k \racts (\I \alpha \Unit) 
	= 
	\alpha \poi{z_k}{\momentmap} 
	=
	\I\alpha z_k 
	\quad\text{and}\quad
	\cc z_\ell \racts (\I \alpha \Unit) 
	= 
	\alpha \poi{\cc z_\ell}{\momentmap} 
	=
	-\I\alpha \cc z_\ell
\end{equation}
for all $k,\ell \in \{0,\dots, n\}$.
In particular, this action integrates to the usual action of the Lie group $\group U(1)$ on 
$\Polynomials(\CC^{1+n})$ by automorphisms,
$z_k \racts \E^{\I \alpha} = \E^{\I \alpha} z_k$ and
$\cc z_\ell \racts \E^{\I \alpha} = \E^{-\I \alpha} \cc z_\ell$. Again, we use the same symbol for the actions of the Lie group $\group{U}(1)$
and of its Lie algebra $\lie u_1$.
Since $\group U(1)$ is connected, $f \in \Polynomials(\CC^{1+n})$ is $\lie u_1$-invariant if and 
only if it is $\group U(1)$-invariant, 
which is the case if and only if $f \in \bigoplus_{k \in \NN_0} \Polynomials(\CC^{1+n})^{k,k}$.

We endow $\Polynomials(\CC^{1+n})$ with the pointwise order, i.e.\ the order induced by the evaluation functionals
$\delta_w \colon \Polynomials(\CC^{1+n}) \to \CC$, $f \mapsto \dupr{\delta_w}{f} \coloneqq 
f(w)$ with $w\in \CC^{1+n}$ like in Proposition~\ref{proposition:inducedOrder}. 
This order is, by construction, induced by its states and therefore 
$\Polynomials(\CC^{1+n})$ becomes a representable Poisson $^*$-algebra. It is well-known that the 
pointwise order on the polynomial algebra does not coincide with the algebraic order,
and, contrary to the case of smooth functions, there even exist algebraically positive Hermitian linear functionals on $\Polynomials(\CC^{1+n})$
that are not positive with respect to the pointwise order, see e.g.\ \cite[Thms.~10.36, 
10.37]{schmuedgen:invitationToStarAlgebras}.

Since the $\group U(1)$-action on $\Polynomials(\CC^{1+n})$ preserves the degree,
it follows for fixed $f \in \Polynomials(\CC^{1+n})$ that all polynomials $f \racts u$ for $u \in 
\group U(1)$ lie in a finite dimensional subspace of $\Polynomials(\CC^{1+n})$.
This makes it easy to check that 
$\argument_\av \colon \Polynomials(\CC^{1+n}) \to \Polynomials(\CC^{1+n})^{\lie u_1}$,
\begin{equation} \label{eq:averagingOpForUOne}
	f \mapsto f_\av 
	\coloneqq 
	\frac 1 {2 \pi} \int_0^{2 \pi} (f \racts \E^{\I \alpha}) \,\D \alpha \komma
\end{equation}
defines an averaging operator in the sense of Definition~\ref{definition:averaging}.

\subsection{The Reduction}

For a momentum $\mu \in \lie u_1^*$ we again abuse notation 
and write $\mu \coloneqq \mu(\I \Unit) \in \RR$ also for the image of $\I \Unit$. In this sense, we will always assume that $\mu > 0$.
The goal of this section is to determine the $\momentmap$\=/reduction  
of $\Polynomials(\CC^{1+n})$ at $\mu$, so we first determine an explicit description of the quadratic
module $\mathcal{R}_\mu \coloneqq \mathcal{R}_\mu^{(s)}$ and of the $^*$\=/ideal $\mathcal{V}_\mu \coloneqq \mathcal{V}_\mu^{(s)}$ from Definition~\ref{definition:regular}.
We denote the $\mu$-levelset of $\momentmap$ by
\begin{equation}
  \mathcal Z_\mu \coloneqq \mathcal Z_\mu^{(s)} \coloneqq \set[\big] { w \in \CC^{1+n} }{ \momentmap(w) = \mu }
  \punkt
\end{equation}

\begin{lemma} \label{lemma:vrPoly}
  Using the notation $(\genSId{\momentmap-\mu})_\Hermitian \coloneqq \genSId{\momentmap-\mu} \cap \Polynomials(\CC^{1+n})^{\lie u_1}_\Hermitian$ we have
  \begin{align}
    \genSId{\momentmap-\mu}
    &=
    \mathcal{V}_\mu
    =
    \set[\big]{f\in \Polynomials(\CC^{1+n})^{\lie u_1}}{ f(w) = 0 \textup{ for all }w\in \Levelset_\mu }
  \shortintertext{and}
    \big(\genSId{\momentmap-\mu}\big)_\Hermitian + \big( \Polynomials(\CC^{1+n})^{\lie u_1} \big)^+_\Hermitian
    &=
    \mathcal{R}_\mu
    =
    \set[\big]{f \in \Polynomials(\CC^{1+n})^{\lie u_1}_\Hermitian}{ f(w) \ge 0 \textup{ for all }w\in \Levelset_\mu }
    \punkt
    \label{eq:RmuPolyChar}
  \end{align}
\end{lemma}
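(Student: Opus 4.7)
The plan is to first establish the ideal equality $\genSId{\momentmap-\mu} = \{f \in \Polynomials(\CC^{1+n})^{\lie u_1} : f|_{\Levelset_\mu} = 0\}$ via a real Nullstellensatz argument, then deduce the $\mathcal{V}_\mu$ chain, and finally tackle the $\mathcal{R}_\mu$ identities by a Positivstellensatz-style decomposition. Throughout, the averaging operator $\argument_\av$ from \eqref{eq:averagingOpForUOne}, together with $\lie u_1$-invariance of $\momentmap - \mu$ (so that $(\momentmap - \mu)\cdot a_\av = ((\momentmap - \mu)\cdot a)_\av$ and averaging preserves pointwise non-negativity), is the key tool for lifting arguments from $\Polynomials(\CC^{1+n})$ to $\Polynomials(\CC^{1+n})^{\lie u_1}$.

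For the ideal part, writing $z_j = x_j + \I y_j$ realises $\momentmap - \mu = \sum_j \nu_j(x_j^2 + y_j^2) - \mu$ as a quadratic polynomial in $\RR[x_0, y_0, \dots, x_n, y_n]$. Since $\mu > 0$, this polynomial is irreducible and its gradient has no zero on $\Levelset_\mu$, so $\Levelset_\mu$ is a smooth real hypersurface of real dimension $2n+1$, matching the complex dimension of the irreducible complex zero locus; hence $\Levelset_\mu$ is Zariski-dense in that complex variety. Hilbert's Nullstellensatz then gives that any $\tilde f \in \Polynomials(\CC^{1+n})$ vanishing on $\Levelset_\mu$ factors as $\tilde f = (\momentmap - \mu)\,\tilde q$; if $\tilde f$ is additionally $\lie u_1$-invariant, averaging yields $\tilde f = (\momentmap - \mu)\,\tilde q_\av \in \genSId{\momentmap-\mu}$. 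The reverse inclusion being immediate, combining with Proposition~\ref{proposition:genIdeal} and the observation that $\delta_w \in \States_{\momentmap,\mu}(\Polynomials(\CC^{1+n})^{\lie u_1})$ for every $w \in \Levelset_\mu$ produces the full chain of equalities for $\mathcal{V}_\mu$.

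For the $\mathcal{R}_\mu$ identities, the inclusions $(\genSId{\momentmap-\mu})_\Hermitian + (\Polynomials(\CC^{1+n})^{\lie u_1})^+_\Hermitian \subseteq \mathcal{R}_\mu \subseteq \{f : f|_{\Levelset_\mu} \geq 0\}$ are immediate: the ideal summand is annihilated by every $\omega \in \States_{\momentmap,\mu}$ by Proposition~\ref{proposition:genIdeal}, the positive-cone summand is non-negative on every state, and $\mathcal{R}_\mu$ is contained in the non-negative-on-$\Levelset_\mu$ set by specialising to the eigenstates $\delta_w$ with $w \in \Levelset_\mu$. The crux is to produce, for each $f \in \Polynomials(\CC^{1+n})^{\lie u_1}_\Hermitian$ with $f|_{\Levelset_\mu} \geq 0$, a pointwise non-negative $\lie u_1$-invariant polynomial $h$ satisfying $h|_{\Levelset_\mu} = f|_{\Levelset_\mu}$; once this is achieved, the ideal equality already proved forces $f - h \in (\genSId{\momentmap-\mu})_\Hermitian$, delivering the desired decomposition.

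Constructing this non-negative extension $h$ is the main obstacle and is genuinely a Positivstellensatz-type problem. In the positive-definite case $s = n+1$, where $\Levelset_\mu$ is a round sphere, the natural route is to apply Schmüdgen's or Putinar's Positivstellensatz to $f + \epsilon > 0$ on the compact set $\Levelset_\mu$, obtaining $f + \epsilon = \sigma_\epsilon + q_\epsilon(\momentmap - \mu)$ with $\sigma_\epsilon$ a sum of Hermitian squares in $\Polynomials(\CC^{1+n})$ and $q_\epsilon \in \Polynomials(\CC^{1+n})_\Hermitian$, then average and pass to the limit $\epsilon \searrow 0$ (or invoke a saturated version of the Positivstellensatz handling $f|_{\Levelset_\mu} \geq 0$ directly). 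For indefinite signatures where $\Levelset_\mu$ is a non-compact quadric, a different real-algebraic input is required; I would exploit the bihomogeneous decomposition of $\Polynomials(\CC^{1+n})^{\lie u_1}$ in the generators $X_{ij} = z_i \cc z_j$ together with the relation $\momentmap = \sum_i \nu_i X_{ii}$, reducing the problem on each bihomogeneous subspace to a finite-dimensional linear-algebraic question that can be solved explicitly.
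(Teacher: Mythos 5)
Your treatment of the vanishing\-/ideal half is correct but follows a genuinely different route from the paper: you invoke the real Nullstellensatz (Zariski density of the smooth real quadric $\Levelset_\mu$ in its irreducible complexification, then the Nullstellensatz for the prime ideal generated by $\momentmap-\mu\Unit$) followed by averaging, whereas the paper argues elementarily by homogenization: replacing $f=\sum_\ell f_\ell$ by $f_{\mathrm h}=\sum_{\ell}(\momentmap/\mu)^{d-\ell}f_\ell$ changes $f$ only by an element of $\genSId{\momentmap-\mu}$, agrees with $f$ on $\Levelset_\mu$, and by homogeneity must vanish on the open cone $\set{w}{\momentmap(w)>0}$ whenever $f$ vanishes on $\Levelset_\mu$, forcing $f_{\mathrm h}=0$. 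Both arguments work; if you keep yours, cite the real Nullstellensatz (Dubois--Risler) for the Zariski\-/density step rather than attributing everything to Hilbert.

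The $\mathcal{R}_\mu$ half, however, has a genuine gap. You correctly reduce the problem to extending $f\at{\Levelset_\mu}\ge 0$ to a globally pointwise\-/nonnegative invariant polynomial agreeing with $f$ on $\Levelset_\mu$, but you then treat this as a sums\-/of\-/squares Positivstellensatz problem, which is both harder than necessary and does not close. First, Schm\"udgen/Putinar only applies to $f+\epsilon\Unit$, and the resulting representations $\sigma_\epsilon+q_\epsilon(\momentmap-\mu\Unit)$ carry no degree control as $\epsilon\searrow 0$, so ``passing to the limit'' is not available; indeed the paper's own corollary on $\CC\PP^n$ (equation \eqref{eq:fanaPositivstellensatzCPn}) shows that the $\epsilon$ cannot be removed from the sums\-/of\-/squares description of $\mathcal{R}_\mu$, so the ``saturated version'' you hope to invoke is false here. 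Second, the indefinite signatures are left as an unexecuted sketch. The point you miss is that $(\Polynomials(\CC^{1+n})^{\lie u_1})^+_\Hermitian$ is the \emph{pointwise} cone, not the algebraic one, so an explicit nonnegative extension suffices and no Positivstellensatz is needed: with $f_{\mathrm h}$ the homogenization as above, $f_{\mathrm h}\ge 0$ on $\set{w}{\momentmap(w)>0}$, and $f^\sim\coloneqq(2\mu^2)^{-1}(\momentmap-\mu\Unit)^2\big(f_{\mathrm h}^2+\Unit\big)+f_{\mathrm h}$ is nonnegative everywhere, since on $\set{w}{\momentmap(w)\le 0}$ the prefactor $(2\mu^2)^{-1}(\momentmap(w)-\mu)^2$ is at least $1/2$ and $f_{\mathrm h}(w)^2+1\ge 2\abs{f_{\mathrm h}(w)}$, while $f-f^\sim\in(\genSId{\momentmap-\mu})_\Hermitian$. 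This single elementary formula handles all signatures at once.
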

\begin{proof}
  From Proposition~\ref{proposition:genIdeal} it follows that $\genSId{\momentmap-\mu} \subseteq \mathcal{V}_\mu$ and that
  $(\genSId{\momentmap-\mu})_\Hermitian \subseteq \mathcal{R}_\mu$. The inclusion
  $( \Polynomials(\CC^{1+n})^{\lie u_1} )^+_\Hermitian \subseteq \mathcal{R}_\mu$ is clear. Moreover,
  as all evaluation functionals $\delta_w \colon \Polynomials(\CC^{1+n})^{\lie u_1} \to \CC$, $f\mapsto \dupr{\delta_w}{f} \coloneqq f(w)$
  with $w\in \Levelset_\mu$ are eigenstates of $\momentmap$ with eigenvalue $\mu$, every $f\in \mathcal{V}_\mu$
  fulfils $f(w) = 0$ for all $w\in \Levelset_\mu$, and every $f\in \mathcal{R}_\mu$
  fulfils $f(w) \ge 0$ for all $w\in \Levelset_\mu$.
  It only remains to show that every $f \in \Polynomials(\CC^{1+n})^{\lie u_1}$
  that fulfils $f(w) = 0$ for all $w\in \Levelset_\mu$ is an element of $\genSId{\momentmap-\mu}$,
  and that every $f \in \Polynomials(\CC^{1+n})^{\lie u_1}_\Hermitian$
  that fulfils $f(w) \ge 0$ for all $w\in \Levelset_\mu$ is an element of $(\genSId{\momentmap-\mu})_\Hermitian + ( \Polynomials(\CC^{1+n})^{\lie u_1} )^+_\Hermitian$:
  
  For any $f \in \Polynomials(\CC^{1+n})^{\lie u_1}$, define its homogenization
  $f_{\mathrm h} \coloneqq \sum_{\ell = 0}^d (\momentmap / \mu)^{d-\ell}f_\ell \in \Polynomials(\CC^{1+n})^{d,d}$,
  where $f_\ell \in \Polynomials(\CC^{1+n})^{\ell,\ell}$, $\ell \in \NN_0$,
  are the homogeneous components of $f$ so that $f = \sum_{\ell = 0}^\infty f_\ell$, 
  and where $d\in \NN_0$ is minimal such that $f_\ell = 0$ for all $\ell \in \NN_0$ with $\ell > d$.
  Then $f_{\mathrm h}(w) = f(w)$ for all $w\in \Levelset_\mu$ and
 	\begin{equation*}
    f - f_{\mathrm h}
		=
		\sum_{\ell=0}^{d} \frac{\mu^{d-\ell}\Unit - \momentmap^{d-\ell}}{\mu^{d-\ell}} f_\ell 
		= 
		(\mu \Unit- \momentmap) \sum_{\ell=0}^{d-1} \frac{f_\ell}{\mu^{d-\ell}} \sum_{k=1}^{d-\ell} \mu^{d-\ell-k} \momentmap^{k-1}
		\in
		\genSId{\momentmap-\mu}
		.
		\tag{$\clubsuit$}
		\label{eq:vrPoly:intern}
	\end{equation*}
	Note also that $f_{\mathrm h}$ is Hermitian if $f$ is Hermitian.
	
	Now consider the case of a polynomial $f \in \Polynomials(\CC^{1+n})^{\lie u_1}$
  that fulfils $f(w) = 0$ for all $w\in \Levelset_\mu$. Then 
  $f_{\mathrm h}(w) = \momentmap(w)^{d/2} f_{\mathrm h}\big(\momentmap(w)^{-1/2} w\big)= \momentmap(w)^{d/2} f\big(\momentmap(w)^{-1/2} w\big) = 0$
  for all $w \in \CC^{1+n}$ with $\momentmap(w) > 0$, which form an open and non-empty subset of $\CC^{1+n}$. Therefore $f_{\mathrm h} = 0$
  and $f = f-f_{\mathrm h} \in \genSId{\momentmap-\mu}$ by \eqref{eq:vrPoly:intern}.
  
  Similarly, if a polynomial $f \in \Polynomials(\CC^{1+n})^{\lie u_1}_\Hermitian$
  fulfils $f(w) \ge 0$ for all $w\in \Levelset_\mu$, then also $f_{\mathrm h}(w) \ge 0$
  for all $w \in \CC^{1+n}$ with $\momentmap(w) > 0$. Consequently,
  $f^\sim \coloneqq (2 \mu^2)^{-1} ( \momentmap - \mu \Unit )^2 \big((f_{\mathrm h})^2+\Unit\big) + f_{\mathrm h}$
  fulfils $f^\sim (w) \ge 0$ for all $w\in \CC^{1+n}$ with $\momentmap(w) > 0$,
  and for $w\in \CC^{1+n}$ with $\momentmap(w) \le 0$ one also finds that $f^\sim(w) \ge 0$
  because then $(2 \mu^2)^{-1} ( \momentmap(w) - \mu )^2 \ge 1/2$ and $f_{\mathrm h}(w)^2+1 \ge 2 \abs{ f_{\mathrm h}(w) }$.
  We therefore have $f^\sim \in (\Polynomials(\CC^{1+n})^{\lie u_1})^+_\Hermitian$, and as
  $f-f_{\mathrm h} \in (\genSId{\momentmap-\mu})_\Hermitian$ by \eqref{eq:vrPoly:intern}
  and clearly also $f_{\mathrm h} - f^\sim \in (\genSId{\momentmap-\mu})_\Hermitian$, it follows that
  $f \in (\genSId{\momentmap-\mu} )_\Hermitian + ( \Polynomials(\CC^{1+n})^{\lie u_1} )^+_\Hermitian$.
\end{proof}
Recall that the complex projective space $\CC\PP^n$ is the quotient manifold $\big( \CC^{1+n} \setminus \{0\} \big) / \CC^*$,
where the multiplicative group $\CC^* = \CC\setminus \{0\}$ acts on $\CC^{1+n} \setminus \{0\}$ by scalar multiplication.

\begin{definition} \label{definition:polynomialsOnCPn}
  Let
  \begin{equation}
    \Mred \coloneqq \Mred^{(s)} \coloneqq \set[\big] {[w] \in \CC\PP^n }{ \momentmap(w) > 0}
    \punkt
  \end{equation}
  For $f \in \Polynomials(\CC^{1+n})^{\lie u_1}$ we define $\reductionMap{f} \colon \Mred \to \CC$
	by $\reductionMap{f}([w]) \coloneqq f\at{\Levelset_\mu}(w)$
	where $w \in \mathcal Z_\mu$ is a representative of $[w] \in \Mred$.
	We call 
	\begin{equation}
	\Polynomials(\Mred) 
	\coloneqq 
  \set[\big]{[f]_\mu}{f\in \Polynomials(\CC^{1+n})^{\lie u_1}}
	\end{equation} 
	the space of \emph{polynomials} on $\Mred$.
\end{definition}
Note that $\Mred$ is well-defined since the choice of representative $w \in \CC^{1+n} \setminus \{0\}$
for $[w] \in \CC\PP^n$ has no influence on the sign of $\momentmap(w)$,
that $\mathcal Z_\mu$ and $\Mred$ depend on the choice of signature $s$ since $\momentmap$ does,
and that every $[w] \in \Mred$ has a (non-unique) representative $w \in \Levelset_\mu$.
It is easy to check that $\Polynomials(\Mred)$ with the usual pointwise multiplication of functions,
the pointwise complex conjugation as $^*$\=/involution, and the pointwise order becomes an ordered 
$^*$\=/algebra whose order is induced by its states and
$\reductionMapSign \colon \Polynomials(\CC^{1+n})^{\lie u_1} \to \Polynomials(\Mred)$ is a positive unital $^*$\=/homomorphism.

\begin{theorem} \label{theorem:polyreduction}
  The momentum $\mu > 0$ is regular for $\momentmap$, and $\ker[\argument]_\mu = \genSId{\momentmap-\mu}$.
  Moreover,
  \begin{equation}
    \poi[\big]{[f]_\mu}{[g]_\mu} \coloneqq \big[ \poi{f}{g} \big]_\mu
    \label{eq:polyreduction:bracket}
  \end{equation}
  for all $f,g \in \Polynomials(\CC^{1+n})^{\lie u_1}$ defines a Poisson bracket on $\Polynomials(\Mred)$ and
  the tuple $\big(\Polynomials(\Mred), \reductionMapSign\big)$ is the $\momentmap$\=/reduction of $\Polynomials(\CC^{1+n})$ at $\mu$.
\end{theorem}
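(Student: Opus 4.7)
The plan is to apply Theorem~\ref{theorem:vanishing:neu}, for which we need three things: that Poisson-commuting elements commute (which is automatic in the commutative algebra $\Polynomials(\CC^{1+n})$), that $\mu$ is regular for $\momentmap$, and that $(\Polynomials(\Mred), \reductionMapSign)$ satisfies the two explicit conditions \refitem{item:theorem:vanishing:1} and \refitem{item:theorem:vanishing:2}. Once regularity is known, the Poisson bracket~\eqref{eq:polyreduction:bracket} is well-defined essentially by the definition of regularity, and the identification $\ker[\argument]_\mu = \genSId{\momentmap-\mu}$ will drop out of Lemma~\ref{lemma:vrPoly}, since $[f]_\mu = 0$ means $f$ vanishes on $\Levelset_\mu$, which by that lemma characterises $\mathcal{V}_\mu = \genSId{\momentmap-\mu}$.

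For regularity, I would essentially reproduce the Hamiltonian-flow argument from Corollary~\ref{corollary:manifoldVanishing}. Concretely, given $f \in \Polynomials(\CC^{1+n})^{\lie u_1}$ and $g \in \mathcal{V}_\mu$, and any $w \in \Levelset_\mu$, I consider the Hamiltonian flow $\gamma$ of $f$ (viewed as a smooth function on $\CC^{1+n}$) through $w$; then $\lie u_1$\=/invariance of $f$ gives $\poi{f}{\momentmap} = 0$, so $\momentmap$ is constant along $\gamma$, forcing $\gamma$ to stay in $\Levelset_\mu$ for short times. Since $g$ vanishes on $\Levelset_\mu$ by Lemma~\ref{lemma:vrPoly}, we get $\poi{f}{g}(w) = \tfrac{\D}{\D t}\at{0} g(\gamma(t)) = 0$, so $\poi{f}{g} \in \mathcal{V}_\mu$. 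This proves $\mathcal{V}_\mu$ is a Poisson ideal, i.e.\ $\mu$ is regular. In particular, the bracket defined by~\eqref{eq:polyreduction:bracket} descends to $\Polynomials(\Mred)$, making it into a representable Poisson $^*$\=/algebra (its order is induced by the evaluation functionals $\delta_{[w]}$ with $w \in \Levelset_\mu$, which are even states).

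It then remains to check the two conditions of Theorem~\ref{theorem:vanishing:neu}. Surjectivity of $\reductionMapSign$ is immediate from the definition $\Polynomials(\Mred) = \set{[f]_\mu}{f\in \Polynomials(\CC^{1+n})^{\lie u_1}}$, and the kernel identification $\ker[\argument]_\mu = \mathcal{V}_\mu$ was noted above. For the second condition, suppose $a \in \Polynomials(\CC^{1+n})^{\lie u_1}_\Hermitian$ satisfies $[a]_\mu \in \Polynomials(\Mred)^+_\Hermitian$, i.e.\ $a(w) = [a]_\mu([w]) \ge 0$ for every $w \in \Levelset_\mu$; by the explicit description~\eqref{eq:RmuPolyChar} of $\mathcal{R}_\mu$ in Lemma~\ref{lemma:vrPoly}, this is precisely the statement $a \in \mathcal{R}_\mu$. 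Finally, compatibility of $\reductionMapSign$ with Poisson brackets and its positivity are clear from the definitions.

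There is no serious obstacle left once Lemma~\ref{lemma:vrPoly} is in hand — that lemma (via the homogenisation trick) is what does the real work of translating ``vanishes on $\Levelset_\mu$'' into membership in $\genSId{\momentmap-\mu}$ and of bridging pointwise positivity on $\Levelset_\mu$ with algebraic positivity modulo $\genSId{\momentmap-\mu}$. The only mild point deserving attention is that the Hamiltonian-flow argument for regularity, as stated in Corollary~\ref{corollary:manifoldVanishing}, is phrased for smooth functions on a Poisson manifold; here I would simply note that $\Polynomials(\CC^{1+n}) \subseteq \Smooth(\CC^{1+n})$ with matching Poisson structures, and the condition $\poi{f}{g}(w) = 0$ for all $w \in \Levelset_\mu$ is purely pointwise, so the argument carries over verbatim.
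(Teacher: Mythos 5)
Your proposal is correct, and its overall architecture (verify the two conditions of Theorem~\ref{theorem:vanishing:neu}, with Lemma~\ref{lemma:vrPoly} doing the real work) matches the paper's proof. The one place where you genuinely diverge is the regularity of $\mu$: you re-run the Hamiltonian-flow argument of Corollary~\ref{corollary:manifoldVanishing}, embedding $\Polynomials(\CC^{1+n})$ into $\Smooth(\CC^{1+n})$ and integrating the Hamiltonian vector field of an invariant $f$ to show the flow preserves $\Levelset_\mu$. That works (the bracket does come from a Poisson tensor, and the vanishing condition is pointwise), but it is more machinery than needed: the paper instead observes that Lemma~\ref{lemma:vrPoly} already gives $\mathcal{V}_\mu = \genSId{\momentmap-\mu}$ \emph{as sets}, and Proposition~\ref{proposition:genIdeal} shows that the generated $^*$\=/ideal $\genSId{\momentmap-\mu}$ is \emph{automatically} a Poisson ideal of $\A^{\lie u_1}$ (a purely algebraic computation using invariance of the coefficients). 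So regularity is free once the homogenisation trick has identified $\mathcal{V}_\mu$ with the generated ideal; no flow, no appeal to the smooth category. Your route buys nothing extra here, but it is not wrong, and it does illuminate why the algebraic and geometric pictures agree. Everything else --- surjectivity, the kernel identification, the $\mathcal{R}_\mu$ condition via \eqref{eq:RmuPolyChar}, and the descent of the bracket --- is handled exactly as in the paper.
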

\begin{proof}
  The kernel of $\reductionMapSign$ clearly consists of precisely those elements of $\Polynomials(\CC^{1+n})^{\lie u_1}$ that vanish on $\mathcal Z_\mu$, so
  $\genSId{\momentmap-\mu} = \mathcal{V}_\mu = \ker \reductionMapSign$ by Lemma~\ref{lemma:vrPoly}, which is
  a Poisson ideal of $\Polynomials(\CC^{1+n})^{\lie u_1}$ by Proposition~\ref{proposition:genIdeal}.
  So $\mu$ is regular for $\momentmap$ and the Poisson bracket of $\Polynomials(\CC^{1+n})^{\lie u_1}$ descends to a well-defined Poisson 
  bracket \eqref{eq:polyreduction:bracket} on $\Polynomials(\Mred)$. With this Poisson bracket and with the pointwise order,
  $\Polynomials(\Mred)$ is a representable Poisson $^*$\=/algebra and $\reductionMapSign$ is a positive unital $^*$\=/homomorphism
  compatible with Poisson brackets.
  Moreover, Theorem~\ref{theorem:vanishing:neu} applies,
  and as Lemma~\ref{lemma:vrPoly} also shows that $\mathcal{R}_\mu$ coincides with the set of Hermitian elements in 
  the preimage of $\Polynomials(\Mred)^+_\Hermitian$
  under $[\argument]_\mu$, it is easy to check that $\big(\Polynomials(\Mred), \reductionMapSign\big)$ is the
  $\momentmap$\=/reduction of $\Polynomials(\CC^{1+n})$ at $\mu$.
\end{proof}
Note that this Poisson bracket on $\Mred$ coincides with the one of the Fubini--Study symplectic form (with signature),
which is obtained by Marsden--Weinstein reduction of $\CC^{1+n}$ with respect to the $\group U(1)$-action.
In particular, if $s=1+n$, then $\Mred^{(1+n)} \cong \CC\PP^n$ as Poisson manifolds,
and if $s=1$, then $\Mred^{(1)}$ and the complex hyperbolic disc $\DD^n$ are isomorphic.
See \cite{schmitt.schoetz:PreprintWickRotationsInDQ} for details.
Moreover, only the Poisson bracket of $\Polynomials(\Mred)$ actually depends on $\mu$, but not the underlying ordered $^*$\=/algebra.

\subsection{Comparing Pointwise and Algebraic Order}
One key point why the reduction of representable Poisson $^*$\=/algebras works well 
is that we have a certain freedom in choosing the order on the reduction, 
in the sense that we do not always endow $^*$-algebras with their canonical algebraic order 
(the one whose positive elements are sums of Hermitian squares).
It is a priori unclear how the order on the reduced algebra $\Polynomials(\Mred)$,
i.e.\ the pointwise order, relates to the algebraic order. In this last section we want to discuss this relation,
which can be described using the methods of real algebraic geometry.

The first step is to identify the unital $^*$\=/homomorphisms from $\Polynomials(\Mred)$ to $\CC$. 
As we show below, they are in bijection with
\begin{equation}
\MredExt \coloneqq \MredExt^{(s)} \coloneqq \set[\big] {[w] \in \CC\PP^n }{ \momentmap(w) \neq 0} 
\punkt
\end{equation}
If 
$s = 1+n$, then $\momentmap(w) > 0$ for all $w \in \CC^{1+n} \setminus \{0\}$
and therefore $\Mred^{(1+n)} = \MredExt^{(1+n)}$. For $s \neq 1+n$, however, $\Mred^{(s)} \subsetneq \MredExt^{(s)}$.
In the special case $s = 1$ it was already observed in \cite{kraus.roth.schoetz.waldmann:OnlineConvergentStarProductOnPoincareDisc}
that there are more $\CC$-valued unital $^*$\=/homomorphisms on $\Polynomials(\Mred^{(1)})$
than one would naively expect, i.e.~not only evaluation functionals at points of $\Mred^{(1)}$.
	
\begin{lemma} \label{lemma:matrixLemma}
	For every $[w] \in \MredExt$, the matrix 
	$X = \big(\momentmap(w)^{-1} w_i \cc{w_j} \big)_{i,j\in\{0,\dots,n\}} \in \CC^{(1+n) \times (1+n)}$
	is well-defined and fulfils 
	\begin{equation}
    (\hat \nu X)^2 = \hat \nu X \komma\quad
    X^* = X \komma \quad{and}\quad	 
    \tr(\hat \nu X) = 1
    \komma
    \label{eq:matrixLemma}
	\end{equation}
	where $\hat\nu \coloneqq \mathrm{diag}(\nu_0, \dots, \nu_n)$ is the diagonal matrix
  	with entries $\nu_0, \dots, \nu_n$ along the diagonal.
  	Conversely, for any $X \in \CC^{(1+n) \times (1+n)}$ satisfying the conditions \eqref{eq:matrixLemma}
  	there exists a unique $[w] \in \MredExt$ such that $X = \big(\momentmap(w)^{-1} w_i \cc{w_j} \big)_{i,j\in\{0,\dots,n\}}$. 
\end{lemma}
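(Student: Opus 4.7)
The plan is to prove the two directions separately, with the forward direction being a direct computation and the converse relying on recognizing $\hat\nu X$ as a rank-one idempotent.

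For the forward direction, I would start by noting that $X$ is well-defined on $\MredExt$: for $\lambda \in \CC^*$ one has $(\lambda w)_i \cc{(\lambda w)_j} = \abs{\lambda}^2 w_i \cc{w_j}$ and $\momentmap(\lambda w) = \abs{\lambda}^2 \momentmap(w)$, so the quotient depends only on $[w]$. Hermiticity $X^* = X$ is immediate because $\momentmap(w) \in \RR$. The trace identity follows directly:
\begin{equation*}
  \tr(\hat\nu X) = \sum_{i=0}^n \nu_i \frac{w_i \cc{w_i}}{\momentmap(w)} = \frac{\momentmap(w)}{\momentmap(w)} = 1\punkt
\end{equation*}
For the idempotent property, one computes
\begin{equation*}
  \big( (\hat\nu X)^2 \big)_{ij} = \sum_k \nu_i \frac{w_i \cc{w_k}}{\momentmap(w)} \, \nu_k \frac{w_k \cc{w_j}}{\momentmap(w)} = \frac{\nu_i w_i \cc{w_j}}{\momentmap(w)^2} \sum_k \nu_k \abs{w_k}^2 = (\hat\nu X)_{ij}\punkt
\end{equation*}

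For the converse, suppose $X$ satisfies \eqref{eq:matrixLemma}. The condition $(\hat\nu X)^2 = \hat\nu X$ combined with $\tr(\hat\nu X) = 1$ forces $\hat\nu X$ to be a rank-one projector, because an idempotent has only $0$ and $1$ as eigenvalues, and the trace equals the dimension of the $1$-eigenspace. Rank one is equivalent to all $2\times 2$ minors of $\hat\nu X$ vanishing, which after dividing by the non-zero factors $\nu_i \nu_k$ gives the identity
\begin{equation*}
  X_{ij} X_{kl} = X_{il} X_{kj} \quad\text{for all } i,j,k,l \in \{0,\dots,n\}\punkt
\end{equation*}
Since $\tr(\hat\nu X) = 1 \neq 0$, there exists $k$ with $X_{kk} \neq 0$. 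I then define $w_i \coloneqq X_{ik}$; Hermiticity of $X$ yields $\cc{w_j} = \cc{X_{jk}} = X_{kj}$, and the minor identity with the prescribed $k$ gives $X_{ij} = w_i \cc{w_j}/X_{kk}$. Specializing this identity to $j=k$ and summing against $\nu_i$ yields
\begin{equation*}
  X_{kk} = X_{kk} \sum_i \nu_i X_{ii} = \sum_i \nu_i X_{ik} X_{ki} = \sum_i \nu_i \abs{w_i}^2 = \momentmap(w)\komma
\end{equation*}
so $X_{ij} = w_i \cc{w_j}/\momentmap(w)$ and $\momentmap(w) = X_{kk} \neq 0$, which means $[w] \in \MredExt$.

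Uniqueness of $[w]$ follows by inspecting the $k$-th column: if $[w]$ and $[w']$ give rise to the same $X$, then $w_i = \momentmap(w)\, X_{ik}/\cc{w_k}$ and analogously for $w'$, showing that $w$ and $w'$ are proportional, i.e.\ $[w] = [w']$. I expect no substantial obstacle; the only point requiring a little care is selecting $k$ with $X_{kk} \neq 0$, which is guaranteed by the trace condition, and avoiding sign confusion from the signature $\hat\nu$, which drops out because $\nu_i^2 = 1$ wherever the $2\times 2$ minor identity is used.
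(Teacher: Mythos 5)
Your proof is correct, and the forward direction is identical to the paper's. In the converse, the central observation is the same as the paper's --- an idempotent is diagonalizable with eigenvalues in $\{0,1\}$, so $\tr(\hat\nu X)=1$ forces $\hat\nu X$ to have rank one --- but the way you extract $w$ differs: the paper writes the rank\-/one matrix abstractly as an outer product $(\hat\nu X)_{ij}=v_i\cc{w_j}$ with a priori unrelated $v,w$ and then uses $X^*=X$ to show, via a small algebraic manipulation it does not spell out, that $\hat\nu v=\lambda w$ with $\lambda\in\RR\setminus\{0\}$, after which the trace condition fixes $\lambda=\momentmap(w)^{-1}$; you instead encode rank one through the vanishing of all $2\times 2$ minors, pick a column $k$ with $X_{kk}\neq 0$ (guaranteed by the trace condition), and set $w_i\coloneqq X_{ik}$, so that Hermiticity is built in from the start and the identity $\momentmap(w)=X_{kk}$ drops out of the minor relations. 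Your route is somewhat more explicit and self\-/contained, since it avoids the unproved step relating $v$ and $w$ and produces $w$ directly from the entries of $X$; the paper's route is shorter to state and makes the geometric picture (one\-/dimensional range spanned by $w$) more visible, which it then reuses for the uniqueness claim. Your uniqueness argument via the $k$-th column is also fine, provided one notes (as you implicitly do) that the same $k$ with $X_{kk}\neq 0$ serves for both representatives.
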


\begin{proof}
	It is clear that the matrix $X \coloneqq \big(\momentmap(w)^{-1} w_i \cc{w_j} \big)_{i,j\in\{0,\dots,n\}}$
	is well-defined and fulfils $X=X^*$. Moreover, 
	$((\hat \nu X)^2)_{ij} = \sum_{k=0}^n \momentmap(w)^{-2} \nu_i w_i \cc{w_k} \nu_k w_k \cc{w_j} 
	= \momentmap(w)^{-1} \nu_i w_i \cc{w_j} = (\hat \nu X)_{ij}$ 
	holds for all $i,j\in\{0,\dots,n\}$ and $\tr(\hat{\nu} X) = \momentmap(w)^{-1}\sum_{i=0}^n \nu_i w_i \cc{w_i} = 1$.
	
	Conversely, assume that $X \in \CC^{(1+n) \times (1+n)}$ satisfies \eqref{eq:matrixLemma}.
	Then $\hat \nu X$ is diagonalizable with eigenvalues contained in $\lbrace 0, 1 \rbrace$ because $\hat \nu X$ is 
	idempotent. 
	Since $\tr(\hat \nu X) = 1$ it follows that the eigenvalue $1$ occurs precisely once,
	so that the image of $\hat \nu X$ has dimension $1$. 
	Consequently, there exist non-zero $v, w \in \CC^{1+n}$ such that
	$(\hat \nu X)_{ij} = v_i \cc{w_j}$ for all $i,j\in \{0,\dots,n\}$,
	or equivalently $X_{ij} = \nu_{i} v_i \cc{w}_j$.
	Next, $X^* = X$ implies that $w_i \cc v_j \nu_j = \nu_i v_i \cc w_j$
	holds for all $ i, j \in \{0,\dots,n\}$,
	and it is an easy algebraic manipulation to show that this is enough to guarantee that $\hat\nu v = \lambda w$ for some $\lambda \in \CC 
	\setminus \lbrace 0 \rbrace$, even $\lambda \in \RR \setminus \lbrace 0 \rbrace$.
	Consequently $X_{ij} = \lambda w_i \cc{w_j}$ for all $i,j\in\{0,\dots,n\}$,
	and $\tr(\hat \nu X) = 1$ implies $1 = \lambda \momentmap(w)$,
	so $X_{ij} = \momentmap(w)^{-1} w_i \cc{w_j}$.
	Since the $1$-dimensional range of $X$ is spanned by $w$,
	$w$ is determined uniquely up to multiplication with a non-zero complex constant, so $[w] \in 
	\MredExt$ is determined uniquely.
\end{proof}
For every $[w] \in \MredExt$ we define the unital $^*$\=/homomorphism  $\delta_{[w]} \colon \Polynomials(\Mred) \to \CC$,
\begin{equation}  \label{eq:evaluationFunctionals}
	[f]_\mu \mapsto \dupr{\delta_{[w]}}{[f]_\mu} \coloneqq \sum_{\ell = 0}^\infty f_\ell(w) 
	\bigg(\frac{\mu}{\momentmap (w)}\bigg)^{\!\!\ell}
\end{equation}
where $f=\sum_{\ell = 
	0}^\infty f_\ell \in \Polynomials(\CC^{1+n})^{\lie u_1}$ is any representative of $[f]_\mu \in \Polynomials(\Mred)$ with homogeneous components $f_\ell \in \Polynomials(\CC^{1+n})^{\ell,\ell}$,
and where $w \in \CC^{1+n} \setminus \lbrace 0 \rbrace$ is any representative of $[w] \in \MredExt$.
It is easy to check that $\delta_{[w]}$ is well-defined. 
For $[w] \in \Mred$, this unital $^*$\=/homomorphism $\delta_{[w]}$ is just the usual evaluation functional at $[w]$.

\begin{proposition} \label{proposition:starHomsArePointEvs} 
	For every unital $^*$\=/homomorphism $\varphi \colon \Polynomials(\Mred) \to \CC$
	there exists a unique point $[w] \in \MredExt$ such that $\varphi=\delta_{[w]}$.
\end{proposition}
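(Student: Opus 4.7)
The plan is to reduce the problem to the matrix characterization of Lemma~\ref{lemma:matrixLemma} by encoding $\varphi$ as a matrix built from its values on the generators of bidegree $(1,1)$.

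First I would observe that the unital $^*$\=/algebra $\Polynomials(\CC^{1+n})^{\lie u_1}$ is generated by the monomials $z_i \cc z_j$ with $i,j \in \{0,\dots,n\}$. This is because any $\lie u_1$-invariant polynomial is a linear combination of bidegree $(k,k)$ monomials $z_{i_1} \cdots z_{i_k} \cc z_{j_1} \cdots \cc z_{j_k}$, each of which factors as $(z_{i_1}\cc z_{j_1})\cdots(z_{i_k}\cc z_{j_k})$. Since $[\argument]_\mu$ is surjective, the elements $[z_i \cc z_j]_\mu$ generate $\Polynomials(\Mred)$ as a unital $^*$\=/algebra. Consequently, a unital $^*$\=/homomorphism $\varphi \colon \Polynomials(\Mred) \to \CC$ is determined by the values $\varphi([z_i \cc z_j]_\mu)$.

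Given $\varphi$, I would define the matrix $X \in \CC^{(1+n)\times(1+n)}$ by $X_{ij} \coloneqq \mu^{-1} \varphi([z_i \cc z_j]_\mu)$, and verify the three conditions of Lemma~\ref{lemma:matrixLemma}. Hermiticity $X^* = X$ follows from $(z_i \cc z_j)^* = z_j \cc z_i$ and the fact that $^*$\=/homomorphisms are Hermitian. The trace condition follows from
\begin{equation*}
  \tr(\hat\nu X) = \mu^{-1} \sum_{i=0}^n \nu_i \varphi([z_i\cc z_i]_\mu) = \mu^{-1}\varphi([\momentmap]_\mu) = \mu^{-1}\varphi(\mu\Unit) = 1
  \komma
\end{equation*}
since $\momentmap - \mu\Unit \in \genSId{\momentmap-\mu} = \ker[\argument]_\mu$ by Theorem~\ref{theorem:polyreduction}. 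The idempotency $(\hat\nu X)^2 = \hat\nu X$ is the key step and where I expect most of the work to lie: using multiplicativity of $\varphi$ and the identity $\sum_k \nu_k z_i\cc z_k z_k\cc z_j = z_i\cc z_j\,\momentmap$ in $\Polynomials(\CC^{1+n})^{\lie u_1}$, one computes
\begin{equation*}
  \sum_{k=0}^n \nu_k \varphi([z_i\cc z_k]_\mu)\varphi([z_k\cc z_j]_\mu)
  = \varphi\bigl([z_i\cc z_j\,\momentmap]_\mu\bigr)
  = \mu\,\varphi([z_i\cc z_j]_\mu)
  \komma
\end{equation*}
which upon dividing by $\mu^2$ gives exactly $(\hat\nu X)^2 = \hat\nu X$.

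By Lemma~\ref{lemma:matrixLemma} there is then a unique $[w] \in \MredExt$ such that $X_{ij} = \momentmap(w)^{-1} w_i \cc{w_j}$, i.e.\ $\varphi([z_i\cc z_j]_\mu) = \mu\,\momentmap(w)^{-1} w_i\cc{w_j}$. Comparing with the explicit formula \eqref{eq:evaluationFunctionals} applied to the bidegree $(1,1)$ polynomial $z_i \cc z_j$, one obtains $\dupr{\delta_{[w]}}{[z_i \cc z_j]_\mu} = \mu\,\momentmap(w)^{-1} w_i \cc{w_j}$ as well. Since $\varphi$ and $\delta_{[w]}$ are both unital $^*$\=/homomorphisms and agree on the generating set $\{[z_i \cc z_j]_\mu\}$, they coincide on all of $\Polynomials(\Mred)$. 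Uniqueness of $[w]$ is immediate from the uniqueness clause in Lemma~\ref{lemma:matrixLemma}, since $[w]$ is recovered from $\varphi$ via the matrix $X$.
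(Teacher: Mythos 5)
Your proposal is correct and follows essentially the same route as the paper's proof: both define $X_{ij} = \mu^{-1}\dupr{\varphi}{[z_i\cc z_j]_\mu}$, verify the hypotheses of Lemma~\ref{lemma:matrixLemma} (with the idempotency computation via $\sum_k \nu_k z_i\cc z_k z_k\cc z_j = z_i\cc z_j\,\momentmap$ being the main step), and conclude by matching $\varphi$ with $\delta_{[w]}$ on the generators. You merely spell out a few details (generation of $\Polynomials(\Mred)$ by the $[z_i\cc z_j]_\mu$, the trace check) that the paper leaves as "easily checked".
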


\begin{proof}
	Consider the matrix $X \in \CC^{(1+n) \times (1+n)}$ with entries $X_{ij} = \mu^{-1} \dupr{\varphi}{[z_i \cc z_j]_\mu}$. 
	Then
	\begin{equation*}
		\big((\hat \nu X)^2\big)_{ij} 
		= \mu^{-2} \sum_{k=0}^n \nu_i  \nu_k \dupr[\big]{\varphi}{[z_i \cc z_k z_k \cc z_j]_\mu} 
		= \mu^{-2}  \nu_i \dupr[\big]{\varphi}{[z_i \cc z_j \momentmap ]_\mu}
		= \mu^{-1}  \nu_i \dupr[\big]{\varphi}{[z_i \cc z_j]_\mu} = (\hat \nu X)_{ij} \komma
	\end{equation*}
	and the other two assumptions of the previous Lemma~\ref{lemma:matrixLemma} are easily checked.
	So there exists a unique $[w] \in \MredExt$ such that $X_{ij} = \momentmap(w)^{-1} w_i \cc w_j$ for all $i,j \in \{0, \dots,n\}$. 
	This means that $\delta_{[w]}$ and $\varphi$ coincide on the generators $[z_i \cc z_j]_\mu$, or equivalently on all of $\Polynomials(\Mred)$.
\end{proof}
The above Proposition~\ref{proposition:starHomsArePointEvs} shows that $\MredExt$ is a real algebraic set, while $\Mred$ is not if $s \neq 1+n$.
However, $\Mred$ is a subset of $\MredExt$ that can be described by a polynomial inequality:
For a commutative $^*$\=/algebra $\A$ and a subset $\mathcal G \subseteq \A_\Hermitian$ we write
\begin{equation}
	\semialgebraicSet{\A}{\mathcal{G}}
	\coloneqq
	\set[\big]{ \varphi \colon \A \to \CC }{ \varphi\text{ a unital $^*$\=/homomorphism fulfilling } \dupr{\varphi}{g} \ge 0 \text{ for all }g \in \mathcal G }
	\punkt
	\label{eq:fanaPositivstellensatz}
\end{equation}

\begin{proposition} \label{proposition:semialgebraic}
	The identity 
	$\semialgebraicSet[\big]{\Polynomials(\Mred)}{ \{\sum_{i=s}^n\reductionMap{z_i \cc z_i} \} } = \set{\delta_{[w]}}{[w] \in \Mred}$ 
	holds.
\end{proposition}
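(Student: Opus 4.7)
The plan is to combine Proposition \ref{proposition:starHomsArePointEvs} with an elementary sign analysis of a single polynomial evaluated on $\MredExt$. By Proposition \ref{proposition:starHomsArePointEvs} every element of $\semialgebraicSet[\big]{\Polynomials(\Mred)}{\{[\sum_{i=s}^n z_i \cc z_i]_\mu\}}$ is automatically of the form $\delta_{[w]}$ for a unique $[w] \in \MredExt$, so the task reduces to characterizing those $[w] \in \MredExt$ for which $\dupr[\big]{\delta_{[w]}}{[\sum_{i=s}^n z_i \cc z_i]_\mu} \ge 0$.

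First I would use the explicit formula \eqref{eq:evaluationFunctionals}. Since $g \coloneqq \sum_{i=s}^n z_i \cc z_i$ is homogeneous of bidegree $(1,1)$, only the $\ell=1$ summand survives, giving
\begin{equation*}
  \dupr[\big]{\delta_{[w]}}{[g]_\mu}
  =
  g(w)\,\frac{\mu}{\momentmap(w)}
  =
  \frac{\mu \sum_{i=s}^n \abs{w_i}^2}{\momentmap(w)}
\end{equation*}
for any representative $w \in \CC^{1+n}\setminus\{0\}$ of $[w] \in \MredExt$. Note that this expression is well-defined on $\MredExt$ because the numerator and the denominator both scale in the same way under $w \mapsto \lambda w$.

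Next comes the sign analysis. By assumption $\mu > 0$ and clearly $\sum_{i=s}^n \abs{w_i}^2 \ge 0$. If $[w] \in \Mred$, i.e.\ $\momentmap(w) > 0$, then the right-hand side is manifestly nonnegative, so $\delta_{[w]}$ lies in the semialgebraic set. Conversely, suppose $[w] \in \MredExt \setminus \Mred$, so that $\momentmap(w) < 0$. Then $\sum_{i=0}^{s-1} \abs{w_i}^2 < \sum_{i=s}^n \abs{w_i}^2$, which in particular forces $\sum_{i=s}^n \abs{w_i}^2 > 0$; combined with $\momentmap(w) < 0$ and $\mu > 0$ this makes the above expression strictly negative, so $\delta_{[w]}$ is excluded from the semialgebraic set.

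There is no real obstacle here: the structural work has already been done in Proposition \ref{proposition:starHomsArePointEvs}, and the remaining step is just observing that the single generator $\sum_{i=s}^n z_i \cc z_i$ together with the sign of $\momentmap$ cleanly separates $\Mred$ from $\MredExt \setminus \Mred$. The only point one must take some care with is the edge case $s = 1+n$, where the sum is empty and $\MredExt = \Mred$ so the inequality is trivially satisfied by every unital $^*$\=/homomorphism; this is consistent with both sides of the claimed identity.
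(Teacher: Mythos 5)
Your proposal is correct and follows essentially the same route as the paper: reduce to $\delta_{[w]}$ with $[w]\in\MredExt$ via Proposition~\ref{proposition:starHomsArePointEvs}, compute $\dupr{\delta_{[w]}}{\sum_{i=s}^n\reductionMap{z_i \cc z_i}} = \momentmap(w)^{-1}\mu\sum_{i=s}^n\abs{w_i}^2$, and analyse the sign. The only cosmetic difference is that you argue the converse direction by contraposition (assuming $\momentmap(w)<0$ and deducing strict negativity) where the paper splits on whether $\sum_{i=s}^n\abs{w_i}^2$ vanishes; both are equally valid.
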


\begin{proof}
	By the previous Proposition~\ref{proposition:starHomsArePointEvs} all unital $^*$\=/homomorphisms from $\Polynomials(\Mred)$ to $\CC$ are of the form $\delta_{[w]}$ with $[w] \in \MredExt$.
	The identity $\dupr{\delta_{[w]}}{\sum_{i=s}^n \reductionMap{z_i \cc z_i}} = \momentmap(w)^{-1} \mu 
  \sum_{i=s}^n \abs{w_i}^2$ holds for all $[w] \in \MredExt$ with representative $w \in \CC^{1+n} \setminus \lbrace 0 \rbrace$.
  Therefore $\dupr{\delta_{[w]}}{\sum_{i=s}^n \reductionMap{z_i \cc z_i}} \geq 0$ holds if and only if 
  $\momentmap (w) > 0$: On the one hand, $\momentmap(w) > 0$ clearly implies $\dupr{\delta_{[w]}}{\sum_{i=s}^n \reductionMap{z_i \cc z_i}} \geq 0$. On the other hand, assume that $\dupr{\delta_{[w]}}{\sum_{i=s}^n \reductionMap{z_i \cc z_i}} \geq 0$. 
  Then either $\sum_{i=s}^n \abs{w_i}^2 > 0$ so that 
  $\momentmap(w) > 0$ by the above identity, 
  or $\sum_{i=s}^n \abs{w_i}^2 = 0$ so that $\momentmap(w) = \sum_{i=0}^{s-1} \abs{w_i}^2 - 
  \sum_{i=s}^n \abs{w_i}^2 > 0$ because $w \neq 0$.
\end{proof}

\begin{corollary}
For a unital $^*$\=/homomorphisms $\varphi \colon \Polynomials(\Mred) \to \CC$ the following are equivalent:
\begin{enumerate}
	\item\label{item:equivalence:i} $\varphi$	is positive with respect to the pointwise order,
	\item\label{item:equivalence:ii} $\dupr{\varphi}{\sum_{i=s}^n\reductionMap{z_i \cc z_i}} \geq 0$,
	\item\label{item:equivalence:iii} there exists $[w] \in \Mred$ such that $\varphi = \delta_{[w]}$.
\end{enumerate}	
\end{corollary}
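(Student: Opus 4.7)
The plan is to prove the three implications (iii)$\Rightarrow$(i)$\Rightarrow$(ii)$\Rightarrow$(iii), leaning heavily on Propositions~\ref{proposition:starHomsArePointEvs} and~\ref{proposition:semialgebraic}, which have already done the conceptual work.

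For (iii)$\Rightarrow$(i), I simply observe that for $[w]\in\Mred$, the unital $^*$\=/homomorphism $\delta_{[w]}$ is the ordinary evaluation functional at the point $[w]\in\Mred$ (this is built into Definition~\ref{definition:polynomialsOnCPn} and \eqref{eq:evaluationFunctionals}, since $w$ may be chosen in $\Levelset_\mu$ and then all correction factors $(\mu/\momentmap(w))^\ell$ equal $1$). Since the order on $\Polynomials(\Mred)$ is the pointwise one, point evaluations at points of $\Mred$ are automatically positive.

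For (i)$\Rightarrow$(ii), the element $\sum_{i=s}^n\reductionMap{z_i\cc z_i}\in\Polynomials(\Mred)_\Hermitian$ is pointwise positive on $\Mred$: on any representative $w\in\Levelset_\mu$ of $[w]\in\Mred$ its value equals $\sum_{i=s}^n\abs{w_i}^2\ge 0$. Hence it lies in $\Polynomials(\Mred)^+_\Hermitian$, and positivity of $\varphi$ with respect to the pointwise order yields $\dupr{\varphi}{\sum_{i=s}^n\reductionMap{z_i\cc z_i}}\ge 0$.

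For (ii)$\Rightarrow$(iii), the condition places $\varphi$ into the set $\semialgebraicSet{\Polynomials(\Mred)}{\{\sum_{i=s}^n\reductionMap{z_i\cc z_i}\}}$, which by Proposition~\ref{proposition:semialgebraic} equals $\set{\delta_{[w]}}{[w]\in\Mred}$.

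There is essentially no hard step here; the only thing to be careful about is the well-definedness of ``$\varphi$ is positive with respect to the pointwise order'' for a unital $^*$\=/homomorphism: since such a $\varphi$ is automatically Hermitian, this makes sense, and positivity just means $\dupr{\varphi}{h}\ge 0$ whenever $h\in\Polynomials(\Mred)^+_\Hermitian$. The substantive content has already been absorbed into the two previous propositions, so this final corollary is essentially a three-line matter of packaging.
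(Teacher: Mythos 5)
Your proposal is correct and follows essentially the same route as the paper, which proves the same cycle of implications \refitem{item:equivalence:i} $\Rightarrow$ \refitem{item:equivalence:ii} $\Rightarrow$ \refitem{item:equivalence:iii} $\Rightarrow$ \refitem{item:equivalence:i} by direct appeal to Proposition~\ref{proposition:semialgebraic}. You merely start the cycle at a different vertex and spell out the easy steps that the paper leaves implicit.
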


\begin{proof}
	Using the previous Proposition~\ref{proposition:semialgebraic}, it is easy to see that \refitem{item:equivalence:i} $\Rightarrow$ \refitem{item:equivalence:ii} $\Rightarrow$ \refitem{item:equivalence:iii} $\Rightarrow$ \refitem{item:equivalence:i}. 
\end{proof}
In particular, if $s \neq 1+n$, then there are multiplicative algebraic states on $\Polynomials(\Mred)$ which are not positive.
This shows that it is not possible in general to describe the order on the reduced algebra in the commutative case as the one induced by all
multiplicative algebraic states, even if the order on the original algebra is of this type. From the point of view of
real algebraic geometry, this is not surprising: The type of structures that enjoy a reasonable stability under many constructions
are not the real algebraic sets, but the semialgebraic sets, i.e.~sets that can be defined by a finite number of polynomial inequalities.

One problem arising in real algebraic geometry is to give an algebraic description of quadratic modules or preorderings
that are defined via their (multiplicative) states. If one succeeds, such a result also yields an algebraic description of the states,
and in some cases, these two results are even equivalent.
Recall the definition of the preordering generated by a subset of Hermitian elements of a commutative $^*$\=/algebra
from \eqref{eq:genPO}.
The following theorem is an adaption of the Positivstellensätze of Marshall \cite{marshall:extendingArchimedeanPositivstellensatzToTheNonCOmpactCase}
and Schmüdgen \cite{schmuedgen:kMomentProblemForCompactSemialgebraicSets}
to our setting.
\begin{theorem} \label{theorem:fanaPositivstellensatz}
  Let $\A$ be a finitely generated commutative $^*$\=/algebra and $\{x_1,\dots,x_\ell\}$
  a finite set of Hermitian generators of $\A$.
  Moreover, let $\mathcal G$ be any finite subset of $\A_\Hermitian$. 
  Given $p \in \Unit + \genPO{\mathcal{G}}$ for which there exists $\lambda \in {[0,\infty[}$
  such that $\abs{ \dupr{\varphi}{x_j} } \le \lambda \dupr{\varphi}{p}$ holds for all
  $\varphi\in \semialgebraicSet{\A}{\mathcal{G}}$ and all $j \in \{1,\dots,\ell\}$, then for every $a\in  \A_\Hermitian$, the following are equivalent:
  \begin{enumerate}
    \item $\dupr{\varphi}{a} \ge 0$ for all $\varphi \in \semialgebraicSet{\A}{\mathcal{G}}$.
    \item There is $m_1\in \NN_0$ such that for all $\epsilon \in {]0,\infty[}$ there is $m_2\in \NN_0$
      for which $p^{m_2}(a+\epsilon p^{m_1}) \in \genPO{\mathcal{G}}$.
  \end{enumerate}
  Moreover, such an element $p \in \Unit + \genPO{\mathcal{G}}$ with the property required above exists:
  For example, $p \coloneqq \Unit + \sum_{j=1}^\ell x_j^2$ would always be a valid choice, and if $\semialgebraicSet{\A}{\mathcal{G}}$
  is weak-$^*$-compact, then one can even take $p\coloneqq \Unit$.
\end{theorem}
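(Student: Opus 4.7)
The plan is to handle the easy direction and the existence of $p$ directly, then prove the hard direction (i)$\Rightarrow$(ii) by passing to a localization of $\A$ at $p$, applying the classical Archimedean Positivstellensatz there, and clearing denominators.

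\textit{Easy direction and existence of $p$.} For (ii)$\Rightarrow$(i): each $\varphi\in\semialgebraicSet{\A}{\mathcal G}$ is multiplicative, and $p\in\Unit+\genPO{\mathcal G}$ forces $\dupr{\varphi}{p}\ge 1$; thus $p^{m_2}(a+\epsilon p^{m_1})\in\genPO{\mathcal G}$ gives $\dupr{\varphi}{a}\ge -\epsilon\,\dupr{\varphi}{p}^{m_1}$, and $\epsilon\to 0^+$ yields $\dupr{\varphi}{a}\ge 0$. For the existence of $p$: the choice $p\coloneqq\Unit+\sum_{j=1}^\ell x_j^2$ lies in $\Unit+\genPO{\mathcal G}$, and the elementary estimate $2|t|\le 1+t^2$ gives $|\dupr{\varphi}{x_j}|\le\tfrac{1}{2}\dupr{\varphi}{p}$, so $\lambda=\tfrac{1}{2}$ works; in the weak-$^*$-compact case the continuous evaluations $\varphi\mapsto\dupr{\varphi}{x_j}$ are uniformly bounded on $\semialgebraicSet{\A}{\mathcal G}$, so $p\coloneqq\Unit$ suffices.

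\textit{Localization and Archimedeanization.} For (i)$\Rightarrow$(ii), following Marshall and Schmüdgen, form the commutative $^*$\=/algebra $\mathcal B\coloneqq\A[p^{-1}]$, whose $^*$\=/involution extends naturally since $p=p^*$. Every $\varphi\in\semialgebraicSet{\A}{\mathcal G}$ extends uniquely to a character on $\mathcal B$ because $\dupr{\varphi}{p}\ge 1$. Introduce $u\coloneqq p^{-1}$ and the rescaled generators $y_j\coloneqq p^{-1}x_j$; the hypothesis on $p$ yields $|\dupr{\varphi}{y_j}|\le\lambda$ and $0<\dupr{\varphi}{u}\le 1$ on the extended characters. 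Let $\mathcal B'\subseteq\mathcal B$ be the unital $^*$\=/subalgebra generated by $u,y_1,\dots,y_\ell$, and let $T'\subseteq\mathcal B'_\Hermitian$ be the preordering generated by suitable rescaled data: the elements $u^{2k(g)}g$ for $g\in\mathcal G$ with $k(g)$ large enough that these land in $\mathcal B'$, together with the boundedness witnesses $\lambda^2\Unit-y_j^2$, $u$, and $\Unit-u$. Since the generators of $\mathcal B'$ are then bounded in $T'$, the preordering $T'$ is Archimedean.

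\textit{Positivstellensatz and denominator clearing.} Choose $m_1\in\NN_0$ at least as large as the polynomial degree of $a$ in the generators $x_j$, so that $\tilde a\coloneqq u^{m_1}a$ lies in $\mathcal B'$. On the character set $\semialgebraicSet{\mathcal B'}{T'}$, which contains the restrictions of $\semialgebraicSet{\A}{\mathcal G}$, one has $\dupr{\varphi}{\tilde a}=\dupr{\varphi}{p}^{-m_1}\dupr{\varphi}{a}\ge 0$ by assumption (i). The Archimedean Positivstellensatz of Jacobi/Putinar applied to $(\mathcal B',T')$ then produces, for every $\epsilon>0$, a representation $\tilde a+\epsilon\Unit\in T'$. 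Multiplying through by a sufficiently large power $p^{m_2}$ clears all factors of $u$; the identities $p\cdot u=\Unit$ and $p^2(\lambda^2\Unit-y_j^2)=\lambda^2 p^2-x_j^2$, combined with the explicit form $p=\Unit+(p-\Unit)$ with $p-\Unit\in\genPO{\mathcal G}$, let one re-express the result inside $\genPO{\mathcal G}$, producing $p^{m_2}(a+\epsilon p^{m_1})\in\genPO{\mathcal G}$ as required.

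\textit{Main obstacle.} The principal difficulty is the final denominator-clearing step: one must verify that after reintroducing powers of $p$, the boundedness witnesses $\lambda^2\Unit-y_j^2$ and the other generators of $T'$ really reassemble into elements of $\genPO{\mathcal G}$ rather than some larger preordering. This is precisely where the structural hypothesis $p\in\Unit+\genPO{\mathcal G}$ with an explicit positivity certificate is essential, and where the interaction between the two exponents $m_1$ (fixed by the degree of $a$) and $m_2$ (depending on $\epsilon$ and on the particular $T'$-representation of $\tilde a+\epsilon\Unit$) must be tracked with care.
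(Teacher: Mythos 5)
Your overall strategy -- localize at $p$, rescale the generators to $y_j = p^{-1}x_j$, build an Archimedean preordering in the subalgebra generated by $u = p^{-1}$ and the $y_j$, apply the Archimedean Positivstellensatz, and clear denominators -- is in essence the strategy behind Marshall's theorem, whereas the paper does not reprove that result at all: it reduces to $\RR[x_1,\dots,x_\ell]/\mathcal{I}$ (using Noetherianity to get finitely many ideal generators $h_1,\dots,h_m$, which are adjoined to the preordering as $\pm h_j$) and then simply cites \cite[Cor.~3.1]{marshall:extendingArchimedeanPositivstellensatzToTheNonCOmpactCase} and \cite[Cor.~3]{schmuedgen:kMomentProblemForCompactSemialgebraicSets}. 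Attempting the proof from scratch is legitimate, but as written there is a genuine gap exactly at the point you flag as the ``main obstacle'', and the hypothesis $p \in \Unit + \genPO{\mathcal{G}}$ does not resolve it. The problem is the generator $\lambda^2\Unit - y_j^2$ of your preordering $T'$: after clearing denominators it becomes $\lambda^2 p^2 - x_j^2$, and you have no algebraic certificate that this element (or any product involving it) lies in $\genPO{\mathcal{G}}$. All you know is the \emph{pointwise} inequality $\abs{\dupr{\varphi}{x_j}} \le \lambda\dupr{\varphi}{p}$ on $\semialgebraicSet{\A}{\mathcal{G}}$, and converting a pointwise inequality into membership in $\genPO{\mathcal{G}}$ is precisely the kind of statement the theorem is supposed to establish -- using it as an ingredient is circular. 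The standard repair (and the actual content of Marshall's argument) is to first apply the Krivine--Stengle Positivstellensatz to obtain $t_1, t_2 \in \genPO{\mathcal{G}}$ and $k \in \NN$ with $t_1(\lambda^2p^2 - x_j^2) = (\lambda^2p^2 - x_j^2)^{2k} + t_2$, and to use these identities to show that the boundedness of the $y_j$ is already a \emph{consequence} of the localized image of $\genPO{\mathcal{G}}$ together with $u$, rather than an extra generator that must later be re-expressed; only then do the denominators reassemble into $p^{m_2}(a + \epsilon p^{m_1}) \in \genPO{\mathcal{G}}$. Without this step your $T'$ is strictly larger than what can be pulled back, and the final containment fails.

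Two smaller points. First, for the Archimedean Positivstellensatz you need $\tilde a + \epsilon\Unit > 0$ on \emph{all} of $\semialgebraicSet{\mathcal{B}'}{T'}$, not merely on the restrictions of $\semialgebraicSet{\A}{\mathcal{G}}$; you must handle the characters with $\psi(u) = 0$ (take $m_1 \ge \deg a + 1$ so that $u^{m_1}a = u\cdot(u^{m_1-1}a)$ with $u^{m_1-1}a \in \mathcal{B}'$, forcing $\psi(\tilde a) = 0$ there) and show that every character with $\psi(u) > 0$ descends to an element of $\semialgebraicSet{\A}{\mathcal{G}}$. Second, your easy direction and the existence of $p$ (both choices) are correct and match what the paper leaves implicit.
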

\begin{proof}
  As $\A$ is commutative and finitely generated, its Hermitian elements form a finitely generated real commutative unital algebra
  $\A_\Hermitian \cong \RR[x_1,\dots,x_\ell] / \mathcal{I}$, where $\mathcal{I}$ is an ideal of the polynomial algebra $\RR[x_1,\dots,x_\ell]$
  and finitely generated because $\RR[x_1,\dots,x_\ell]$ is Noetherian.
  Let $h_1,\dots,h_m \in \mathcal{I}$ be generators of $\mathcal{I}$ and consider the preordering
  $T \coloneqq \genPO{\{g'_1,\dots,g'_k,h_1,\dots,h_m,-h_1,\dots,-h_m\} }$ of $\RR[x_1,\dots,x_\ell]$,
  where $g'_1,\dots,g'_k \in \RR[x_1,\dots,x_\ell]$ denote any representatives of the elements $g_1,\dots,g_k$ of $\mathcal G$.
  Then \cite[Cor.~3.1]{marshall:extendingArchimedeanPositivstellensatzToTheNonCOmpactCase} applies and gives a
  characterization of those elements $a' \in \RR[x_1,\dots,x_\ell]$ which fulfil $a'(y) \ge 0$
  for all those $y\in \RR^\ell$ for which $g'_i(y) \ge 0$ and $h_j(y) = 0$ hold for all $i\in \{1,\dots,k\}$ and all $j\in \{1,\dots,m\}$.
  After projecting down onto $\A_\Hermitian$, one obtains the above statement. The special case of
  weak-$^*$-compact $\semialgebraicSet{\A}{\mathcal{G}}$ and $p\coloneqq \Unit$ has appeared earlier in \cite[Cor.~3]{schmuedgen:kMomentProblemForCompactSemialgebraicSets}.
\end{proof}
This Positivstellensatz especially applies to the quadratic module $\mathcal{R}_\mu$, which gives some
insight into what types of orderings can be obtained as a result of the reduction procedure:

\begin{corollary} \label{corollary:fanaPositivstellensatz}
  Write $\mathcal G \coloneqq \{\sum_{i=s}^n z_i \cc z_i\} \subseteq \Polynomials(\CC^{1+n})^{\lie u_1}_\Hermitian$
  and let $p\in \Unit + \genQM{\mathcal G}$ be an element for which there exists $\lambda \in {[0,\infty[}$ such that
  $\lambda p(w) \ge \abs{w_i}^2$ holds for all $w \in \Levelset_\mu$ and all $i \in \{0,\dots,n\}$.
  Given $f \in \Polynomials(\CC^{1+n})^{\lie u_1}_\Hermitian$, then $f\in \mathcal{R}_\mu$ if and only if
  there is $m_1\in \NN_0$ such that for all $\epsilon \in {]0,\infty[}$ there is $m_2\in \NN_0$
  for which
  $p^{m_2}(f+\epsilon p^{m_1}) \in (\genSId{\momentmap-\mu})_\Hermitian + \genQM{\mathcal{G}}$.
  Here $(\genSId{\momentmap-\mu})_\Hermitian \coloneqq \genSId{\momentmap-\mu} \cap \Polynomials(\CC^{1+n})^{\lie u_1}_\Hermitian$.
\end{corollary}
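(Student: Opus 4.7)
The plan is to apply the Positivstellensatz of Theorem~\ref{theorem:fanaPositivstellensatz} directly to the reduced algebra $\A_\red \coloneqq \Polynomials(\Mred)$, with the singleton $\mathcal{G}_\red \coloneqq \{[g]_\mu\}$ where $g \coloneqq \sum_{i=s}^n z_i \cc z_i$, and with the weight element $[p]_\mu$. The resulting algebraic equivalence is then lifted back to $\Polynomials(\CC^{1+n})^{\lie u_1}$ along the surjective reduction morphism $\reductionMapSign$, whose kernel equals $\genSId{\momentmap-\mu}$ by Theorem~\ref{theorem:polyreduction}.

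The hypotheses of Theorem~\ref{theorem:fanaPositivstellensatz} are verified as follows. The algebra $\A_\red$ is commutative and finitely generated as a $^*$\=/algebra by the Hermitian elements $[z_i \cc z_i]_\mu$ together with the Hermitian and anti-Hermitian parts $[z_i \cc z_j + z_j \cc z_i]_\mu$ and $[(z_i \cc z_j - z_j \cc z_i)/\I]_\mu$ for $0 \le i < j \le n$. Proposition~\ref{proposition:semialgebraic} identifies $\semialgebraicSet{\A_\red}{\mathcal G_\red}$ with $\set{\delta_{[w]}}{[w] \in \Mred}$, and $[p]_\mu \in \Unit + \genQM{\mathcal G_\red}$ follows by applying $\reductionMapSign$ to the decomposition $p \in \Unit + \genQM{\mathcal G}$. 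For the boundedness estimate, observe that $\dupr{\delta_{[w]}}{[z_i \cc z_j]_\mu} = w_i \cc w_j$ for any representative $w \in \Levelset_\mu$, so the elementary inequality $\abs{w_i \cc w_j} \le (\abs{w_i}^2 + \abs{w_j}^2)/2$ combined with the hypothesis $\lambda p(w) \ge \abs{w_i}^2$ yields $\abs{\dupr{\delta_{[w]}}{[z_i \cc z_j]_\mu}} \le \lambda \dupr{\delta_{[w]}}{[p]_\mu}$, and the same bound then holds for the chosen Hermitian and anti-Hermitian generators. Finally, since $\mathcal G_\red$ consists of a single Hermitian element, $\genPO{\mathcal G_\red} = \genQM{\mathcal G_\red}$: the even powers $[g]_\mu^{2m}$ are Hermitian squares, and the odd powers satisfy $[g]_\mu^{2m+1} = \big([g]_\mu^m\big)^* [g]_\mu \, [g]_\mu^m$, which has the form required in \eqref{eq:genQM}.

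Theorem~\ref{theorem:fanaPositivstellensatz} then asserts that $\dupr{\delta_{[w]}}{[f]_\mu} \ge 0$ for all $[w] \in \Mred$ is equivalent to the existence of $m_1$ such that for every $\epsilon > 0$ there is $m_2$ with $[p]_\mu^{m_2}([f]_\mu + \epsilon [p]_\mu^{m_1}) \in \genQM{\mathcal G_\red}$. By Lemma~\ref{lemma:vrPoly}, the left-hand condition coincides with $f \in \mathcal R_\mu$. To transfer the right-hand condition back to $\Polynomials(\CC^{1+n})^{\lie u_1}$, use the surjectivity of $\reductionMapSign$ (Theorem~\ref{theorem:polyreduction}) to lift a given representation $\sum_k [a_k]_\mu^* [s_k]_\mu [a_k]_\mu$ with $[s_k]_\mu \in \{[g]_\mu, [\Unit]_\mu\}$ to a sum $\sum_k a_k^* s_k a_k \in \genQM{\mathcal G}$ with $a_k \in \Polynomials(\CC^{1+n})^{\lie u_1}$ and $s_k \in \{g, \Unit\}$; the Hermitian difference $h \coloneqq p^{m_2}(f + \epsilon p^{m_1}) - \sum_k a_k^* s_k a_k$ then satisfies $[h]_\mu = 0$ and therefore lies in $(\genSId{\momentmap-\mu})_\Hermitian$. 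The converse direction follows immediately by applying $\reductionMapSign$. The main obstacle is not a single deep computation but rather the careful assembly of these ingredients: invoking the Positivstellensatz on the quotient, identifying the characters via Proposition~\ref{proposition:semialgebraic}, exploiting the singleton coincidence $\genPO = \genQM$, and transporting the conclusion through $\reductionMapSign$; the boundedness estimate itself is elementary.
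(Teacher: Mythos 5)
Your proposal is correct and follows essentially the same route as the paper's proof: apply Theorem~\ref{theorem:fanaPositivstellensatz} to $\Polynomials(\Mred)$ with the generators $[z_i\cc z_j+z_j\cc z_i]_\mu$ and $\I[z_i\cc z_j-z_j\cc z_i]_\mu$, identify the characters via Proposition~\ref{proposition:semialgebraic} and the pointwise positivity via Lemma~\ref{lemma:vrPoly}, use $\genPO{\mathcal G}=\genQM{\mathcal G}$ for the singleton, and transport the conclusion through $\reductionMapSign$ using $\ker\reductionMapSign=\genSId{\momentmap-\mu}$. The only cosmetic difference is that the paper records the estimate with constant $2\lambda$ for the Hermitian combinations rather than $\lambda$, which is immaterial since only the existence of some constant is required.
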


\begin{proof}
	Note first that $\genQM{\mathcal{ G}} = \genPO{\mathcal{ G}}$ because $\mathcal{ G}$ contains only one element. The finite subset
	$\bigcup_{i,j \in \{0, \dots,n\}} \{[z_i \cc z_j + z_j \cc z_i]_\mu, \I [ z_i \cc z_j - z_j \cc z_i ]_\mu \} \subseteq \Polynomials(\Mred)_\Hermitian$
	generates the $^*$\=/algebra $\Polynomials(\Mred)$.
	Moreover, the estimate
	\begin{equation*}
    2\lambda \dupr{\delta_{[w]}}{[p]_\mu}
    =
    2\lambda p(w)
    \ge
    \abs{w_i}^2 + \abs{w_j}^2
    \ge
    \abs{w_i \cc{w_j} + w_j \cc{w_i}}
    =
    \abs{\dupr{\delta_{[w]}}{[z_i \cc z_j + z_j \cc z_i]_\mu} }	
	\end{equation*}
	holds for all $i,j \in \{0, \dotsc, n \}$ and $[w] \in \Mred$ with representative $w \in \Levelset_\mu$,
	and similarly also $2\lambda \dupr{\delta_{[w]}}{[p]_\mu} \geq \abs{\dupr{\delta_{[w]}}{\I[z_i \cc z_j - z_j \cc z_i]_\mu} }$.
	Using Proposition~\ref{proposition:semialgebraic}, this means that Theorem~\ref{theorem:fanaPositivstellensatz}
	can be applied to $\Polynomials(\Mred)$, $[\mathcal G]_\mu$, and $[p]_\mu$:
	
	It follows from Lemma~\ref{lemma:vrPoly} that $f \in \mathcal R_\mu$ if and only if $[f]_\mu \in \Polynomials(\Mred)$ is pointwise positive. Using Proposition~\ref{proposition:semialgebraic} this is equivalent to $\dupr{\varphi}{[f]_\mu} \geq 0$ for all
	$\varphi \in \semialgebraicSet{\Polynomials(\Mred)}{[\mathcal G]_\mu}$.
	By Theorem~\ref{theorem:fanaPositivstellensatz} this is the case if and only if there exists $m_1 \in \NN_0$ such that for all $\epsilon \in {]0,\infty[}$ there is $m_2 \in \NN_0$ for which $[p^{m_2}(f+\epsilon p^{m_1})]_\mu \in \genPO{[\mathcal G]_\mu}$, or equivalently
	$p^{m_2}(f+\epsilon p^{m_1}) \in (\genSId{\momentmap-\mu})_\Hermitian + \genPO{\mathcal G}$,
	because $\ker [\argument]_\mu = \genSId{\momentmap-\mu}$ and $[\genPO{\mathcal{G}}]_\mu = \genPO{[\mathcal G]_\mu}$.
\end{proof}
In contrast to \eqref{eq:RmuPolyChar} of Lemma~\ref{lemma:vrPoly}, the above Corollary~\ref{corollary:fanaPositivstellensatz}
gives a purely algebraic characterization of the quadratic module $\mathcal{R}_\mu$. For the special case of $\CC\PP^n$ we even obtain:

\begin{corollary}
	For signature $s = 1+n$, i.e.~$\Mred^{(1+n)} \cong \CC\PP^n$, the identity
\begin{equation}
  \mathcal{R}_\mu^{(1+n)}
  =
  \set[\big]{
    f \in \Polynomials(\CC^{1+n})^{\lie u_1}_\Hermitian
  }{
    f+\epsilon \Unit \in \big(\genSId{\momentmap-\mu}\big)_\Hermitian + \big(\Polynomials(\CC^{1+n})^{\lie u_1}\big)^{++}_\Hermitian \textup{ for all }\epsilon \in {]0,\infty[}
  }
  \label{eq:fanaPositivstellensatzCPn}
\end{equation}
holds,
where $(\genSId{\momentmap-\mu})_\Hermitian \coloneqq \genSId{\momentmap-\mu} \cap \Polynomials(\CC^{1+n})^{\lie u_1}_\Hermitian$
and where $(\Polynomials(\CC^{1+n})^{\lie u_1})^{++}_\Hermitian$ denotes the algebraically positive elments of
$\Polynomials(\CC^{1+n})^{\lie u_1}$ like in \eqref{eq:algebraicOrder}, i.e.~the sums of Hermitian squares.
Moreover, every algebraic state $\omega$ on $\Polynomials(\CC\PP^n)$ is positive, hence a state.
\end{corollary}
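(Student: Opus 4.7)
The plan is to apply Corollary~\ref{corollary:fanaPositivstellensatz} in the special case $s = 1+n$, where the data simplify drastically. First I would observe that the generating set becomes $\mathcal{G} = \{\sum_{i=1+n}^{n} z_i \cc z_i\} = \{0\}$ (the sum is empty), so that $\genQM{\mathcal{G}} = (\Polynomials(\CC^{1+n})^{\lie u_1})^{++}_\Hermitian$ coincides with the cone of sums of Hermitian squares.

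Next I would verify that $p \coloneqq \Unit$ satisfies the hypothesis of Corollary~\ref{corollary:fanaPositivstellensatz}: $\Unit \in \Unit + \genQM{\mathcal{G}}$ is obvious, and for $w \in \Levelset_\mu$ the identity $\mu = \sum_{i=0}^{n} \abs{w_i}^2$ (since $\nu_i = 1$ for every $i$) gives $\abs{w_i}^2 \le \mu$ for each $i$, so $\lambda \coloneqq \mu$ works. With $p = \Unit$ the expression $p^{m_2}(f + \epsilon p^{m_1})$ collapses to $f + \epsilon \Unit$ for every choice of $m_1, m_2 \in \NN_0$, and the nested quantifier condition of Corollary~\ref{corollary:fanaPositivstellensatz} therefore reduces to the single statement that $f + \epsilon \Unit \in (\genSId{\momentmap-\mu})_\Hermitian + (\Polynomials(\CC^{1+n})^{\lie u_1})^{++}_\Hermitian$ for every $\epsilon > 0$; this is precisely \eqref{eq:fanaPositivstellensatzCPn}.

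For the moreover part I would use \eqref{eq:fanaPositivstellensatzCPn} to approximate every positive element of $\Polynomials(\CC\PP^n)$ by images of sums of Hermitian squares modulo $\genSId{\momentmap-\mu}$. Given an algebraic state $\omega$ on $\Polynomials(\CC\PP^n)$ and a Hermitian $g \in \Polynomials(\CC\PP^n)^+_\Hermitian$, Theorems~\ref{theorem:polyreduction} and \ref{theorem:vanishing:neu} yield a representative $f \in \mathcal{R}_\mu$ with $g = [f]_\mu$. Applying \eqref{eq:fanaPositivstellensatzCPn} to $f$ produces, for every $\epsilon > 0$, a decomposition $f + \epsilon \Unit = h_\epsilon + s_\epsilon$ with $h_\epsilon \in (\genSId{\momentmap-\mu})_\Hermitian \subseteq \ker [\argument]_\mu$ and $s_\epsilon$ a sum of Hermitian squares; pushing this through $[\argument]_\mu$ gives $g + \epsilon \Unit = [s_\epsilon]_\mu \in \Polynomials(\CC\PP^n)^{++}_\Hermitian$, so algebraic positivity of $\omega$ forces $\dupr{\omega}{g} + \epsilon \ge 0$, and letting $\epsilon \searrow 0$ finishes the argument.

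There is no substantial obstacle here: the real work is carried by the underlying Positivstellensatz in Corollary~\ref{corollary:fanaPositivstellensatz}, and the admissibility of the simple choice $p = \Unit$ is a manifestation of the compactness of $\Mred^{(1+n)} \cong \CC\PP^n$. For other signatures $s < 1+n$ the non-compactness of $\Mred^{(s)}$ prevents such a clean collapse and forces one to retain the denominator factors $p^{m_1}, p^{m_2}$, so neither the identity \eqref{eq:fanaPositivstellensatzCPn} nor the automatic positivity of algebraic states survives outside the compact case.
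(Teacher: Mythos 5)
Your proposal is correct and follows essentially the same route as the paper: the identity is obtained by specializing Corollary~\ref{corollary:fanaPositivstellensatz} to $s=1+n$ with $p=\Unit$ (you usefully spell out why $\mathcal{G}=\{0\}$ makes $\genQM{\mathcal{G}}$ the sums of Hermitian squares and why $\lambda=\mu$ verifies the hypothesis on $p$), and the positivity of algebraic states follows by pushing the decomposition $f+\epsilon\Unit = h_\epsilon + s_\epsilon$ through $[\argument]_\mu$ and letting $\epsilon \searrow 0$, exactly as in the paper.
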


\begin{proof}
	Equation \eqref{eq:fanaPositivstellensatzCPn} is just Corollary~\ref{corollary:fanaPositivstellensatz}
	for $s=1+n$ and $p\coloneqq \Unit$. Given any algebraic state $\omega$ on $\Polynomials(\CC\PP^n)$
	then this shows that $\dupr{\omega}{[f]_\mu} = \dupr{\omega}{{[f+\epsilon \Unit]_\mu}} - \epsilon \ge -\epsilon$
	for all $\epsilon \in {]0,\infty[}$ and all $f\in \mathcal{R}_\mu^{(1+n)}$ because $[\genSId{\momentmap-\mu}]_\mu = \{0\}$
	and $[(\Polynomials(\CC^{1+n})^{\lie u_1})^{++}_\Hermitian]_\mu \subseteq \Polynomials(\CC\PP^n)^{++}_\Hermitian$.
	So $\omega$ is positive with respect to the pointwise order on $\Polynomials(\CC\PP^n)$.
\end{proof}
However, for other signatures $s\in \{1,\dots,n\}$, it is necessary to add an additional generator like
$\sum_{i=s}^n z_i \cc z_i$ to the description of $\mathcal{R}_\mu^{(s)}$, and there do exist algebraic
states on $\Polynomials(\Mred^{(s)})$ which are not positive because they yield negativ results on $\sum_{i=s}^n [z_i\cc z_i]_\mu$,
e.g.~the functionals $\delta_{[w]}$ defined in \eqref{eq:evaluationFunctionals} with $[w] \in \MredExt^{(s)} \setminus \Mred^{(s)}$.
From a purely algebraic point of view, this might be rather unexpected.

These algebraic characterizations of the quadratic module $\mathcal{R}_\mu$, hence of the order on $\Polynomials(\Mred)$,
are especially interesting with respect to a possible generalization to the non-commutative
case described in \cite{schmitt.schoetz:PreprintWickRotationsInDQ}.
If one treats the reduction of non-formal star products from $\CC^{1+n}$ to $\Mred$ in the
context of representable Poisson $^*$\=/algebras, can one give similar characterizations of the corresponding quadratic module $\mathcal{R}_\mu$?
In the special case $s=1+n$, i.e.~for the deformation quantization of $\CC\PP^n$, such a characterization will be obtained in Part II of this article, \cite{schmitt.schoetz:SymmetryReductionOfStatesII}.

\bibliographystyle{chairx_copy}
\small

\end{onehalfspace}

\end{document}